\documentclass[letterpaper,11pt]{article}
\usepackage[top=1in, bottom=1in, left=1in, right=1in]{geometry}

\usepackage{microtype}

\usepackage{amsmath,amsthm,amssymb}
\usepackage{thm-restate}
\usepackage{mathtools}
\usepackage{xspace}
\usepackage{stmaryrd}
\usepackage{xcolor,graphicx}
\usepackage{libertine}
\usepackage[T1]{fontenc}

\usepackage{enumitem}
\setlist{nosep}

\theoremstyle{plain}
\newtheorem{theorem}{Theorem}

\newtheorem{lemma}[theorem]{Lemma}
\newtheorem{corollary}[theorem]{Corollary}

\newtheorem{proposition}[theorem]{Proposition}
\theoremstyle{definition}

\theoremstyle{remark}
\newtheorem{remark}[theorem]{Remark}

\usepackage[hidelinks]{hyperref}

\definecolor{ablue}{rgb}{0.3,0.4,0.8}
\definecolor{ared}{rgb}{0.95,0.4,0.4}
\definecolor{agreen}{rgb}{0,0.5,0.25}
\definecolor{ayellow}{rgb}{0.95,0.85,0.3}

\DeclarePairedDelimiter{\civ}{[}{]}

\newcommand{\ceil}[1]{\left\lceil #1 \right\rceil}
\newcommand{\floor}[1]{\left\lfloor #1 \right\rfloor}

\newcommand{\Reals}{\mathbb{R}}
\newcommand{\Nats}{\mathbb{N}}
\newcommand{\Ints}{\mathbb{Z}}
\newcommand{\Hyp}{\mathbb{H}}

\newcommand{\ba}{\mathbf{a}}

\newcommand{\cS}{\mathcal{S}}

\newcommand{\cP}{\mathcal{P}}

\newcommand{\cL}{\mathcal{L}}
\newcommand{\cB}{\mathcal{B}}

\newcommand{\cT}{\mathcal{T}}
\newcommand{\cM}{\mathcal{M}}
\newcommand{\cI}{\mathcal{I}}
\newcommand{\cH}{\mathcal{H}}

\newcommand{\sig}{\sigma}
\newcommand{\eps}{\varepsilon}

\newcommand{\E}{\mathbf{E}}

\newcommand\eqdef{\mathrel{\overset{\makebox[0pt]{\mbox{\normalfont\tiny\sffamily def}}}{=}}}

\DeclareMathOperator{\diam}{diam}
\DeclareMathOperator{\dist}{dist}

\newcommand{\myin}{\mathrm{in}}
\newcommand{\myout}{\mathrm{out}}
\newcommand{\bags}{\mathrm{bags}}
\newcommand{\tw}{\mathrm{tw}}

\DeclareMathOperator{\len}{length}

\newcommand{\rhoin}{\rho_\myin}
\newcommand{\rhoout}{\rho_\myout}

\newcommand{\UBG}{\mathrm{UBG}}
\newcommand{\NUBG}{\mathrm{NUBG}}
\newcommand{\SNUBG}{\mathrm{SNUBG}}
\newcommand{\Nei}{\mathrm{N}}
\newcommand{\NbG}{\mathrm{NbG}}
\newcommand{\mypath}{\mathrm{Path}}

\newcommand{\bd}{\partial}

\DeclareMathOperator{\Vol}{Vol}
\DeclareMathOperator{\poly}{poly}

\renewcommand{\leq}{\leqslant}
\renewcommand{\geq}{\geqslant}

\newcommand{\etal}{\emph{et~al.}}
\newcommand{\NP}{\mbox{\ensuremath{\mathsf{NP}}}\xspace}

\newcommand{\claim}[2]
{\vspace{2\parskip}\noindent {\emph{Claim.} }{#1} \\[\parskip]
{\emph{Proof of claim.}} #2 \hfill {\footnotesize $\Box$}}

\newcommand{\IS}{\textsc{Independent Set}\xspace}
\newcommand{\GT}{\textsc{Grid Tiling}\xspace}
\newcommand{\GTn}{\textsc{Grid Tiling}}
\newcommand{\GTleq}{\textsc{Grid Tiling with $\leq$}\xspace}

\newcommand{\niceSAT}{($3,3$)-\textsc{SAT}\xspace}
\newcommand{\niceCNF}{($3,3$)-CNF\xspace}

\newcommand{\obemail}{\raisebox{-0.21em}{\protect\includegraphics[height=0.8em]{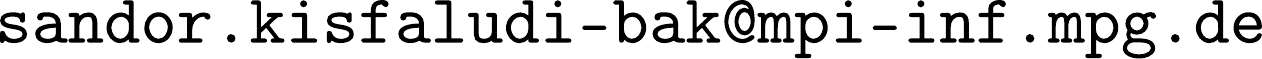}}}

\DeclarePairedDelimiter{\norm}{\lVert}{\rVert}


\makeatletter
\DeclareRobustCommand{\bfseries}{%
  \not@math@alphabet\bfseries\mathbf
  \fontseries\bfdefault\selectfont
  \boldmath
}
\makeatother


\bibliographystyle{plainurl}

\title{Hyperbolic intersection graphs and (quasi)-polynomial time}
\date{}

\author{S\'andor Kisfaludi-Bak\thanks{Max Planck Institut f\"ur Informatik, Saarbr\"ucken, Germany; email: \obemail{}. This work was supported by the Netherlands Organization for Scientific Research NWO under project no. 024.002.003.}}

\begin{document}

\begin{titlepage}
\maketitle
\thispagestyle{empty}
\begin{abstract}
We study unit ball graphs (and, more generally, so-called noisy  uniform ball graphs) in $d$-dimensional hyperbolic space, which we denote by $\Hyp^d$. Using a new separator theorem, we show that unit ball graphs in $\Hyp^d$ enjoy similar properties as their Euclidean counterparts, but in one dimension lower: many standard graph problems, such as \IS, \textsc{Dominating Set, Steiner Tree}, and \textsc{Hamiltonian Cycle} can be solved  in $2^{O(n^{1-1/(d-1)})}$ time for any fixed $d\geq 3$, while the same problems need $2^{O(n^{1-1/d})}$ time in $\Reals^d$. We also show that these algorithms in $\Hyp^d$ are optimal up to constant factors in the exponent under ETH.

This drop in dimension has the largest impact in $\Hyp^2$, where we introduce a new technique to bound the treewidth of noisy uniform disk graphs. The bounds yield quasi-polynomial ($n^{O(\log n)}$) algorithms for all of the studied problems, while in the case of \textsc{Hamiltonian Cycle} and \textsc{$3$-Coloring} we even get polynomial time algorithms. Furthermore, if the underlying noisy disks in $\Hyp^2$ have constant maximum degree, then all studied problems can be solved in polynomial time. This contrasts with the fact that these problems require $2^{\Omega(\sqrt{n})}$ time under ETH in constant maximum degree Euclidean unit disk graphs.

Finally, we complement our quasi-polynomial algorithm for \IS in noisy uniform disk graphs with a matching $n^{\Omega(\log n)}$ lower bound under ETH. This shows that the hyperbolic plane is a potential source of \NP-intermediate problems.
\end{abstract}
\end{titlepage}


\section{Introduction}

Hyperbolic space has seen increasing interest in recent years from various communities in computer science due to its unique metric properties. It has been connected to several topics, for example random networks~\cite{FriedrichK18,BlasiusFK16,GugelmannPP12}, routing and load balancing~\cite{Kleinberg07,ShavittT08}, metric embeddings~\cite{Sarkar12,VerbeekS16}, and visualization~\cite{MunznerB95}. The algorithmic properties of hyperbolic space have been mostly studied through Gromov's hyperbolicty, which is a convenient combinatorial description for negatively curved spaces~\cite{KrauthgamerL06,ChepoiE07,FournierIV15}.


In this paper, we study treewidth bounds for intersection graphs of unit-ball-like objects in hyperbolic space. These intersection graphs are a natural choice to capture some important properties of the underlying metric. The most well-studied geometric intersection graphs are unit disk graphs in $\Reals^2$. From the perspective of treewidth and exact algorithms, unit disk graphs have some intriguing properties: they are potentially dense, (may have treewidth $\Omega(n)$), but they still exhibit the ``square root phenomenon'' for several problems just as planar and minor-free graphs do; so for example one can solve \IS or \textsc{$3$-coloring} in these classes in $2^{O(\sqrt{n})}$ time~\cite{Marx13}, while these problems would require $2^{\Theta(n)}$ time in general graphs unless the Exponential Time Hypothesis (ETH)~\cite{ImpagliazzoP01} fails. In $\Reals^d$, the best \IS running time for unit ball graphs is $2^{\Theta(n^{1-1/d})}$~\cite{MarxS14,frameworkpaper}.
Note that $d$-dimensional Euclidean space has bounded doubling dimension, or in other words, Euclidean space has polynomial growth: balls of radius $r$ have volume $\poly(r)$. It turns out that polynomial growth by itself can yield subexponential algorithms, for example for the \textsc{Steiner Tree} problem~\cite{DBLP:conf/esa/MarxP17}. One of the key distinguishing features of hyperbolic space is that it has exponential growth, which requires a new approach.

To our knowledge, the only paper studying treewidth of graphs in hyperbolic space is the work by Bl{\"{a}}sius, Friedrich, and Krohmer~\cite{BlasiusFK16}, who investigate random hyperbolic disk graphs, where the disks have equal radii, and the disk centers are chosen from some distribution. They prove various treewidth bounds depending on the parameter of the distribution. In a new manuscript, Bl{\"{a}}sius~\etal~show that the problem of \textsc{Vertex Cover} can be solved in polynomial time on hyperbolic random graphs with high probability~\cite{DBLP:journals/corr/abs-1904-12503}.
Our goal is to get worst case bounds on the treewidth of intersection graphs.
Naturally, getting a sublinear bound on separators or
treewidth itself is not possible, since cliques are unit ball graphs.
Therefore, we use the partition and weighting scheme developed by De
Berg~\etal~\cite{frameworkpaper}. Given an intersection graph $G=(V,E)$, one
defines a partition $\cP$ of the vertex set $V$; initially, it is useful to
think of a partition into cliques using a tiling of the underlying space where
each tile has a small diameter. Then the partition classes are defined for
each non-empty tile as the set of balls whose center falls inside the tile.
This gives a clique partition $\cP$ of $G$. Next, each clique $C\in \cP$ receives a
weight of $\log(|C|+1)$. It turns out that this weighting is useful for many
problems, and it motivates us to define weighted separators of $G$ with
respect to $\cP$ as a separator consisting of classes of $\cP$, whose weight
is defined as the sum of the weights of the constituent partition classes. It
is in this sense that we can find sublinear separators.

Such weightings can also be used along with treewidth techniques. Let $G$ be a graph and let $\cP$ be a partition of $V(G)$. Let $G_\cP$ be the graph obtained by contracting all edges of $G$ that go within a partition class of $\cP$, i.e., $V(G_\cP)$ can be identified with $\cP$. For each class $C\in \cP$, we assign the weight $\log(|C|+1)$. We define the \emph{$\cP$-flattened treewidth} of $G$ as the 
weighted treewidth of $G_\cP$ under this weighting. (We give more thorough definitions in Section~\ref{subsec:twbasic}.) For example, unit ball graphs in $\Reals^d$ have a clique partition $\cP$ such that their $\cP$-flattened treewidth is $O(n^{1-1/d})$~\cite{frameworkpaper}.

We intend to show that unit disk graphs in hyperbolic space are even more intriguing than the ones in Euclidean space from the perspective of computational complexity. Let $\Hyp^d$ denote $d$-dimensional hyperbolic space of sectional curvature $-1$. In hyperbolic space, the radius of the disks or balls matters. For example one gets different graph classes for radius 1 and radius 2 disks in $\Hyp^2$. Hence, we parameterize the graph class of equal-sized balls by the radius $\rho$ of these balls, and denote the class by $\UBG_{\Hyp^d}(\rho)$. 

There have been several papers studying \IS, \textsc{Dominating Set, Hamiltonian Cycle, $q$-Coloring}, etc. in unit ball graphs in Euclidean space~\cite{ito-hamiltonian,ciac-hamiltonian,MarxS14,BiroBMMR17,FominLPSZ17,frameworkpaper}, all concluding that $2^{O(n^{1-1/d})}$ is the optimal running time for these problems in $\Reals^d$. In this paper, we show that a similar phenomenon occurs in $\Hyp^d$, but shifted by one dimension: the problems can be solved in $2^{O(\sqrt{n})}$ time in $\Hyp^3$, just as in $\Reals^2$; in general, for constant $d\ge 3$, the optimal running time is $2^{\Theta(n^{1-1/(d-1)})}$ for these problems in $\Hyp^d$ under ETH. This is perhaps less surprising than it seems at first sight because of the fact that one can embed $\Reals^{d-1}$ into $\Hyp^d$ isometrically\cite{benedetti2012lectures}. Indeed, we use this connection to establish lower bounds. On the other hand, this embedding does not facilitate getting algorithms in $\Hyp^d$, and it is also not useful for establishing algorithms and lower bounds in $\Hyp^2$. In $\Hyp^d$, we give a new separator theorem, which can be regarded as the hyperbolic analogue of various separator theorems known for (similarly sized) fat objects in Euclidean space~\cite{MillerTTV97,SmithW98,AlberF04,frameworkpaper}.\footnote{Several Euclidean separation techniques fail in $\Hyp^d$ since using compact hypersurfaces as separator objects (i.e., a sphere or the boundary of a hypercube) cannot produce small balanced separators due to the linear isoperimetric inequality.}

In $\Hyp^2$, we use a novel approach to gain tight treewidth bounds: our
argument builds on the isoperimetric
inequality~\cite{teufel1991generalization} and on the treewidth bound of
$k$-outerplanar graphs~\cite{Bodlaender98}. As a consequence, we see large
drop in complexity compared to $\Reals^2$: several problems that are
\NP-complete in unit disk graphs in $\Reals^2$ can be solved in
quasi-polynomial ($n^{O(\log n)}$) time for uniform disk graphs of constant
radius in $\Hyp^2$, or even in polynomial time if the graphs additionally have
constant maximum degree. Moreover, we show that the
quasi-polynomial running time in the general case is likely unavoidable by giving a
quasi-polynomial lower bound for \IS; this is yet another difference from the
Euclidean setting where the problem has the same $2^{\Theta(\sqrt{n})}$
running time both generally and in case of constant maximum degree. This is perhaps the most striking consequence of our
treewidth bounds. We also identify two problems, namely \textsc{$q$-Coloring}
(for constant $q$) and \textsc{Hamiltonian Cycle} that admit polynomial time
algorithms even in case of unbounded degree.  
Quasi-polynomial exact algorithms with matching lower bounds are rare,
and so far we know of only a few natural problems in this class. Examples
include \textsc{VC-dimension} and \textsc{Tournament Dominating
Set}~\cite{DBLP:journals/tcs/MegiddoV88,%
DBLP:journals/iandc/LinialMR91,DBLP:journals/jcss/PapadimitriouY96}, and a
weighted coloring problem on trees\cite{DBLP:journals/siamdm/AraujoNP14}. It
seems that problems on intersection graphs in the hyperbolic plane are a
natural source of further problems in this class.



Before we can state our contributions precisely, we first define the graph classes that
we consider more formally. Let $\cM$ be a
metric space with distance function $\dist_\cM:\cM\times\cM \rightarrow \Reals_{\geq0}$, and let $\rho>0$ and $\nu\geq 1$ be real numbers. A  graph
$G=(V,E)$ is a radius $\rho$ uniform ball graph with noise $\nu$ (or a
\emph{$(\rho,\nu)$-noisy uniform ball graph} for short) if there is a
function $\eta:V\rightarrow \cM$, such that for all pairs $v,w\in V$ we have:
\begin{itemize}
\item $\dist_\cM(\eta(v),\eta(w)) < 2\rho \Rightarrow (v,w)\in E$
\item $\dist_\cM(\eta(v),\eta(w)) > 2\rho\nu \Rightarrow (v,w)\not\in E$.
\end{itemize}
In particular, pairs of vertices $v,w$ where $\dist_\cM(\eta(v),\eta(w)) \in
[2\rho,2\rho\nu]$ can either be connected or disconnected. We denote the class
of graphs defined this way by $\NUBG_\cM(\rho,\nu)$.

This is a generalization of unit ball graphs: clearly $\UBG_\cM(1) \subseteq
\NUBG_\cM(1,1)$. Although the class $\NUBG_{\Hyp^d}(\rho,\nu)$ seems like a slight
generalization of $\UBG_{\Hyp^d}(\rho)$, it corresponds to a much larger graph class;
this is shown in Theorem~\ref{thm:noisybig} in Appendix~\ref{sec:app_nubg}.

The function $\eta$ is called an \emph{embedding} of $G$. Note that $\eta$ is
not necessarily injective, so its image is a multiset. An intersection graph of similarly sized fat objects with inscribed and
circumscribed ball radii $\rhoin$ and $\rhoout$ is a noisy unit ball graph
with $\rho=\rhoin$ and $\nu=\rhoout/\rhoin$. Therefore noisy uniform ball
graphs are a direct generalization of intersection graphs of similarly sized
fat objects as seen in~\cite{frameworkpaper}.

For all of our results, we require that $\nu$ is a fixed constant. It is necessary to bound $\nu$ in some way, since for $\nu=n$, or in case of hyperbolic space even for
$\nu=c\log n$ the class $\NUBG_{\Hyp^d}(\rho,\nu)$ includes all graphs on $n$
vertices. Changing the parameter $\rho$ also has important effects.
\footnote{Alternatively, one could also think of fixing $\rho=1$ and changing the
sectional curvature of the underlying space.} For very small values of $\rho$,
the resulting graph class is very similar to the set of Euclidean
unit disk graphs. In order to make sure that all of our techniques work, we are forced to
fix $\rho$ to be a positive absolute constant.

\begin{remark}
It would be worthwhile to investigate the case where $\rho$ may depend on $n$. In our model, for any fixed $\nu>1$ there is a radius $\rho=
\rho(n,\nu)$ such that $\NUBG(\rho,\nu)$ contains all graphs on $n$ vertices.
So to get interesting treewidth bounds, one should first try to resolve
the case $\nu=1$. The resulting graph classes would also include hyperbolic random graphs. We leave this for future work.
\end{remark}

We are also interested in noisy uniform ball graphs that are shallow, as defined next. A point set $P$ is $(\rho,k)$-\emph{shallow} if any ball of radius $\rho$ can contain at most $k$ points from $P$. We denote by $\SNUBG_\cM(\rho,\nu,k)$ the set of graphs in $\NUBG_\cM(\rho, \nu)$ that have a $(\rho,k)$-shallow embedding. For any fixed $\rho, \nu$ and $k$, the shallow class $\SNUBG_\cM(\rho,\nu,k)$ is a very small subset of $\NUBG_\cM(\rho, \nu)$ by part (ii) and (iii) of Theorem~\ref{thm:noisybig}. 

\vspace{-0.3cm}
\paragraph*{Our contribution.} 
Our first contribution is a separator theorem that is inspired by the Euclidean weighted separator of Fu~\cite{Fu2011}. Our separator for a graph $G$ is based on a partition $\cP$ of $V(G)$ where each class $C\in \cP$ induces a clique in $G$. The separator itself will be a set $S\subset \cP$.  The weight of a separator $S$ is defined as $\gamma(S)\eqdef\sum_{C\in S} \log(|C|+1)$. We say that $S$ is \emph{$\beta$-balanced} if all connected components induced by  $V(G)\setminus (\bigcup_{C\in S} C)$ have at most $\beta n$ vertices.

\begin{theorem}\label{thm:sep}
Let $d\geq 2$, $\rho>0$ and $\nu\geq 1$ be constants, and let $G\in \NUBG_{\Hyp^d}(\rho,\nu)$.
Then $G$ has a clique partition and a corresponding $\frac{d}{d+1}$-balanced
clique-weighted separator $S$ such that
\begin{enumerate}
\item[(i)] if $d=2$, then $S$ has size $|S|=O(\log n)$ and weight $\gamma(S)=O(\log^2 n)$
\item[(ii)] if $d\geq 3$, then $S$ has size $|S|=O(n^{1-1/(d-1)})$ and weight $\gamma(S)=O(n^{1-1/(d-1)})$.
\end{enumerate}
Given $G$ and an embedding into $\Hyp^d$, the clique partition and the separator can be found with a Las Vegas algorithm in $\poly(n)$ expected time. 
\end{theorem}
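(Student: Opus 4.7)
The plan is to build the clique partition $\cP$ from a tiling of $\Hyp^d$ into cells of diameter strictly less than $2\rho$ and volume bounded between positive constants---for example, the Voronoi diagram of a maximal $(\rho/2)$-separated net, which can be constructed in polynomial time. For each non-empty cell $T$ the vertices whose embedding lies in $T$ form a clique of $G$, since the noisy-UBG definition guarantees an edge whenever the embedding distance is strictly below $2\rho$. A ball of radius $r$ in $\Hyp^d$ has volume $\Theta(e^{(d-1)r})$, hence contains $O(e^{(d-1)r})$ cells; equivalently, when the $n$ non-empty cells are sorted by distance from any fixed point $p$ as $r_1 \leq r_2 \leq \ldots \leq r_n$, one has $\sinh(r_j) = \Omega(j^{1/(d-1)})$ once $j$ exceeds an absolute constant.

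To construct the separator, I would find a centerpoint $p$ of the embedded multiset $\eta(V)$ via the Klein projective model of $\Hyp^d$, in which geodesic hyperplanes are realised as intersections of Euclidean hyperplanes with the open unit ball. The Euclidean centerpoint theorem applied in this model produces a point $p$ lying in the convex hull of $\eta(V)$ (hence still in the ball) such that every geodesic open halfspace avoiding $p$ contains at most $\frac{d}{d+1}n$ vertices. I would then sample a geodesic hyperplane $H$ uniformly at random through $p$ (equivalently, pick a uniformly random unit normal direction in $T_p\Hyp^d$), and let $S \subseteq \cP$ consist of the cliques whose cell lies within distance $2\rho\nu$ of $H$. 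Any two cliques outside $S$ lying on opposite sides of $H$ have cells at mutual distance strictly greater than $2\rho\nu$, so no edge of the $(\rho,\nu)$-noisy ball graph can connect them; this makes $S$ a valid separator, and the centerpoint property automatically forces the $\frac{d}{d+1}$-balance.

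The key probability estimate is that a uniformly random hyperplane through $p$ lies within distance $\delta$ of a point at distance $r$ from $p$ with probability $O(\delta/\sinh(r))$; this follows from the hyperbolic right-triangle identity $\sinh(\dist(q,H)) = \sinh(r)\,|\langle v,n\rangle|$, where $n$ is the normal direction of $H$ and $v$ is the unit direction from $p$ to $q$, together with the fact that a uniformly random $n \in S^{d-1}$ satisfies $|\langle v,n\rangle| \leq \eps$ with probability $O(\eps)$. Applied with $\delta = O(\rho\nu)$ to every cell, this gives $E[|S|] \leq \sum_{j=1}^n O(1/\sinh(r_j))$, which the packing bound evaluates to $O(\log n)$ for $d=2$ and to $O(n^{1-1/(d-1)})$ for $d \geq 3$. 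For the weight bound in dimension $d=2$ the trivial inequality $\gamma(S) \leq |S|\log(n+1)$ immediately yields $E[\gamma(S)] = O(\log^2 n)$. For $d \geq 3$ that trivial bound would leave an unwanted $\log n$ factor, so I would bucket cliques by size: bucket $k$ collects the tiles $T$ with $|C_T| \in [2^{k-1},2^k)$, and hence has at most $N_k \leq 2n/2^k$ tiles of weight $O(k)$, while the packing argument restricted to the bucket gives $\sum_{T \in \text{bucket } k} 1/\sinh(r_T) = O(N_k^{1-1/(d-1)})$. Summing,
\[
E[\gamma(S)] \;\leq\; \sum_{k\geq 1} O(k) \cdot O\!\bigl((2n/2^k)^{1-1/(d-1)}\bigr) \;=\; O(n^{1-1/(d-1)}) \cdot \sum_{k\geq 1} k\, 2^{-k(1-1/(d-1))} \;=\; O(n^{1-1/(d-1)}),
\]
since the geometric series converges for $d\geq 3$.

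Two technicalities remain. Cells within some absolute constant distance of $p$ have a vacuous probability bound, but the packing bound gives only $O(1)$ such cells, so I would add them to $S$ deterministically, costing $O(1)$ in size and $O(\log n)$ in weight---both absorbed in the stated bounds. To obtain the Las Vegas polynomial running time I would apply Markov's inequality: after $O(1)$ independent samples of $H$, with constant probability both $|S|$ and $\gamma(S)$ are within constant factors of their expectations (balance is deterministic given the choice of $p$), and each sample is polynomial-time to produce since the Klein-model centerpoint and the cells crossed by $H$ are all polynomial-time computable. I expect the main obstacle to be the weight bound for $d\geq 3$: the bucketing in clique size and the $1/\sinh$ packing have to align precisely so that the geometric series over bucket indices converges to a constant, rather than leaving a residual $\log n$ factor.
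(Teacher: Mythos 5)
Your proposal is correct and follows the same overall architecture as the paper's proof: a clique partition from constant-diameter cells, a hyperbolic centerpoint computed via the Klein model (exactly the paper's Lemma~\ref{lem:balance}), a uniformly random geodesic hyperplane through it, removal of the cliques whose cells meet a constant-width neighborhood of that hyperplane (which gives both separation and the $\frac{d}{d+1}$-balance), and Markov plus resampling for the Las Vegas guarantee. The two places where you genuinely diverge are the technical estimates. First, for the hitting probability you use the right-triangle identity $\sinh(\dist(q,H))=\sinh(r)\,|\langle v,n\rangle|$ together with the $O(\eps)$ band measure on $S^{d-1}$, whereas the paper's Lemma~\ref{lem:intersectingtile} argues in the Poincar\'e ball model via a tangent cone and the volume of an $\eps$-neighborhood of a great subsphere; the two give the same $O(e^{-r})$ bound, and your derivation is arguably cleaner. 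Second, and more substantially, for the expected weight when $d\geq 3$ the paper's Lemma~\ref{lem:weight} uses a rearrangement argument (swap point masses so that heavier tiles are closer to the centerpoint, then equalize masses within distance layers $\cL_j$) and evaluates the resulting series using the bound $\log(1+x)\leq 1+x^{\frac{d-2}{d-1}-\eps}$, while you bucket cliques dyadically by size, use $N_k\leq 2n/2^k$ together with a per-bucket packing bound $\sum_{T\in\text{bucket }k}1/\sinh(r_T)=O\bigl(N_k^{1-1/(d-1)}\bigr)$, and sum the convergent series $\sum_k k\,2^{-k(1-1/(d-1))}$. This bucketing step is sound (within a bucket the $i$-th closest cell has global distance rank at least $i$, so the global packing bound applies), and it sidesteps the paper's ``without loss of generality'' rearrangement and the $\eps$-trick entirely; conversely, the paper's layer formulation is what directly yields its $d=2$ size bound via $O(J)=O(\log n)$ layers, which you recover instead from the harmonic sum $\sum_j 1/j$. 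Your use of a Voronoi/net partition in place of the paper's explicit tiling of Lemma~\ref{lem:tilings}(ii) is also fine for this theorem, since only the diameter and volume bounds of the cells are used here, and your handling of the $O(1)$ cells near the centerpoint and of the union bound in the Markov step is adequate.
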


\begin{table}[t]
\centering
  \begin{tabular}{l | l l l l}
  & $\SNUBG_{\Hyp^2}$ & $\NUBG_{\Hyp^2}$ & $\NUBG_{\Hyp^d},\, d\geq 3$\\
  \hline
  \IS \& others & $\poly(n)$ & $n^{O(\log n)}$ & $2^{O(n^{1-1/(d-1)})}$ \\
  \textsc{$q$-Coloring}, $q=O(1)$ & $\poly(n)$ & $\poly(n)$ & $2^{O(n^{1-1/(d-1)})}$ \\
  \textsc{Hamiltonian Cycle} & $\poly(n)$ & $\poly(n)$ {\scriptsize(emb)} & $2^{O(n^{1-1/(d-1)})}$ {\scriptsize(emb)}\\
  \hline
  \end{tabular}
\caption{Summary of algorithmic consequences. The first row contains the running time for the following problems: \IS, \textsc{Dominating Set, Vertex Cover, Feedback Vertex Set, Connected Dominating Set, Connected Vertex Cover, Connected Feedback Vertex Set}. Algorithms with the note ``{\scriptsize(emb)}'' require an embedding as part of the input.}
\label{tab:results}
\end{table}

In case of $d\geq 3$,  we can apply a standard way of turning balanced
separators into a bound on treewidth, yielding tight treewidth bounds. On the other hand, for $d=2$
the size of the separator is logarithmic, and one gets an extra
logarithmic factor in the treewidth bound. This is significant because the
treewidth typically shows up in the exponent of the running time, which means
that the extra logarithmic factor can make the difference between polynomial
and quasi-polynomial time. Our second contribution shows that the extra logarithmic factor in the
treewidth can be avoided. Note that part (i) of the theorem is about shallow noisy uniform disk graphs, while part (ii) is about general noisy
uniform disk graphs.

\begin{theorem}\label{thm:treewidth}
Let $\rho>0$, $\nu\geq 1$ and $k\in \Nats_+$ be constants.
\begin{enumerate}
\item[(i)] If $G\in \SNUBG_{\Hyp^2}(\rho,\nu,k)$, then the
treewidth of $G$ is $O(\log n)$. Given $G$, a tree decomposition of width $O(\log n)$ can be found in polynomial time.
\item[(ii)] If $G\in \NUBG_{\Hyp^2}(\rho,\nu)$ , then there exists a
clique partition $\cP$ such that the $\cP$-flattened treewidth of $\,G$ is
$O(\log^2 n)$. Given $G$ and an embedding into $\Hyp^2$, a clique partition $\cP$ and a corresponding decomposition of weight $O(\log^2 n)$ can be found in polynomial time.
\end{enumerate}
\end{theorem}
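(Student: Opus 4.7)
The plan is to reduce both parts to Bodlaender's bound that every $k$-outerplanar graph has treewidth $O(k)$~\cite{Bodlaender98}. From the embedding of $G$ in $\Hyp^2$ I will construct a planar contracted graph whose outerplanarity depth is controlled via the hyperbolic isoperimetric inequality~\cite{teufel1991generalization}; the two parts then differ only in how the contraction is unpacked.

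Fix an embedding $\eta : V \to \Hyp^2$. Choose a regular hyperbolic tiling with convex cells of diameter $D > 2\rho(\nu+1)$ and area $\Theta(1)$, producing a coarse partition $\cS$ of $V$. Since $G$-edges have length at most $2\rho\nu < D$, every such edge stays inside a cell or crosses only to a tiling-adjacent cell, so drawing $G_\cS$ with classes at cell centroids and edges as geodesic arcs gives a planar embedding of $G_\cS$ in $\Hyp^2$ (and hence in the plane). Inside each cell of $\cS$ take a subordinate tiling with sub-cells of diameter below $2\rho$; this refines $\cS$ into a clique partition $\cP$ with only $O(1)$ sub-cells per $\cS$-cell.

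Let $N = |\cS| \leq n$. The occupied cells of $\cS$ cover a region $U$ of area $\Theta(N)$. The hyperbolic isoperimetric inequality gives $L \geq \sqrt{4\pi A + A^2} \geq A$ for any planar region of area $A$ and boundary length $L$; applied to each connected component of $U$ and summed, the total boundary of $U$ has length $\Omega(N)$ (with small components charged via $L = \Omega(\sqrt A)$, which is enough for a constant-factor lower bound against the total area). Each cell contributes $O(1)$ to the boundary length, so $\Omega(N)$ cells of $\cS$ lie on the outer face of the planar embedding of $G_\cS$. Peeling them leaves a planar subgraph on at most $(1 - c)N$ cells for some constant $c > 0$, and after $O(\log n)$ iterations $G_\cS$ is empty; so $G_\cS$ is $O(\log n)$-outerplanar and $\tw(G_\cS) = O(\log n)$ by Bodlaender, with a decomposition computable in polynomial time. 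Since each $\cS$-cell contains only $O(1)$ $\cP$-cells, $G_\cP$ is an $O(1)$-blow-up of $G_\cS$, and hence $\tw(G_\cP) = O(\log n)$.

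For part (i), shallowness yields $|C| = O(k) = O(1)$ for every clique $C \in \cP$, so $G$ is a further $O(1)$-blow-up of $G_\cP$ and $\tw(G) = O(\log n)$. For part (ii), each clique $C$ carries weight $\log(|C|+1) \leq \log(n+1)$ and each bag of $G_\cP$ has size $O(\log n)$, so its total weight is at most $O(\log^2 n)$, as required. The main obstacle I anticipate is the peeling step when the residual cell union fragments into many small components after a few iterations, where the linear estimate $L \geq A$ degrades to $L = \Omega(\sqrt A)$; I will handle this by tracking the total area of small components separately and showing their contribution to the boundary-cell count still matches a constant fraction of the remaining cells. A secondary subtlety is the polynomial-time algorithm in part (i), where no embedding is given as input: because shallowness forces $G$ to have bounded degree, an embedding can be reconstructed from the abstract graph using the local combinatorics, or alternatively the existential bound can be realized via a polynomial-time constant-factor approximation of treewidth.
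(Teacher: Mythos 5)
Your central step fails in the hyperbolic setting. You claim that choosing cells of diameter $D>2\rho(\nu+1)$ forces every edge of $G$ to stay inside a cell or cross only to a tiling-adjacent cell, so that $G_\cS$ is planar. In $\Hyp^2$ this is false for general constants $\rho,\nu$: unlike in Euclidean space, the distance between \emph{non}-neighboring tiles of a regular hyperbolic tiling does not grow with the tile diameter. In the tilings of Lemma~\ref{lem:tilings}(i) this gap is $2\cosh^{-1}\bigl(\cos(\pi/2^{\delta+2})/\sin(\pi/3)\bigr)=\Theta(1)$, an absolute constant independent of $\delta$, and the same phenomenon occurs for any nice tiling. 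Hence as soon as $2\rho\nu$ exceeds that constant, an edge of $G$ can join cells that are several hops apart in the tile-adjacency graph; $G_\cS$ is then only a subgraph of a bounded \emph{power} of the planar adjacency graph, your centroid-plus-geodesic drawing is not a plane embedding, and the outerplanarity/peeling argument for $G_\cS$ has nothing to stand on. This is precisely where the paper inserts the machinery you omit: it proves $O(\log n)$-outerplanarity only for the genuinely planar neighborhood graph of the tiles (Lemma~\ref{lem:layerpeel}), and then recovers $G$ (resp.\ $G_\cP$) inside a constant power of that graph followed by a constant blowup, using that powers of bounded-degree graphs and blowups increase treewidth by only constant factors (Lemma~\ref{lem:power_tw}, Proposition~\ref{prop:blowup_tw}, via Corollary~\ref{cor:nb_tiles}). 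Without some substitute for these steps the bound $\tw(G_\cP)=O(\log n)$, and with it both parts (i) and (ii), is unproven.

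Two further points. First, in the peeling step you apply the isoperimetric inequality to the occupied region $U$ and count all boundary cells, but cells adjacent to holes of $U$ are not on the outer face, so ``$\Omega(N)$ cells lie on the outer face'' does not follow as stated; the paper's Lemma~\ref{lem:layerpeel} first fills the holes so that the isoperimetric bound applies to the outer boundary alone. (Your anticipated difficulty with small components is, by contrast, a non-issue: $L^2\geq 4\pi A+A^2$ already gives $L\geq A$ for every component, with no $\sqrt{A}$ degradation.) Second, your algorithmic fallback for part (i) is unsound: reconstructing an embedding from the abstract graph is not known to be possible (the paper lists recognition as open), and no polynomial-time constant-factor approximation of treewidth is known for general graphs. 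The correct fix, and the one the paper uses (Theorem~\ref{thm:main}), is to run a constant-factor treewidth approximation with running time $2^{O(w)}\poly(n)$, which is polynomial exactly because the width being approximated is $O(\log n)$.
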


In order to find tree decompositions in the case $d\geq 3$, we can use techniques of De
Berg~\etal~\cite{frameworkpaper}, as detailed in Section~\ref{sec:gettw}
to make our results for the case $d\geq 3$ slightly more practical from the perspective of algorithms. We show
that tree decompositions of width asymptotically matching our bounds can be
found in these graph classes if an embedding $\eta$ is given. This approach
avoids the geometric separator search of Theorem~\ref{thm:sep}, and it is
deterministic. %
%
In the absence of a geometric embedding, the techniques
of~\cite{frameworkpaper} can be used to compute tree decompositions, under the
promise that $G$ is of some particular graph class
$\SNUBG_{\Hyp^d}(\rho,\nu,k)$ or $\NUBG_{\Hyp^d}(\rho,\nu)$. For
$\SNUBG_{\Hyp^d}(\rho,\nu,k)$, a tree decomposition of
width at most constant times our treewidth bounds can be computed for any
$d\geq 2$. If our graph is non-shallow, then we get weaker tree decompositions: we can compute a so-called \emph{greedy partition}
$\cP$ that mimics a clique partition, and a corresponding tree decomposition
of width asymptotically matching our bounds.


As our final contribution, we complement the quasi-polynomial algorithm for \IS in $\Hyp^2$ with a matching lower bound.\footnote{The lower bound is for a specific radius, but is easily adaptable to other constant radii.}

\begin{restatable}{theorem}{thmlowerbound}\label{thm:lowerbound}
For $\rho=\cosh^{-1}(\cos(\pi/5)/\sin(\pi/4))$, there is no $n^{o(\log n)}$ algorithm for \IS in $\NUBG_{\Hyp^2}(\rho,1)$ unless ETH fails.
\end{restatable}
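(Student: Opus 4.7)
The plan is to reduce $3$-SAT on $N$ variables to \IS in $\NUBG_{\Hyp^2}(\rho,1)$ with a carefully calibrated blow-up $n = 2^{\Theta(\sqrt{N})}$, so that $N = \Theta(\log^2 n)$ and ETH's $2^{\Omega(N)}$ lower bound for $3$-SAT translates into the desired $n^{\Omega(\log n)}$ bound for \IS. The value $\rho = \cosh^{-1}(\cos(\pi/5)/\sin(\pi/4))$ is exactly the circumradius of a face in the regular $\{4,5\}$-tiling of $\Hyp^2$---squares with five meeting at each vertex---which identifies the geometric scaffold: disks of radius $\rho$ centered at face centers of this tiling overlap when the faces share an edge (since the apothem $r$ of a face satisfies $2r < 2\rho$), and are disjoint when the faces are sufficiently far apart, yielding a clean adjacency structure exploitable in a gadget reduction.

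First, I would reduce $3$-SAT to \GTleq on a $k \times k$ grid with $k = \Theta(\sqrt{N})$ and $2^{O(k)}$ values per cell, via the standard Marx construction. Then I would implement each grid cell as a gadget placed in a patch of the $\{4,5\}$-tiling. Because a single ball of diameter $2\rho$ in $\Hyp^2$ accommodates only $O(1)$ mutually $2\rho$-close points, a cell's $2^{O(k)}$ values cannot all live in one clique; instead, each cell is encoded by a binary selection tree of depth $\Theta(k)$ whose $2^{\Theta(k)}$ leaves represent the values. It is exactly the exponential growth of $\Hyp^2$ that lets such a tree fit inside a region of combinatorial radius $\Theta(k)$, and this is what makes the quasi-polynomial blow-up geometrically possible---the same construction would fail in $\Reals^2$. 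The $k^2$ gadgets are arranged so that neighboring cells of the grid occupy adjacent patches of the tiling, with the consistency constraints enforced by extra edges between appropriate leaves of neighboring selection trees, realized by placing those leaves in edge-adjacent faces.

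The hard part will be the rigid geometric realization under $\nu = 1$, which admits no slack in the ``$<2\rho$ versus $>2\rho$'' dichotomy: every pair of disks must land on the correct side of this sharp threshold. This forces heavy reliance on the precise discrete data of the $\{4,5\}$-tiling---the $72^{\circ}$ angular separation between adjacent face-centers around a common vertex, and the hyperbolic law of cosines fixing all inter-center distances---to design both the intra-cell selection trees and the inter-cell consistency wires without any spurious adjacencies. Once the gadget layout is in place, the concluding argument is routine: the constructed \IS instance has $n = k^2 \cdot 2^{O(k)} = 2^{O(\sqrt{N})}$ vertices and is built in polynomial time, so an $n^{o(\log n)}$-time algorithm would yield a $2^{o(\log^2 n)} = 2^{o(N)}$-time $3$-SAT algorithm, contradicting ETH.
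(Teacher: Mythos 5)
Your Steps 1--2 (calibrating the grid to $k=\Theta(\sqrt N)=\Theta(\log n)$ with $2^{O(k)}$ values per cell, so that an $n^{o(\log n)}$ algorithm for \IS would give a $2^{o(N)}$ algorithm for SAT) match the paper's bookkeeping, which likewise re-derives the lower bound for \GTn{} with $k=\log n$ rather than plugging $k=\log n$ into the known $f(k)n^{o(k)}$ bound, and then passes to \GTleq. The genuine gap is in Step 3, the geometric realization, and it is twofold. First, your identification of $\rho$ is wrong: $\cosh^{-1}(\cos(\pi/5)/\sin(\pi/4))$ is the \emph{apothem} of the $\{4,5\}$ square (equivalently, half the edge length of the dual $(5,4)$ pentagon tiling), not the circumradius, so the centers of edge-adjacent faces are at distance \emph{exactly} $2\rho$; your premise that the apothem $r$ satisfies $2r<2\rho$ and hence adjacency is forced between edge-adjacent faces is false. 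Second, and more fundamentally, you treat $\nu=1$ as admitting ``no slack,'' whereas the construction hinges on the opposite observation: in $\NUBG_{\Hyp^2}(\rho,1)$ the adjacency of two points at distance exactly $2\rho$ is completely unconstrained. The paper exploits this by embedding a subdivided $\log n\times\log n$ grid into the vertex-edge graph of the $(5,4)$ tiling (adjacent tiling vertices are at distance exactly $2\rho$), placing at each grid/subdivision vertex a clique of up to $n^2$ co-located ``index'' points, and then choosing the bipartite edges between cliques at adjacent vertices \emph{freely} so as to encode $x>x'$; a maximum independent set must pick one index per vertex and is thereby forced to carry non-decreasing coordinates along the canonical paths, which is exactly the \GTleq constraint.

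Without that freedom, your plan has no mechanism for making adjacency between leaves of neighboring selection trees depend on the values they represent. If the $2^{\Theta(k)}$ leaves of a cell are co-located (as they must be to form the clique that enforces ``select one value''), then the bipartite pattern toward the neighboring cell is forced to be complete or empty by the strict $<2\rho$ versus $>2\rho$ dichotomy; if instead the leaves are spread over a patch of radius $\Theta(k)$, then all but a bounded number of them lie at distance much larger than $2\rho$ from every leaf of the adjacent patch, so the required threshold relation on $2^{\Theta(k)}\times 2^{\Theta(k)}$ pairs cannot be realized by placement alone, and you would further need unspecified propagation gadgets carrying the selected value to the patch boundary (the paper handles this by putting $n$ co-located points on every internal vertex of each canonical path). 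As written, the central step of your reduction does not go through; it would be repaired precisely by adopting the exactly-$2\rho$ ``noise'' freedom that you explicitly set aside.
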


We show that the lower bound framework of~\cite{frameworkpaper} for $\Reals^d$ applies in $\Hyp^{d+1}$. Consequently, all of our algorithms given in $\Hyp^d, d\ge 3$ have matching lower bounds under ETH.

\section{Tilings and a separator in \texorpdfstring{$\Hyp^d$}{hyperbolic space}}\label{sec:tilings}

The goal of this section is to prove Theorem~\ref{thm:sep}. The clique partition will be induced by a tiling, and the separator $S$ will be defined as the collection of tiles intersected by the neighborhood of a random hyperplane. This requires that we introduce customized tilings of $\Hyp^d$, and study how tiles are intersected by random hyperplanes first.

\subsection{Custom tilings and their properties}

A \emph{tiling} $\cT$ of $\Hyp^d$ is a set of interior disjoint compact subsets of $\Hyp^d$ that together cover $\Hyp^d$; in this paper, we only use tiles that are homeomorphic to a closed  ball. We define tilings where each tile contains a ball of radius $\rho$, and each tile is contained in a ball of radius $\rho\nu$ for some absolute constants $\rho> 0$ and $\nu\geq 1$. We call such a tiling a $(\rho,\nu)$-\emph{nice tiling}, or just nice tiling for short. A pair of distinct tiles $T,T'\in \cT$ are \emph{neighboring} if their boundaries intersect in more than one point, and they are \emph{isometric} if there is a distance-preserving transformation of $\Hyp^d$ that maps $T$ to $T'$. A tiling of $\Hyp^2$ with bounded tiles is regular if it is vertex-, edge- and face-transitive, or equivalently, if it is a tiling with copies of a fixed bounded regular polygon where each vertex of the tiling is a vertex of all the containing polygons.\footnote{We have excluded ideal polygons, although they do define regular tilings.} Such regular tilings are always nice, since each face is a copy of a regular polygon that has an inscribed disk of positive radius and a finite radius circumscribed disk.

\begin{restatable}{lemma}{lemtilings}\label{lem:tilings}
\textcolor{white}{.}
\begin{enumerate}
	\item[(i)] There are positive constants $c_1$ and $c_2$ such that for any $\delta\in \Nats_+$, there is a regular tiling $\cT_\delta$ of $\,\Hyp^2$ where non-neighboring tiles have distance at least $c_1 \delta$, the tiles have diameter $c_2\delta$, and the ratio of their area $\alpha_\delta$ and their perimeter $\xi_\delta$ converges and satisfies $\lim_{\delta \rightarrow \infty} \frac{\alpha_\delta}{\xi_\delta} >0$.
	\item[(ii)] For any $\delta\in \Reals_+$ there is a nice tiling of $\,\Hyp^d$ with isometric tiles of diameter $2\delta$.
\end{enumerate}
\end{restatable}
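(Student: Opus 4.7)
The plan is to handle the two parts separately: regular hyperbolic $\{p,3\}$ tilings for (i), and orbits of Dirichlet fundamental domains of cocompact lattices for (ii).

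For part (i), I would let $\cT_\delta$ be the regular $\{p_\delta, 3\}$ tiling of $\Hyp^2$ with $p_\delta \geq 7$ chosen so that the apothem of each tile is $\Theta(\delta)$. From the hyperbolic trigonometric identity $\cosh(a_p) = \cos(\pi/3)/\sin(\pi/p) = 1/(2\sin(\pi/p))$, derived from the right triangle whose vertices are the tile's center, an edge midpoint, and a polygon vertex, the choice $p_\delta \sim \pi e^\delta$ yields $a_{p_\delta} = \delta + O(1)$; the circumradius satisfies $R_{p_\delta} = a_{p_\delta} + O(1)$, so each tile has diameter $c_2\delta$ up to an additive constant. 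The edge length $\ell_{p_\delta}$ converges to $\ln 3$ via $\cosh(\ell_p/2) = \cos(\pi/p)/\sin(\pi/3) \to 2/\sqrt{3}$. By hyperbolic Gauss--Bonnet the tile area is $\alpha_\delta = \pi(p_\delta/3 - 2)$ and the perimeter is $\xi_\delta = p_\delta \ell_{p_\delta}$, so the ratio $\alpha_\delta/\xi_\delta$ converges to the positive limit $\pi/(3\ln 3)$. For the non-neighboring distance bound, the key structural fact is that in any $\{p,3\}$ tiling the three tiles meeting at each vertex are pairwise edge-adjacent, so non-neighboring tiles are disjoint compact sets; I would identify the worst-case configuration (a pair of tiles sharing only a common neighbor, through non-adjacent edges of that neighbor) and bound their separation via the hyperbolic law of cosines applied to the triangle formed by the relevant tile centers.

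For part (ii), I would build the tiling as the $\Gamma_\delta$-orbit of a Dirichlet fundamental domain of a torsion-free cocompact discrete subgroup $\Gamma_\delta \leq \mathrm{Isom}(\Hyp^d)$; such lattices exist in every dimension by Borel's arithmetic constructions. Fixing a basepoint $x_0 \in \Hyp^d$, the Dirichlet domain $D = \{y : d(x_0, y) \leq d(\gamma x_0, y) \text{ for all } \gamma \in \Gamma_\delta\}$ is a bounded convex polyhedron, and the translates $\{\gamma D : \gamma \in \Gamma_\delta\}$ tile $\Hyp^d$ by pairwise isometric copies of $D$. To calibrate the diameter to $2\delta$: in $d=2$ the continuous moduli of Fuchsian groups allow exact tuning, while in $d \geq 3$ Mostow rigidity gives only countably many lattices, but passing to finite-index subgroups of a fixed lattice produces Dirichlet domains at any desired scale up to bounded factors, which suffices for the ``nice tiling'' condition that each tile contains a ball of radius $\Theta(\delta)$ and is contained in a ball of radius $\Theta(\delta)$.

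The main obstacle I expect is the lower bound on non-neighboring distance in part (i): a naive estimate shows that two tiles sharing only a common neighbor via non-adjacent edges may approach to within a bounded distance as $p \to \infty$, so to obtain $\Omega(\delta)$ separation one must calibrate the scale of $p_\delta$ against $\delta$ very carefully, exploiting that the apothem (rather than the edge length) is the natural length scale of the tiling. For part (ii), the subtler point is achieving diameter exactly $2\delta$ in $d \geq 3$, which I would address by absorbing a bounded constant into $\nu$ and relaxing to diameter $\Theta(\delta)$ without losing any subsequent algorithmic consequences.
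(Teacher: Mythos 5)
Your part (i) is essentially the paper's construction: the paper uses the regular $(2^{\delta+2},3)$ tiling, computes the apothem and circumradius to be $\Theta(\delta)$ from the same center--edge-midpoint--vertex right triangle, gets the half-edge $\cosh^{-1}(\cos(\pi/p)/\sin(\pi/3))\to\cosh^{-1}(2/\sqrt{3})$ (so edge length $\to\ln 3$), and arrives at the same limit $\pi/(3\ln 3)>0.953$; your $p_\delta\sim\pi e^\delta$ versus $2^{\delta+2}$ is immaterial. The obstacle you flag, however, is not something calibration can fix: in any $\{p,3\}$ tiling, the two tiles glued to a common tile along edges separated by a single edge are non-neighboring, yet each contains an endpoint of the intervening edge, so their distance is at most the edge length, which stays bounded (indeed tends to $\ln 3$). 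Hence $\Omega(\delta)$ separation is unattainable for this family --- and the paper does not prove it either: its proof only asserts the constant bound $2|vp|=\Theta(1)$, and that is all that is used later (Corollary~\ref{cor:nb_tiles} needs only a positive constant $c_\delta$ independent of the tile size). So the right target is a uniform constant lower bound, which your configuration analysis does give (e.g.\ using that all interior angles are $2\pi/3\geq\pi/2$, so the closest approach of such a pair is realized at the endpoints of the shared neighbor's intervening edge); chasing $\Omega(\delta)$ is both hopeless and unnecessary.

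Your part (ii) takes a genuinely different route from the paper, and it has a real gap. The paper builds the tiling explicitly in the half-space model (following Cannon et al.): the images of an axis-parallel Euclidean cube with corner at height $1$ and hyperbolic diameter $\delta$ under the hyperbolic isometries $p\mapsto(s+1)^b\big(p+(s\ba,0)\big)$, where $s$ is solved for as a continuous explicit function of $\delta$; the hyperbolic inradius is $\tfrac12\ln(s+1)=\Theta(\delta)$, so every exact diameter $\delta>0$ is achieved and the tiling is nice for all $\delta$, small or large. Your Dirichlet-domain construction cannot do this for $d\geq 3$: by Kazhdan--Margulis there is a positive lower bound on the covolume of any lattice in the isometry group of $\Hyp^d$, so no fundamental-domain tile can have diameter below a fixed dimensional constant, while the lemma is quantified over all $\delta\in\Reals_+$ and the downstream proofs (the separator theorem, Lemma~\ref{lem:greedypart}, the Hamiltonian-cycle reduction) instantiate it with tile diameter $\rho-\eps$ or $2\rho-\eps$ for an arbitrary constant radius $\rho$, possibly small. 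Passing to finite-index subgroups only makes tiles larger, and even in the large-$\delta$ regime your claim that this realizes ``any desired scale up to bounded factors'' is asserted without justification: the achievable diameters form a countable set with no density control, and nothing in your argument bounds the circumradius of a Dirichlet domain of a deep finite-index subgroup in terms of the chosen scale. The paper's explicit construction also has the practical advantage that membership of an embedded point in a tile is trivially computable, which the algorithms implicitly need and which is far from clear for Dirichlet domains of arithmetic lattices.
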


\begin{proof}

(i)
Consider the tiling $\cT$ of $\Hyp^2$ by regular $2^{\delta+2}$-gons where at
each vertex three $2^{\delta+2}$-gons meet (that is, the regular tiling with
Schl\"afli symbol $(2^{\delta+2},3)$). We examine the right triangle created
by the center $o$ of a tile, a midpoint $p$ of a side and a neighboring vertex
$v$, see Figure~\ref{fig:tiles83}. The triangle has angle $\pi/3$ at $v$,
since there are six such angles at $v$, and it has angle $\pi/2^{\delta+2}$ at
$o$. From the hyperbolic law of cosines for angles ($\cos C = -\cos A \cos B +
\sin A\sin B \cosh c$) we have that the hypotenuse $ov$ has length

\begin{figure}[t]
\centering
\includegraphics[width=0.55\textwidth,trim={5cm 5.5cm 0cm 2.5cm},clip]{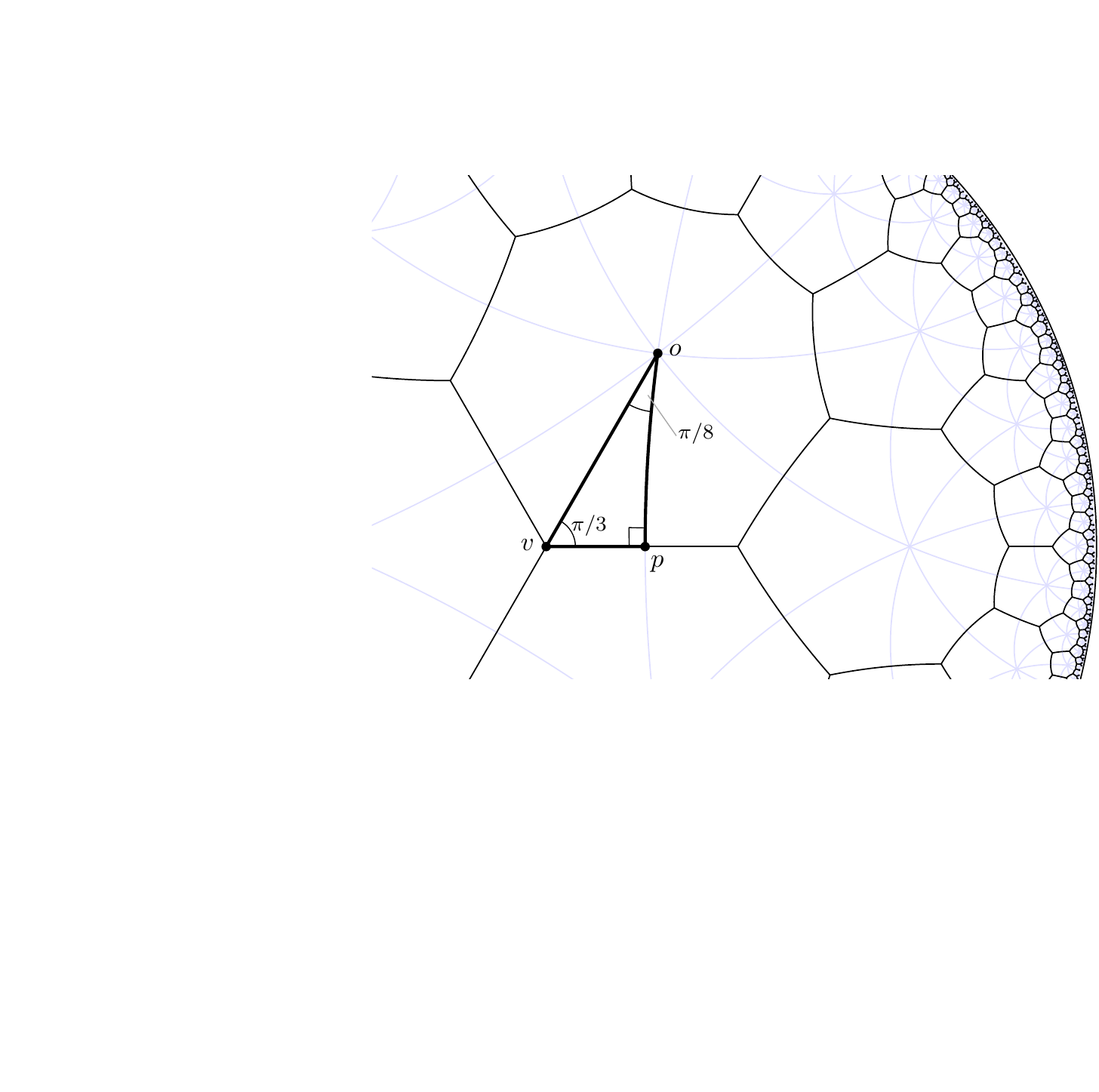}
\caption{A regular tiling with octagons ($\delta=1$).}\label{fig:tiles83}
\end{figure}

\[|ov| = \cosh^{-1}\left( \cot(\pi/3)\cot(\pi/2^{\delta+2}) \right) = \cosh^{-1}(\Theta(2^{\delta+2}/\pi)) = \Theta(\delta),\]

where the second step uses that $\cot(x) = 1/x -O(x)$, which follows from the 
Laurent-series of $\cot(x)$ around $0$. Consequently, the diameter of the
tiles is $2|ov|=\Theta(\delta)$. It is useful to find the lengths of the other
two sides as well. 
\begin{align*}|vp| &=
\cosh^{-1}(\cos(\pi/2^{\delta+2})/\sin(\pi/3)) = \Theta(1)\text{ and}\\
|po| &=
\cosh^{-1}(\cos(\pi/3)/\sin(\pi/2^{\delta+2}))= \Theta(\delta)
\end{align*}
The distance
of non-neighboring tiles is at least $2|vp|=\Theta(1)$.

The area of each tile is $2^{\delta+3}$ times the area of the triangle $pov$,
so $\alpha_\delta=2^{\delta+3}(\pi-(\frac{\pi}{2} + \frac{\pi}{3} +
\frac{\pi}{2^{\delta+2}}))$. The circumference of each tile is
$\xi_\delta=2^{\delta+3}|vp|$. Therefore we have that the limit of the ratio of
the area and the circumference is

\[\lim_{\delta\rightarrow \infty}\,\frac{2^{\delta+3}(\pi-(\frac{\pi}{2}
+ \frac{\pi}{3} + \frac{\pi}{2^{\delta+2}}))}%
{2^{\delta+3}\cosh^{-1}(\cos(\frac{\pi}{2^{\delta+2}})/\sin(\frac{\pi}{3}))}
=\frac{\pi/6}{\cosh^{-1}(\cos 0 /\sin(\frac{\pi}{3}))}>0.953.\]

\begin{figure}[t]
\centering
\includegraphics[height=5cm]{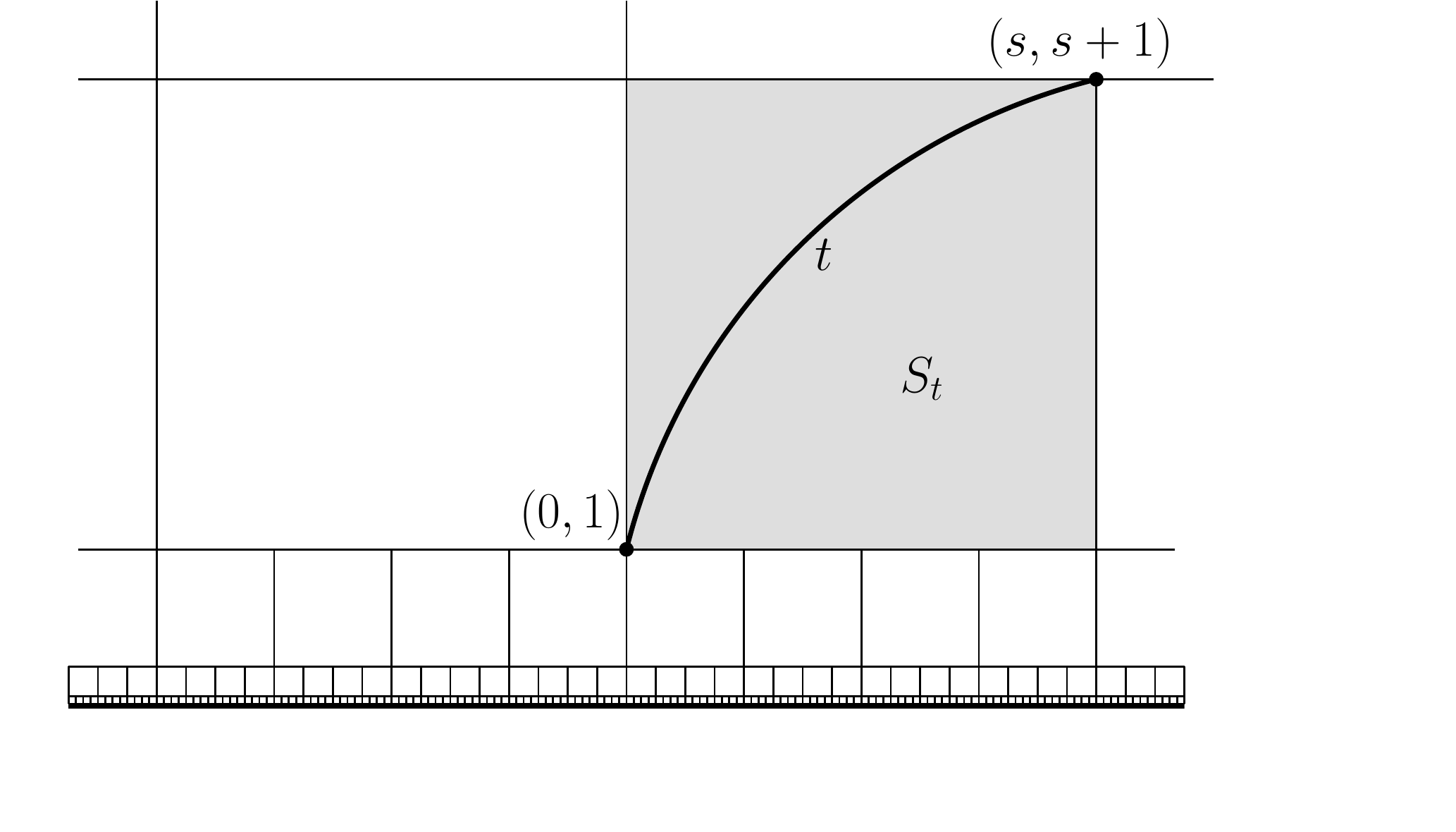}
\caption{A tiling of $\Hyp^2$ where the base Euclidean square $S_t$ has (Euclidean) side length $s=3$.}\label{fig:t_tiling}
\end{figure}

(ii) We describe the tiles using the Poincar\'e half-space model of $\Hyp^d$. The tiling is based on the tiling of Cannon~\etal~\cite[Section~14]{CannonFKP97}. Let $S_{\delta}$ be the axis-aligned Euclidean hypercube with lexicographically minimal corner $(0,\dots,0,1)$ and hyperbolic diameter $\delta$; let $s$ be its Euclidean side length. (Note that $S_\delta$ does not resemble a hypercube in any sense in $\Hyp^d$.) See Figure~\ref{fig:t_tiling}. The hyperbolic diameter of $S_\delta$ is realized for two opposing vertices of the hypercube, e.g. for the vertices $(0,\dots,0,1)$ and $(s,\dots,s,s+1)$, so its diameter is
\[\delta=\cosh^{-1}\left(1+ \frac{d\cdot s^2}{2(s+1)}  \right),\]
which gives
\[s=\frac{\cosh(\delta)-1+\sqrt{(\cosh(\delta)-1)^2+2d(\cosh(\delta)-1)}}{d}.\]
We note that for $\delta>1$, we have $s=\Theta(\cosh(\delta)-1)=\Theta(e^\delta)$, and for $\delta\leq 1$ we have $s=\Theta(\sqrt{\cosh(\delta)-1}) = \Theta(\delta)$.
Then we can define $\cT$ as the image of $S_\delta$ for the following set of Euclidean transformations:
\[f_{\ba,b}(p)= (s+1)^b\big(p+(s\ba,0)\big),\]
where $\ba\in \Ints^{d-1} \times \{0\}$ and $b\in \Ints$. The resulting tiling is illustrated in Figure~\ref{fig:t_tiling}: we essentially take horizontal translates of $S_\delta$ to tile a strip, then scale it so that a thinner strip below it is covered, then scale it again to cover an even thinner strip below that, etc; we similarly cover thicker strips above $S_\delta$. Note that $f_{\ba,b}$ is a hyperbolic isometry since it can be regarded as the succession of a translation and a homothety from the origin, both of which are isometries of the half-space model~\cite{CannonFKP97}.
Finally, a Euclidean hypercube has a Euclidean inscribed ball, which is also an inscribed ball in the hyperbolic sense as well, although with different center and radius. A straightforward calculation gives that the hyperbolic radius of this ball is $\frac{1}{2}\ln(s+1)$.
In case of $\delta>1$, the radius is $\frac{1}{2}\ln(s+1) = \ln(\Theta(e^\delta))=\Theta(\delta)$, and when $\delta \leq 1$, the radius is $\frac{1}{2}\ln(s+1) =\ln(1+\Theta(\delta))=\Theta(\delta)$.
On the other hand, the tiles have diameter $\delta$, so they have a circumscribed ball of radius $\delta$. Hence, the tiling is nice.
\end{proof}

Given a metric space $\cM=(X,\dist)$ and a subset $S\subset X$, the \emph{$\rho$-neighborhood}\index{r@$\rho$-neighborhood} of $S$ is $\Nei_\cM(S,\rho)\eqdef \{x\in X \;|\; \inf_{s\in S} \dist(x,s)\leq \rho\}$. Given a tiling $\cT$ and a set of tiles $S\subseteq \cT$, the \emph{neighborhood graph}\index{graph!neighborhood \tilde} of $S$ is the graph with vertex set $S$ where tiles are connected if and only if they are neighbors. We denote the neighborhood graph by $\NbG(S)$, and its shortest-path distance by $\dist_S\eqdef\dist_{\NbG(S)}$. We slightly abuse notation and refer to the metric space $(\cT,\dist_\cT)$ as ``$\cT$''.

Since non-neighboring tiles of the tiling in Lemma~\ref{lem:tilings}(i) are at distance at least $c_\delta=\Omega(1)$, a pair of tiles at $\Hyp^2$-distance smaller than $c_\delta$ defines an edge of $\NbG(\cT_\delta)$. We will need the following corollary in the next section.

\begin{corollary}\label{cor:nb_tiles}
Let $\delta\in \Nats_+$, and let $\cT$ be a tiling with tiles of diameter $\Theta(\delta)$ as defined by part (i) of Lemma~\ref{lem:tilings}. Then there exists a constant $c_\delta>0$ such that for any set of tiles $S\subset\cT$, the tiles in $S$ and their neighboring tiles cover the $c_\delta$-neighborhood of $S$, that is,
$\Nei_{\Hyp^2}\left(\bigcup S,c_\delta\right) \subseteq \bigcup \Nei_\cT(S,1).$
\end{corollary}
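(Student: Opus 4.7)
The plan is to chase definitions and invoke the distance gap between non-neighboring tiles guaranteed by Lemma~\ref{lem:tilings}(i). From that lemma we have a positive constant $c_1$ (independent of $\delta$) such that any two non-neighboring tiles of $\cT = \cT_\delta$ are at $\Hyp^2$-distance at least $c_1\delta$. I would set $c_\delta \eqdef c_1\delta/2$, so that $c_\delta$ is a strictly positive constant that is strictly smaller than the minimum separation between non-neighboring tiles.

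Next, I would pick an arbitrary point $p \in \Nei_{\Hyp^2}(\bigcup S, c_\delta)$ and locate a tile $T^\star$ covering it. By definition of the $c_\delta$-neighborhood, there exists a tile $T \in S$ with $\dist_{\Hyp^2}(p, T) \leq c_\delta$. Since $\cT$ is a tiling of $\Hyp^2$, there is also some $T^\star \in \cT$ with $p \in T^\star$. If $T^\star \in S$, then trivially $T^\star \in \Nei_\cT(S,1)$ and we are done. Otherwise $T^\star \neq T$, and since $p \in T^\star$,
\[
\dist_{\Hyp^2}(T, T^\star) \;\leq\; \dist_{\Hyp^2}(T, p) \;\leq\; c_\delta \;<\; c_1\delta.
\]
By the contrapositive of Lemma~\ref{lem:tilings}(i), $T$ and $T^\star$ must therefore be neighboring tiles, so $T^\star \in \Nei_\cT(\{T\},1) \subseteq \Nei_\cT(S,1)$, which gives the desired inclusion.

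There is no real obstacle here; the only thing to be slightly careful about is the edge case $T^\star = T$, which lies in $S$ and causes no trouble, and the fact that the constant $c_\delta$ must be strictly less than the non-neighbor gap $c_1\delta$ so that the Lemma~\ref{lem:tilings}(i) lower bound can be contradicted. Taking $c_\delta = c_1\delta/2$ (or in fact any positive constant below $c_1\delta$) suffices, and the corollary follows in a single paragraph once the distance gap is in hand.
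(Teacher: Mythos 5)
Your proof is correct and matches the paper's (implicit) argument: the corollary is exactly the observation that, since non-neighboring tiles of $\cT_\delta$ are separated by a positive distance, any tile containing a point of the $c_\delta$-neighborhood of $\bigcup S$ must be a tile of $S$ or a neighbor of one. Taking $c_\delta$ strictly below the non-neighbor gap (e.g.\ half of it) to handle the closed-neighborhood boundary case is a sensible refinement of the same approach.
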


Our results strongly rely on some properties of $\Hyp^d$ that are consequences of its exponential expansion. 

\begin{restatable}{proposition}{proptilesballs}\label{prop:tilesballs}
Let $d\geq 2$, $\rho>0$, and $\nu\geq 1$ be constants. Then for any $(\rho,\nu)$-nice tiling $\cT$ of $\Hyp^d$ we have:
\begin{enumerate}
	\item[(i)] The number of tiles from $\cT$ intersected by a radius $r$ ball is $\Theta(e^{(d-1)r})$.
	\item[(ii)] Let $\tau=2\rho\nu$, and for a positive integer $j$ let $\cL_j\subset \cT$ be the set of tiles whose distance from the origin is in the interval $\left[(j-1)\tau,j\tau\right)$. Then $|\cL_j|=\Theta(e^{(d-1)j\tau})$.
\end{enumerate}
\end{restatable}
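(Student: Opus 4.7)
My plan is to prove (i) by direct volume comparison in $\Hyp^d$ and to derive (ii) from (i) together with an additional surface-area argument for the lower bound. Throughout, set $V_{\min}=\Vol(B(\rho))$ and $V_{\max}=\Vol(B(\rho\nu))$; these are positive constants, every tile of $\cT$ has volume in $[V_{\min},V_{\max}]$, and every tile has diameter at most $\tau=2\rho\nu$. The key analytic input is the hyperbolic volume formula $\Vol(B(r))=\omega_{d-1}\int_0^r\sinh^{d-1}(t)\,dt=\Theta(e^{(d-1)r})$.

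For (i), let $M(r)$ denote the number of tiles meeting $B(o,r)$. These tiles are pairwise interior-disjoint and contained in $B(o,r+\tau)$ (by the diameter bound), while collectively covering $B(o,r)$ (because $\cT$ covers $\Hyp^d$), so comparing total volumes gives $\Vol(B(r))/V_{\max}\leq M(r)\leq\Vol(B(r+\tau))/V_{\min}$, and both sides are $\Theta(e^{(d-1)r})$. The same sandwich, applied to the annulus $A=\{x:(j-1)\tau\leq\dist(o,x)\leq(j+1)\tau\}$ of volume $\Theta(e^{(d-1)j\tau})$ that contains every tile of $\cL_j$, yields the upper bound $|\cL_j|\leq\Vol(A)/V_{\min}=O(e^{(d-1)j\tau})$ in (ii).

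The lower bound of (ii) is where the real work lies. The naive telescoping $|\cL_j|\geq M(j\tau)-M((j-1)\tau)$ from (i) alone is not tight enough: the hidden constants in the $O$ and $\Omega$ estimates of (i) differ by more than the factor $e^{(d-1)\tau}$ relating the two terms, so the difference can be swamped. Instead I would argue via hyperbolic spheres. The set of origins for which some tile has $\dist(o,T)$ equal to a positive integer multiple of $\tau$ is a countable union of measure-zero hypersurfaces, so after a once-and-for-all perturbation we may assume that for each $j$ there exists $\epsilon_j>0$ with no tile satisfying $\dist(o,T)\in((j-1)\tau-\epsilon_j,(j-1)\tau]$. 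Setting $r=j\tau-\epsilon_j/2$, any tile meeting the sphere $S_r$ has $\dist(o,T)\in[r-\tau,r]$, and by the choice of $\epsilon_j$ this interval avoids the $\cL_{j-1}$ range, so the tile lies in $\cL_j$. Since $S_r$ has $(d-1)$-Hausdorff measure $\omega_{d-1}\sinh^{d-1}(r)=\Theta(e^{(d-1)j\tau})$ and each tile contributes at most a constant $C_\nu=C_\nu(\rho\nu)$ to this area, we conclude $|\cL_j|\geq\Omega(e^{(d-1)j\tau})$.

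The principal obstacle is establishing this uniform cap bound $|T\cap S_r|\leq C_\nu$ with a constant independent of $r$. Since $T$ is contained in some ball of radius $\rho\nu$, the bound reduces to controlling the $(d-1)$-area of $B(p,\rho\nu)\cap S_r$; the geometric point is that as $r$ grows $S_r$ locally approximates a horosphere, so any ball of fixed radius carves out a region of uniformly bounded area. This can be checked by a short direct computation in the Poincar\'e half-space model. The remaining ingredients, namely the hyperbolic volume asymptotics, the surface-area formula, and the perturbation of $o$, are standard.
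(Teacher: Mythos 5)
Your part~(i) and the upper bound in part~(ii) are exactly the paper's volume-comparison argument, and your lower bound for (ii) rests on the same key idea as the paper's: cover a hyperbolic sphere of radius $\approx j\tau$ by tiles of $\cL_j$, use that each tile (being inside a ball of radius $\rho\nu$) meets the sphere in a piece of $O(1)$ surface area, and compare with the total area $\Theta(e^{(d-1)j\tau})$. The one step that does not hold up as written is the ``once-and-for-all perturbation'' of the origin. The proposition is invoked later (in the proof of Lemma~\ref{lem:weight}) with an \emph{arbitrary prescribed} center $p$ (a centerpoint of the input), so the bound must hold for every center, not for almost every one; and moving the center changes the layers $\cL_j$ themselves, so a bound for a perturbed center does not transfer back --- the number of tiles that can migrate across a layer boundary under a small perturbation is itself of order $e^{(d-1)j\tau}$, so you cannot absorb the discrepancy into the constants.

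Fortunately the perturbation is unnecessary, so the gap is easy to close. Since every tile contains a ball of radius $\rho$ and the tiles are interior disjoint, only finitely many tiles lie within distance $j\tau+\tau$ of the center, so for \emph{every} center there is $\epsilon_j>0$ such that no tile has distance in the open interval $\bigl((j-1)\tau-\epsilon_j,\,(j-1)\tau\bigr)$; tiles at distance exactly $(j-1)\tau$ need not be excluded at all, because the interval defining $\cL_j$ is closed on the left, so they already belong to $\cL_j$. With this $\epsilon_j$ your choice $r=j\tau-\epsilon_j/2$ works verbatim, with no genericity assumption. The paper sidesteps the issue by a slightly slicker device: it covers the sphere $\partial B(o,j\tau)$ itself by (circumscribed balls of) tiles that intersect both the sphere and the open ball $B(o,j\tau)$; such a tile automatically has distance less than $j\tau$ and, by the diameter bound, at least $(j-1)\tau$, hence lies in $\cL_j$, so no choice of $\epsilon_j$ is needed. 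Your ``uniform cap area'' claim (that $B(p,\rho\nu)\cap S_r$ has $(d-1)$-area $O(1)$ uniformly in $r$) is correct and is asserted at essentially the same level of detail in the paper.
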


\begin{proof}
\noindent (i) The volume of a ball of radius $r$ is
$\Vol_d(r)= \omega_d\int_{0}^r \sinh^{d-1}(x)\, \mathrm{d}x$,
where $\omega_d$ depends only on $d$, hence $\Vol_d(r)= \Theta(e^{(d-1)r})$~\cite{Rivin09}.
We denote by $B(x,r)$ the ball in $\Hyp^d$ with center $x$ and radius $r$, and let $o$ be the origin.
The 
tiles intersected by $B(o,r)$ together cover $B(o,r)$. Therefore, the number of tiles contained in $B(o,r)$ is at least $\Vol_d(r)/\Vol_d(\rho\nu) = \Omega(e^{(d-1)r})$. For the upper bound, notice that inscribed balls of tiles are interior disjoint, and that each tile has diameter at most $\tau=2\rho\nu$. Therefore tiles intersecting $B(o,r)$ are contained in $B(o,r+2\rho\nu)$, so there are at most $\Vol_d(r+2\rho\nu)/\Vol_d(\rho) = O(e^{(d-1)r})$ tiles contained in $B(o,r)$.

\medskip
\noindent (ii) Clearly all tiles of $\cL_j$ intersect  $B(o,j\tau)$, therefore the upper bound of (i) carries over. For a lower bound, let $\cB$ be the set of balls of radius $\rho\nu$ drawn around the tiles of $\cL_j$. Since any tile that intersects the sphere $\bd B(o,j\tau)$ and the interior of $B(o,j\tau)$ must be in $\cL_j$, the union $\bigcup \cB$ covers the sphere $\bd B(o,r)$. Each ball in $\cB$ can cover at most $O(1)$ total (surface) volume of this sphere, whose total (surface) volume is
$\Vol(S^{d-1}_\Hyp (j\tau))= \omega_d \sinh^{d-1}(j\tau)= \Theta(e^{(d-1)j\tau})$.
Consequently, $|\cL_r| = \Omega(e^{(d-1)j\tau})$.
\end{proof}

We can define a \emph{uniform random hyperplane} of $\Hyp^d$ through $o$ the following way. Consider the Poincar\'e ball model with center $o$, and let $H$ be a hyperplane in $\Reals^d$ through $o$ whose normal is chosen uniformly at random from the unit sphere. The intersection of $H$ with the open unit ball of the Poincar\'e model is a uniform random hyperplane of $\Hyp^d$ through $o$.

\begin{restatable}{lemma}{lemintersectingtile}\label{lem:intersectingtile}
Let $d\geq 2$, $\rho>0$, $\nu\geq 1$, and $\delta>0$ be constants. Fix a $(\rho,\nu)$-nice tiling $\cT$ of $\Hyp^d$, and set $\tau=2\rho\nu$. Let $T\in \cT$ be a tile whose distance from the origin is in the interval $\left[(j-1)\tau,j\tau\right)$ (i.e., $T\in \cL_j$), and let $H$ be a uniform random hyperplane passing through the origin. Then the probability that $T$ is intersected by $\Nei(H,\delta)$ is $O(e^{-j\tau})$.
\end{restatable}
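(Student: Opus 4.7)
The plan is to reduce the geometric event to a simple angular condition on the normal of the random hyperplane, then invoke standard spherical measure estimates. Here is how I would proceed.

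First, I would replace the tile $T$ by a single representative point. Since $T$ has diameter at most $2\rho\nu$, if $\Nei(H,\delta)$ intersects $T$ then any fixed point $c_T\in T$ satisfies $\dist(c_T,H)\leq \delta+2\rho\nu=:\delta'$, a constant depending only on $\rho,\nu,\delta$. Moreover $c_T$ lies at distance $r\in[(j-1)\tau-2\rho\nu,\,j\tau+2\rho\nu]$ from the origin $o$, in particular $r=\Theta(j\tau)$ for $j$ large (and $r=O(1)$ for $j=1$, which only makes the bound easier). So it suffices to bound $\P[\dist(c_T,H)\leq \delta']$.

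Second, I would use the hyperbolic right-triangle formula to convert this distance condition into an angular condition. Drop the perpendicular from $c_T$ to $H$, meeting $H$ at the foot $f$; then $o,f,c_T$ form a right triangle with the right angle at $f$. If $\alpha$ denotes the angle at $o$ between the geodesic ray $\overrightarrow{oc_T}$ and the hyperplane $H$, the standard identity for a hyperbolic right triangle gives
\[
\sinh\bigl(\dist(c_T,H)\bigr) = \sinh(r)\,\sin(\alpha).
\]
Now write the random hyperplane $H$ through $o$ via its unit normal $n\in S^{d-1}$; if $u\in S^{d-1}$ is the unit tangent vector at $o$ pointing towards $c_T$, then $\alpha=\pi/2-\angle(u,n)$, hence $\sin(\alpha)=|\langle u,n\rangle|$. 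The condition $\dist(c_T,H)\leq \delta'$ thus becomes
\[
|\langle u,n\rangle| \;\leq\; \frac{\sinh(\delta')}{\sinh(r)}.
\]
Since $r\geq (j-1)\tau-2\rho\nu$, we have $\sinh(r)=\Omega(e^{j\tau})$, so the right-hand side is $O(e^{-j\tau})$.

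Third, I would estimate the probability that a uniform random unit vector $n\in S^{d-1}$ lies in the equatorial band $\{n:|\langle u,n\rangle|\leq \eps\}$. This probability equals the fraction of $(d-1)$-spherical measure of the band of angular half-width $\arcsin(\eps)$ around the great $(d{-}2)$-sphere perpendicular to $u$; an elementary computation shows it is $O(\eps)$ for every fixed $d\geq 2$ (for $d=2$ it is just two short arcs of length $2\arcsin(\eps)$). Plugging $\eps=\sinh(\delta')/\sinh(r)=O(e^{-j\tau})$ yields the claimed $O(e^{-j\tau})$ bound.

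There is no serious obstacle; the only step requiring care is the hyperbolic right-triangle identity $\sinh(\dist(c_T,H))=\sinh(r)\sin(\alpha)$, which is a direct application of the hyperbolic rule of sines to the right triangle $of c_T$, and the clean passage from "angle with $H$" to "coordinate against $n$", which uses that $H$ is a totally geodesic hyperplane through $o$ so the Euclidean angular measure in the Poincar\'e ball model at $o$ agrees with the intrinsic angular measure.
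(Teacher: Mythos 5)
Your proposal is correct and follows essentially the same route as the paper's proof: both reduce the event to the random hyperplane passing within a constant distance of a point at distance roughly $j\tau$ from the origin, convert this via hyperbolic right-triangle trigonometry into the condition that the normal lies in a spherical band of angular half-width $O\!\left(\sinh(O(1))/\sinh(j\tau)\right)=O(e^{-j\tau})$, and bound that band's measure by $O(\eps)$. Your use of a representative point $c_T$ and the identity $\sinh(\dist(c_T,H))=\sinh(r)\sin\alpha$ is just a slightly more direct packaging of the paper's enclosing ball $B^\star$ and tangent-cone half-angle computation.
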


\begin{proof}
First, note that since $\cT$ is a nice tiling, the circumscribed ball diameter $\tau=2\rho\nu$ is a constant, and it is also an upper bound on the diameter of each tile.
Throughout this proof we work in the Poincar\'e ball model centered at the
origin $o$. Let $T\in \cL_j$ be a tile intersected by $\Nei(H,\delta)$. Let
$p\in T$ be a point of $T$ realizing the distance to the origin, i.e.,
$\dist(p,o)\in [(j-1)\tau,j\tau)$; see Figure~\ref{fig:intersectingtile}. Since
$\diam(T)\leq \tau$, there is a point $p'\in T\cap \Nei(H,\delta)$ at
distance at most $\tau$ from $p$. Since $p'\in \Nei(H,\delta)$, there is a point
$p''\in H$ where $\dist(p',p'')\leq \delta$.
It follows that
$\dist(p'',o)\in[(j-1)\tau-\delta,j\tau+\tau+\delta)$. Since $p''\in H$,
the point $q$ defined by the intersection of the sphere $\partial B(o,j\tau)$ and the line $op''$ satisfies
$\dist(q,p'')\leq \tau+\delta$.
Notice that $\dist(p',q)\leq \tau+2\delta$, and the
tile $T$ has diameter at most $\tau$, therefore
$T\subset B(q,\,2\tau+2\delta)$. Thus we have shown that if $T$ is intersected by $\Nei(H,\delta)$, then there is a point $q\in \partial B(o,j\tau)$ such that $B(q,\,2\tau+2\delta)$ is intersected by $\Nei(H,\delta)$, or equivalently, $H$ intersects the ball $B^\star \eqdef B(q,\,2\tau+3\delta)$. Clearly, it is sufficient to prove that the
probability that $H$ intersects $B^*$ is $O(e^{-j\tau})$.

\begin{figure}[t]
\centering
\includegraphics[scale=0.8]{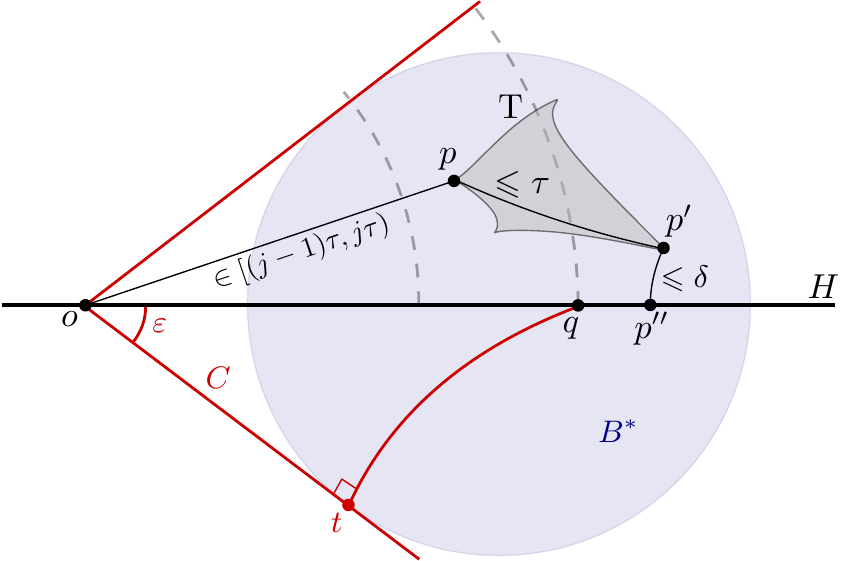}
\caption{A tile $T$ intersecting the neighborhood of the random line $H$ in the hyperbolic plane.}\label{fig:intersectingtile}
\end{figure}

Let $C$ be the cone (centered at $o$) touching $B^*$, and let $\eps$ be its half-angle. The hyperplane $H$ intersects the ball $B^*$ if and only if its normal is in the $\eps$-neighborhood of a great circle of the unit ($d-1$)-sphere, i.e., the normal of $H$ is in $\Nei_{S^{d-1}}(S^{d-2},\eps)$, where the distance on $S^{d-1}$ is the geodesic metric inherited from $\Reals^d$. In the $2$-flat given by $o,q$ and an arbitrary point $t$ where the line from $o$ touches $B^*$, the triangle $oqt$ has a right-angle at $t$, and $oq=j\tau$, and $qt=2\tau+3\delta$, therefore we have 

\[\eps =\arcsin\frac{\sinh(2\tau+3\delta)}{\sinh j\tau}<2\frac{\sinh(2\tau+3\delta)}{\sinh j\tau}=\frac{O(1)}{\Omega(e^{j\tau})} = O(e^{-j\tau}).\]

The $\eps$-neighborhood of a great circle in $S^{d-1}$ has volume
\begin{multline*}
V \eqdef \Vol(\Nei_{S^{d-1}}(S^{d-2},\eps)) = 2\int_{0}^{\sin \eps} \Vol\left(S^{d-2}\left(\sqrt{1-x^2}\right)\right)\mathrm{d}x\\
 < 2\int_{0}^{\sin \eps} \Vol(S^{d-2})\mathrm{d}x = O(\sin \eps) = O(\eps) = O(e^{-j\tau}),
\end{multline*}
where $S^{d-2}(\sqrt{1-x^2})$ is the $(d-2)$-dimensional Euclidean sphere of radius $\sqrt{1-x^2}$, or more formally, $S^{d-2}(\sqrt{1-x^2})\eqdef \bd B_{\Reals^{d-1}}(o,\sqrt{1-x^2})$.
Therefore the probability that $H$ intersects a given tile from $\cL_r$ is $V/\Vol(S^{d-1}) = O(e^{-j\tau})/\Omega(1)=  O(e^{-j\tau}).$
\end{proof}

\subsection{A hyperbolic separator theorem}

The following is the key lemma in the proof of Theorem~\ref{thm:sep}. It is inspired by the work of Fu~\cite{Fu2011}, who uses a similar approach to find clique-based separators in $\Reals^d$. For a tiling $\cT$ of $\Hyp^d$ and a given point set $P\subseteq \Hyp^d$, we say that a tile $T\in \cT$ is non-empty if $P\cap T \neq \emptyset$. The weight of a set $S$ of tiles (with respect to $P$) is defined as $\gamma(S)=\sum_{T\in S} \log(|P\cap T|+1)$. 

\begin{restatable}{lemma}{lemweight}\label{lem:weight}
Let $d\geq 2$ and $\delta>0$ be fixed constants, and let $P$ be a (multi)set of $n$ points in $\Hyp^d$, and let $p\in \Hyp^d$ be an arbitrary point. Let $\cT$ be a nice tiling of $\,\Hyp^d$, and let $H$ be a hyperplane through $p$ chosen uniformly at random. Let $S\subset \cT$ be the set of non-empty tiles intersected by the $\delta$-neighborhood of $H$, that is, $S\eqdef \{T\in \cT \mid T \cap P \neq \emptyset \text{ and } T\cap \Nei(H,\delta)\neq \emptyset\}$.
\begin{enumerate}
	\item[(i)] If $d=2$, then $\E[|S|]=O(\log n)$ and $\E[\gamma(S)] = O(\log^2 n)$.
	\item[(ii)] If $d\geq 3$, then $\E[|S|]=O(n^{1-1/(d-1)})$ and $\E[\gamma(S)]=O(n^{1-1/(d-1)})$.
\end{enumerate}
\end{restatable}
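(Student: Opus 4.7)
The plan is to partition $\cT$ into concentric layers around $p$ and bound the expected contribution of each layer separately. Since $\Hyp^d$ is homogeneous, first I would apply an isometry sending $p$ to the origin (and simultaneously to $\cT$), so it suffices to treat $p=o$. Set $\tau=2\rho\nu$ and partition $\cT$ into layers $\cL_j$, the tiles whose distance to $o$ lies in $[(j-1)\tau, j\tau)$. Then Proposition~\ref{prop:tilesballs}(ii) gives $|\cL_j|=\Theta(e^{(d-1)j\tau})$, and Lemma~\ref{lem:intersectingtile} gives that each $T\in\cL_j$ is hit by $\Nei(H,\delta)$ with probability $O(e^{-j\tau})$.

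Writing $m_j$ for the number of points of $P$ in $\bigcup\cL_j$ and $N_j$ for the number of non-empty tiles in $\cL_j$, linearity of expectation yields
\[
\E[|S|]\leq C\sum_{j\geq 0} e^{-j\tau} N_j, \qquad \E[\gamma(S)]\leq C\sum_{j\geq 0} e^{-j\tau} w_j,
\]
where $w_j=\sum_{T\in\cL_j,\,T\cap P\neq\emptyset}\log(|T\cap P|+1)$. I would use the elementary bound $N_j\leq\min(|\cL_j|,m_j)$, and, by concavity of $\log$ combined with monotonicity of the map $N\mapsto N\log(m/N+1)$ (easy to verify from its derivative), the bound $w_j\leq\min(|\cL_j|,m_j)\cdot\log\bigl(m_j/\min(|\cL_j|,m_j)+1\bigr)$.

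For the size bound I would pick a threshold $J$ with $e^{(d-1)J\tau}=\Theta(n)$ and split the sum at $j=J$. Layers $j\leq J$ contribute at most $\sum_{j\leq J} e^{-j\tau}|\cL_j|=\Theta\bigl(\sum_{j\leq J} e^{(d-2)j\tau}\bigr)$, which equals $O(\log n)$ for $d=2$ and, as a geometric series, $O(n^{1-1/(d-1)})$ for $d\geq 3$. Layers $j>J$ contribute at most $\sum_{j>J} e^{-j\tau} m_j\leq e^{-J\tau} n$, which is $O(1)$ for $d=2$ and $O(n^{1-1/(d-1)})$ for $d\geq 3$. Summing gives the bounds on $\E[|S|]$.

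The main obstacle will be sharpening the weight analysis to avoid a stray $\log n$ factor (naively bounding $\log(|T\cap P|+1)\leq \log n$ would introduce one). I would split each layer into Case~I, where $m_j\leq |\cL_j|$, so that $w_j\leq m_j\log 2$ and the contribution mirrors the size bound above; and Case~II, where $m_j>|\cL_j|$, which forces $j\leq J$ (since $|\cL_j|\leq n$) and $w_j\leq |\cL_j|(1+\log \alpha_j)$ with $\alpha_j\eqdef m_j/|\cL_j|\geq 1$. The ``$1$'' summand contributes $\sum_{j\leq J} e^{(d-2)j\tau}$, already within the required order. The delicate step is to maximize $\sum_j e^{(d-2)j\tau}\log\alpha_j$ under the budget $\sum_j \alpha_j|\cL_j|\leq n$ with $\alpha_j\geq 1$: a Lagrange multiplier calculation shows the optimum puts $\alpha_j$ proportional to $e^{-j\tau}$ on layers $j\leq J^*$ for an appropriate $J^*\tau\approx\log n/(d-1)$, and a direct evaluation then yields a geometric sum of order $n^{(d-2)/(d-1)}$ for $d\geq 3$ and an arithmetic progression of order $\log^2 n$ for $d=2$, matching the claimed bounds.
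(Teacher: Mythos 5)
Your proposal is correct, and it reaches the paper's bounds through the same skeleton (layering $\cL_j$ around $p$, Lemma~\ref{lem:intersectingtile} for the per-tile hitting probability $O(e^{-j\tau})$, Proposition~\ref{prop:tilesballs}(ii) for $|\cL_j|$), but it executes the extremal step differently. The paper handles the worst-case point distribution by a rearrangement argument: it swaps tile contents so that no point-mass can be moved inward profitably, which forces all layers below the outermost non-empty layer $J$ to be full and yields $J\tau\leq\frac{1}{d-1}(\ln n+O(1))$; within a layer it equalizes by concavity, for $d=2$ it then simply bounds $\gamma(S)\leq|S|\log(n+1)$, and for $d\geq 3$ it kills the logarithm with the ad hoc inequality $\log(1+x)\leq 1+x^{\frac{d-2}{d-1}-\eps}$ followed by a geometric sum. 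You instead avoid any rearrangement: per layer you use Jensen plus the monotonicity of $N\mapsto N\log(m/N+1)$ to get $w_j\leq\min(|\cL_j|,m_j)\log\bigl(m_j/\min(|\cL_j|,m_j)+1\bigr)$, split at the threshold $e^{(d-1)J\tau}=\Theta(n)$ (bounding far layers by their point count, which replaces the paper's "no non-empty tiles beyond $J$" conclusion), and then solve the constrained maximization of $\sum_j e^{(d-2)j\tau}\log\alpha_j$ under the mass budget $\sum_j\alpha_j|\cL_j|\leq n$ by a KKT/Lagrange computation, whose optimizer $\alpha_j\propto e^{-j\tau}$ gives $O(\log^2 n)$ for $d=2$ and $O(n^{1-1/(d-1)})$ for $d\geq 3$ in one uniform calculation. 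What your route buys is a cleaner, dimension-uniform treatment that makes the extremal configuration explicit and dispenses with both the swap argument and the $\eps$-dependent logarithm bound; what the paper's route buys is a shorter, purely elementary derivation (no optimization machinery), at the cost of the slightly opaque rearrangement step and separate case handling. Only two small points deserve a sentence in a full write-up: the implication "$m_j>|\cL_j|$ forces $j\leq J$" holds only up to an additive constant shift of $J$ (via the lower bound $|\cL_j|\geq c_3e^{(d-1)j\tau}$), and the Lagrange step should be justified by noting that the objective is concave, the feasible set is compact and convex, and only finitely many layers ($j\leq J+O(1)$) can participate, so the KKT characterization and the continuous relaxation legitimately upper-bound the discrete quantity.
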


\begin{proof}
Let $\exp$ be the exponential function of base $e$. In this proof, we introduce some constants $c_1,c_2,\dots$ that may depend on $d$. If the parameters of the nice tiling $\cT$ are $\rho$ and $\nu$, then let $\tau\eqdef 2\rho\nu$ denote the circumscribed diameter. For a positive integer $j$, we set $\cL_j\eqdef\{T\in \cT \;|\; \dist(p,T)\in [(j-1)\tau, j\tau)\}$. Notice that the expected value of $\gamma(S)$ depends only on the distribution of points over the tiles. Let $n_T$ be the number of points from $P$ in the tile $T$. Note that for an empty tile $T$ we have $\log(n_T+1)=0$, thus
\[\E[\gamma(S)]=\sum_{T\in \cT} \log(n_T+1) \cdot \Pr[T \cap \Nei(H,\delta) \neq \emptyset]\\
 = \sum_{j=1}^{\infty} \; \sum_{T \in \cL_j} \log(n_T+1) \cdot \Pr[T \cap \Nei(H,\delta) \neq \emptyset].\]
By Lemma~\ref{lem:intersectingtile}, if $T\in \cL_j$, then $\Pr[T \cap \Nei(H,\delta) \neq \emptyset] \leq c_1e^{-j\tau}$ for some constant $c_1$, thus
\begin{equation}\label{eq:basicsum}
\E[\gamma(S)]\leq \sum_{j=1}^{\infty} \; \sum_{T \in \cL_r} c_1e^{-j\tau}\log(n_T+1).
\end{equation}

From now on, we concentrate on maximizing the function on the right hand side of~\eqref{eq:basicsum}, thus providing an upper bound on $\E[\gamma(S)]$.
Note that $c_1e^{-j\tau} \log(n_T+1)$ is a decreasing function of $j$. Suppose that $\dist(p,T)<\dist(p,T')$ and $n_T<n_{T'}$. The sum on the right hand side of~\eqref{eq:basicsum} is increased  or remains unchanged if we swap the content of $T$ and $T'$, i.e., move all points of $P\cap T$ to somewhere in $T'$ and vice versa. Without loss of generality, assume that $P$ is a point set where no such swaps are possible. Let $J$ be the largest index $j$ where $\cL_j$ contains a non-empty tile. Notice that if $j<J$, then all tiles in $\cL_{j}$ are non-empty. Since we have a ball of radius $(J-1)\tau$ that contains only non-empty tiles and at most $n$ points, this gives an upper bound on $J\tau$ by Proposition~\ref{prop:tilesballs} (i):

\begin{equation}\label{eq:Rbound}
J\tau \leq \frac{1}{d-1}(\ln n+c_2)
\end{equation}


Furthermore, if we fix $n_j=\sum_{T\in \cL_j} n_T$, the value of $\sum_{T\in \cL_j} \log(n_T+1)$ is maximized if the values $n_T$ are equal (i.e., we now allow fractional values for $n_T$). From now on, assume that for all $j$ and for all $T\in \cL_j$ we have $n_T=\frac{n_j}{|\cL_j|}$. By Proposition~\ref{prop:tilesballs} (ii), we know that $c_3e^{(d-1)j\tau} \leq |\cL_j| \leq c_4e^{(d-1)j\tau}$ for some constants $c_3,c_4$. Therefore if $T\in \cL_r$ with $r\leq R$, then we have $n_T\leq \frac{n_j}{c_3e^{(d-1)j\tau}}$. Hence,
\begin{align}
\begin{split}\label{eq:weightstop}
\E[\gamma(S)] &\leq \sum_{j=1}^J\sum_{T \in \cL_j} c_1e^{-j\tau}\log(n_T+1) \leq 
\sum_{j=1}^J |\cL_j| \cdot c_1e^{-j\tau}\log\left(\frac{n_j}{|\cL_j|}+1\right)\\
&\leq \sum_{j=1}^J c_1c_4e^{(d-2)j\tau}\,\log\left(\frac{n_j}{c_3e^{(d-1)j\tau}} + 1\right)
= O\left( \sum_{j=1}^J e^{(d-2)j\tau}\log\left(\frac{n_j}{e^{(d-1)j\tau}}+1\right)\right).
\end{split}
\end{align}

The expectation of $|S|$ can be bounded the same way, without the logarithmic term. For $d=2$, we get
\[\E[|S|]=O\left( \sum_{j=1}^J e^{(d-2)j\tau}\right)=O\left( \sum_{j=1}^J e^0\right)=O(J)=O(\log n)\]
by inequality~\eqref{eq:Rbound}. For the weight bound in case of $d=2$, we know that the weight of each tile is at most $\log (n+1)$, and $\E[|S|]=\log n$, therefore $\E[\gamma(S)]\leq \E[|S|]\log (n+1)=O(\log^2 n)$.
If $d\ge 3$, we have 

\begin{equation}\label{eq:sizebound3}
\E[|S|]\!=\!O\!\left(\sum_{j=1}^J e^{(d-2)j\tau}\!\right)\!
  \!=\! O\!\left( \exp\!\big((d-2)J\tau\big)\right)
  \!=\! O\!\left(\!\exp\!\left(\frac{d-2}{d-1}(\ln n + c_2)\right)\!\!\right)
 \!=\!O\!\left(n^{1-1/(d-1)}\right)\!.
 \end{equation}

We now bound the weight from~\eqref{eq:weightstop} in case of $d\ge 3$. Let $\eps>0$ be a small positive constant. We upper bound $n_j$ by $n$, and $\log(x+1)$ by $1+x^{\frac{d-2}{d-1}-\eps}$ (notice that the latter uses $d\ge 3$).

\begin{multline*}
\E[\gamma(S)] = O\left( \sum_{j=1}^J e^{(d-2)j\tau}\log\left(\frac{n_j}{e^{(d-1)j\tau}}+1\right)\right)
= O\left( \sum_{j=1}^J e^{(d-2)j\tau}\left(1+\left(\frac{n}{e^{(d-1)j\tau}}\right)^{\frac{d-2}{d-1}-\eps}\right)\right)\\
=n^{\frac{d-2}{d-1}-\eps} \cdot O\left(\sum_{j=1}^J \exp(\eps(d-1)j\tau)\right)+ O\!\left(\sum_{j=1}^J e^{(d-2)j\tau}\!\right)\!.
\end{multline*}
We apply the bound of~\eqref{eq:sizebound3} for the second term. By computing the sum in the first term and applying inequality~\eqref{eq:Rbound} we get

\begin{multline*}
\E[\gamma(S)] = n^{\frac{d-2}{d-1}-\eps} \cdot O\left(\exp\!\left(\eps(d-1)J\tau\right)\right)+ O(n^{1-1/(d-1)})\\
 =n^{\frac{d-2}{d-1}-\eps}  \cdot  O\left(\exp\left(\eps\left(\ln n +c_2\right)\right)\right)+ O(n^{1-1/(d-1)})
=O\left(n^{1-1/(d-1)}\right).%
\tag*{$\qed$} 
\end{multline*}
\renewcommand{\qedsymbol}{}
\end{proof}

The \emph{centerpoint} of a finite point set $P\subset \Hyp^d$ is a point $p\in \Hyp^d$ such that for any hyperplane $H$ through~$p$ the two open half-spaces with boundary $H$ both contain at most $\frac{d}{d+1}n$ points from $P$. Note that a centerpoint of $P$ is not necessarily a point in $P$.

\begin{lemma}\label{lem:balance}
Every finite point set of $\Hyp^d$ has a centerpoint, and it can be computed in polynomial time.
\end{lemma}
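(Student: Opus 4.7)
The plan is to reduce to the classical Euclidean centerpoint theorem by passing to the projective (Klein) model of $\Hyp^d$, in which $\Hyp^d$ is identified with the open Euclidean unit ball $B\subset\Reals^d$ and hyperbolic hyperplanes are exactly the intersections of Euclidean affine hyperplanes with $B$. Under this identification, every hyperbolic closed half-space has the form $K\cap B$ for some Euclidean closed half-space $K$, so in particular it is convex as a subset of $\Reals^d$. This dictionary lets me mimic the standard Helly-based proof nearly verbatim.

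First, I would reformulate centerpoints as an intersection: a point $p\in B$ is a centerpoint of $P$ if and only if $p\in C$ for every closed hyperbolic half-space $C$ with $|C\cap P|>\frac{d}{d+1}n$. If $p$ is not a centerpoint, then some hyperplane through $p$ has an open side containing more than $\frac{d}{d+1}n$ points, and translating that hyperplane a little away from $p$ gives such a closed half-space that avoids $p$. Conversely, if a heavy closed half-space $C$ avoids $p$, translating its bounding hyperplane to pass through $p$ only moves points onto $C$'s side, so it witnesses that $p$ is not a centerpoint.

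Next, I would invoke Helly's theorem on the family $\cF$ of all such heavy closed hyperbolic half-spaces. For any $d+1$ members $C_1,\dots,C_{d+1}\in\cF$, each omits fewer than $n/(d+1)$ points of $P$, so by the union bound $\bigl|P\setminus\bigcap_i C_i\bigr|<n$, whence $\bigcap_i C_i$ contains a point of $P\subset B$. Each member of $\cF$ is convex in $\Reals^d$, and only finitely many members are combinatorially distinct (two half-spaces being equivalent if they contain the same subset of $P$), so the classical Helly theorem in $\Reals^d$ yields a point common to all of $\cF$. That common point automatically lies in $B$ because every $C\in\cF$ is a subset of $B$, and by the previous step it is a centerpoint.

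For the algorithmic part, Klein coordinates reduce the hyperbolic centerpoint problem to an ordinary Euclidean centerpoint problem on $P\subset B\subset \Reals^d$, which for fixed $d$ is solvable in polynomial time by standard methods (for example, iterated linear programming over the polynomially many combinatorial half-space types, or the algorithm of Clarkson et al.). The main conceptual hurdle is merely trusting the model translation; once hyperbolic half-spaces are recognized as Euclidean half-spaces restricted to $B$, both the existence proof and the algorithm are essentially textbook.
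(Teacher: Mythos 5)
Your proposal is correct and follows essentially the same route as the paper: both pass to the Beltrami--Klein model, where hyperbolic hyperplanes and half-spaces are exactly Euclidean ones restricted to the open unit ball, and then reduce existence and computation to the Euclidean centerpoint theorem and a polynomial-time Euclidean centerpoint algorithm --- the paper simply cites these facts, whereas you inline the classical Helly argument. The one imprecision is your Helly step: applying Helly to finitely many combinatorial representatives of the heavy half-spaces does not by itself yield a point common to the entire infinite family $\cF$; the standard repair is to apply Helly to the finitely many sets $\mathrm{conv}(C\cap P)$ over heavy subsets $C\cap P$ of $P$, since each such hull is contained in every heavy half-space containing that subset, and a point in all these hulls is then a centerpoint.
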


\begin{proof}
Let $V_{BK}$ be the point set corresponding to $V$ in the Beltrami-Klein model of $\Hyp^d$, that is, $V_{BK}$ is a point set in the unit ball around the origin in $\Reals^d$. Let $p\in D$ be the centerpoint of $V_{BK}$ in the traditional (Euclidean) sense. It follows that any hyperplane $H$ through $p$ is a $d/(d+1)$-balanced separator of $V_{BK}$, i.e., on both sides of $H$ there are at most $\frac{d}{d+1}n$ points from $V_{BK}$. In the Beltrami-Klein model, Euclidean hyperplanes are also hyperplanes in the hyperbolic sense, therefore $p$ is a centerpoint in the hyperbolic sense as well. Converting $V$ to $V_{BK}$ takes linear time. Computing the centerpoint in Euclidean space can be done in $O(n^d)$ time as a consequence of~\cite[Theorem 4.3]{edelsbrunner}. Converting the resulting point back to $\Hyp^d$ takes constant time.
\end{proof}

\begin{remark}
For certain applications, having the centerpoint computed more efficiently may be desirable. If a weaker balance factor for the separator theorem is sufficient, then computing an approximate centerpoint suffices. In such cases, one can use an $O(d^{O(\log d)}n)$ time deterministic algorithm for finding an approximate centerpoint~\cite{DBLP:journals/dcg/MulzerW13}.
\end{remark}

We are finally in a position to prove Theorem~\ref{thm:sep}.

\begin{proof}[Proof of Theorem~\ref{thm:sep}]
Let $G=(V,E)$ be our input graph with embedding $\eta$, and let $P\eqdef\eta(V)$.
We begin by computing a centerpoint of $P$ in $\poly(n)$ time according to Lemma~\ref{lem:balance}. Let $p$ be the resulting point.

Next, we fix a tiling $\cT$ with tile diameter $\rho-\eps$ for some small positive constant $\eps$, so that any pair of points in the same tile are necessarily connected; this tiling can be constructed using Lemma~\ref{lem:tilings} (ii). Let $\cP$ be the vertex partition of $G$ corresponding to this tiling, i.e., for each tile we create a partition class in $\cP$, which necessarily forms a clique in $G$. Let $H$ be a uniform random hyperplane through $p$, and let $\bar{H}$ denote the its $\rho\nu$-neighborhood: $\bar{H}\eqdef \Nei(H,\rho\nu)$.
Lemma~\ref{lem:weight} shows that the set $S$ of cliques given by tiles that intersect $\bar{H}$ has expected size and weight as desired. Due to the properties of a centerpoint, both sides of $\bar{H}$ contain at most $\frac{d}{d+1}n$ points. Moreover, the set $\bigcup S$ is a separator. To see this, assume for the purpose of contradiction that there is a pair $x,y\in P$ on different sides of $\bar{H}$ that are connected. Then the geodesic $xy$ has a unique intersection $q$ with $H$, and since $x,y \not \in \bar{H}$, both $qx$ and $qy$ are longer than $\rho\nu$. Hence, $\dist(x,y)>2\rho\nu$, so $x$ and $y$ cannot be connected; this is a contradiction.

By Markov's inequality, with probability $1/2$ a random separator will have at most twice the expected weight, and similarly with probability $1/2$ it has at most twice the expected size. Observe that $|S|\leq \gamma(|S|)\leq |S|\log(n+1)$, so in case of $d\ge 3$ having weight $O\big(n^{1-1/(d-1)}\big)$ guarantees that the size  is at most $O\big(n^{1-1/(d-1)}\big)$. For $d=2$, having size $O(\log n)$ guarantees that the weight is $O(\log^2 n)$. We can compute the size and weight of a separator in polynomial time. Consequently, we can find a separator of the desired size and weight in $\poly(n)$ expected time.
\end{proof}


\section{Treewidth bounds and algorithmic applications}\label{sec:twhyp3}

\paragraph*{Partitions and treewidth basics.}\label{subsec:twbasic}
Let us begin with some basic notions and terminology related to $\cP$-flattened treewidth.
Let $G=(V,E)$ be a simple graph. A \emph{clique-partition} of $G$ is a partition $\cP$ of $V$ where each partition class $C\in \cP$  forms a clique in $G$. A \emph{$\kappa$-partition} is a partition $\cP$ of $V$ where each partition class $C\in \cP$ induces a connected subgraph of $G$ that can be covered by $\kappa=O(1)$ cliques; for example, clique partitions are $1$-partitions. It has been observed in~\cite{frameworkpaper} that $\kappa$-partitions with $1<\kappa=O(1)$ can be used instead of clique partitions for many algorithms; we will also make use of this option.

For a graph $G$, let $I\subset V(G)$ be a maximal independent set. We create a partition class for each $v\in I$, and assign vertices in $V(G)\setminus I$ to the class of an arbitrary neighbor inside $I$. A partition $\cP$ created this way is called a \emph{greedy partition}. Note that greedy partitions can be found in polynomial time in any graph. For example, greedy partitions are $\kappa$-partitions in constant-dimensional Euclidean unit ball graphs~\cite{frameworkpaper}.

The \emph{$\cP$-contraction} of $G$ is the graph obtained by contracting all edges induced by each partition class, and removing parallel edges; it is denoted by $G_\cP$. 
The \emph{weight} of a partition class $C$ is defined as $\log(|C|+1)$. Given a set $S\subset \cP$, its weight $\gamma(S)$ is defined as the sum of the class weights within, i.e., $\gamma(S)\eqdef \sum_{C\in S} \log(|C|+1)$. Note that the weights of the partition classes define vertex weights in the contracted graph $G_\cP$.

A \emph{tree decomposition} of a graph $G=(V,E)$
is a pair $(T,\sig)$ where $T$ is a tree and $\sig$ is a mapping from the vertices of $T$ to subsets of $V$
called \emph{bags}, with the following properties. Let $\bags(T,\sig) \eqdef \{
\sig(u): u \in V(T) \}$ be the set of bags associated to the vertices of $T$. Then
we have: (1) For any vertex $u\in V$ there is at least one bag in $\bags(T,\sig)$ containing it. (2) For any edge $(u,v)\in E$ there is at least one bag in $\bags(T,\sig)$ containing both $u$ and $v$. (3) For any vertex $u\in V$ the collection of bags in $\bags(T,\sig)$ containing $u$ forms a subtree of~$T$.
The \emph{width} of a tree decomposition is the size of its
largest bag minus~1, and the \emph{treewidth} of a graph $G$ equals the minimum width of a tree
decomposition of $G$. 
We talk about pathwidth and path decomposition if $T$ is a path.

We will need the notion of {\em weighted treewidth}~\cite{EijkhofBK07}.
Here each vertex has a weight, and the
{\em weighted width} of a tree decomposition is the maximum over the bags of the
sum of the weights of the vertices in the bag (note: without the $-1$).
The {\em weighted treewidth} of a graph is the minimum weighted width over its tree decompositions.
Now let $\cP$ be a partition of a given graph $G$.
We apply the concept of weighted treewidth
to $G_{\cP}$, where we assign each vertex $C$ of $G_\cP$ the weight~$\log(|C|+1)$, and refer to this weighting whenever we talk about the weighted treewidth of a contraction $G_\cP$. For any given $\cP$, the weighted treewidth of $G_\cP$ with the above weighting is referred to as the \emph{$\cP$-flattened treewidth} of $G$.

\subsection{Treewidth in \texorpdfstring{$\Hyp^2$}{the hyperbolic plane}}

Although our bound on the weight of the separator in $\Hyp^2$ in Theorem~\ref{thm:sep} is optimal up to constant factors ---it is attained for the input where each tile in a hyperbolic disk of radius $\sim\ln(\sqrt{n})$ contains $\sqrt{n}$ points--- the bound for $\Hyp^2$ in Corollary~\ref{cor:shallowpathwidth} is far from optimal. We could use Theorem~\ref{thm:sep} to directly design a divide-and-conquer algorithm for \IS, and the running time would be $2^{O(\log^3 n)}$ in $\NUBG_{\Hyp^2}(\rho,\nu)$, or $2^{O(\log^2 n)}$ in $\SNUBG_{\Hyp^2}(\rho,\nu)$. Both of these running times can be significantly improved by a better bound on treewidth, as stated in Theorem~\ref{thm:treewidth}.
We prove Theorem~\ref{thm:treewidth} in three stages: first, we show that the neighborhood graph of a finite tile set has treewidth $O(\log n)$, then we transfer this bound to shallow graphs, and finally, using the following lemma, we transfer the bound to non-shallow graphs.


\begin{restatable}{lemma}{lemGPisSDTG}\label{lem:G_PisSDTG}
Consider a graph $G\in \NUBG_{\Hyp^d}(\rho,\nu)$ with some fixed embedding, and let $\cP$ be the partition of $V(G)$ given by a $(\rho_\cT,\nu_\cT)$-nice tiling of ${\Hyp^d}$. Then $G_\cP\in \SNUBG_{\Hyp^d}(\rho,\nu',k)$ where $\nu'=\nu+2\rho_\cT\nu_\cT/\rho$ and $k=\Vol(\rho\nu'+2\rho_\cT\nu_\cT)/\Vol(\rho_\cT)$, where $\Vol(r)$ denotes the volume of a ball of radius $r$ in $\Hyp^d$. 
\end{restatable}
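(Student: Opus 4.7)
The plan is to construct an explicit embedding $\eta': V(G_\cP) \to \Hyp^d$ witnessing that $G_\cP \in \SNUBG_{\Hyp^d}(\rho, \nu', k)$. For each class $C \in \cP$, let $T_C$ denote the tile to which $C$ corresponds, and pick any representative $v_C \in C$; define $\eta'(C) \eqdef \eta(v_C)$. Since $C$ consists exactly of vertices whose $\eta$-image lies in $T_C$, we have $\eta'(C) \in T_C$, and distinct classes land in distinct tiles.

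Next, I would check the two NUBG conditions against this embedding. For the ``close $\Rightarrow$ edge'' direction, if $\dist(\eta'(C), \eta'(C')) < 2\rho$, then the representatives $v_C, v_{C'}$ are within $2\rho$ under $\eta$, so the NUBG definition of $G$ forces $(v_C, v_{C'}) \in E(G)$, which immediately gives $(C, C') \in E(G_\cP)$. For the ``far $\Rightarrow$ non-edge'' direction, suppose $\dist(\eta'(C), \eta'(C')) > 2\rho\nu'$ and yet $(C,C') \in E(G_\cP)$; then some $v \in C$ and $w \in C'$ satisfy $(v, w) \in E(G)$, forcing $\dist(\eta(v), \eta(w)) \leq 2\rho\nu$. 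Combining this with $\diam(T_C), \diam(T_{C'}) \leq 2\rho_\cT\nu_\cT$ via the triangle inequality gives $\dist(\eta'(C), \eta'(C')) \leq 2\rho\nu + 4\rho_\cT\nu_\cT = 2\rho\nu'$, a contradiction. Thus $\nu' = \nu + 2\rho_\cT\nu_\cT/\rho$ is exactly the required inflation.

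To finish, I would verify $(\rho, k)$-shallowness. Fix an arbitrary ball $B$ of radius $\rho$ in $\Hyp^d$ and count the classes $C$ with $\eta'(C) \in B$. Each such $T_C$ intersects $B$, hence lies inside the concentric ball of radius $\rho + 2\rho_\cT\nu_\cT \leq \rho\nu' + 2\rho_\cT\nu_\cT$. Because a $(\rho_\cT, \nu_\cT)$-nice tiling assigns each tile an interior-disjoint inscribed ball of radius at least $\rho_\cT$, a standard volume-packing argument bounds the number of contributing tiles by $\Vol(\rho\nu' + 2\rho_\cT\nu_\cT)/\Vol(\rho_\cT) = k$.

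There is no serious conceptual obstacle here: the lemma essentially formalizes the intuition that contracting each tile's vertex set to a single representative point produces a shallow NUBG, with a mild inflation of the noise parameter (to absorb the tile diameter into the distance distortion) and a packing bound on the number of representatives inside any fixed ball. The most delicate bookkeeping is the arithmetic yielding $\nu' = \nu + 2\rho_\cT\nu_\cT/\rho$, which comes out of applying the triangle inequality across two tiles of diameter at most $2\rho_\cT\nu_\cT$; the rest is direct from the definitions of a nice tiling and of the NUBG class.
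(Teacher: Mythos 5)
Your proposal is correct and follows essentially the same route as the paper's proof: choose one representative point per tile as the embedding of the contracted vertex, verify the two noisy-ball conditions by a triangle inequality that absorbs two tile diameters (yielding exactly $\nu'=\nu+2\rho_\cT\nu_\cT/\rho$), and bound the number of representatives in a ball by packing the tiles' interior-disjoint inscribed balls of radius $\rho_\cT$. The only cosmetic difference is that the paper checks shallowness against balls of radius $\rho\nu'$ rather than $\rho$, which is why its $k$ has the slightly larger radius $\rho\nu'+2\rho_\cT\nu_\cT$; your radius-$\rho$ count is bounded by the same $k$, so the statement follows either way.
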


\begin{proof}
First, we define an embedding $\eta_\cP$ for $G_\cP$ by assigning each vertex $C\in \cP$ to a point $p_C \in \eta(C)$. We show that $G_\cP\in \SNUBG_{\Hyp^d}(\rho,\nu',k)$ for the embedding $\eta_\cP$.

Clearly any pair of vertices at distance less than $2\rho$ is connected. Now suppose for contradiction that a pair of points $p_i$ and $p_j$ in tiles $T_i$ and $T_j$ have distance more than $2\rho\nu'=2\rho\nu+4\rho_\cT\nu_\cT$. Observe that $T_i \subset B(p_i,2\rho_\cT\nu_\cT)$, and similarly $T_j\subset B(p_j,2\rho_\cT\nu_\cT)$. Since $\dist(p_i,p_j)>2\rho\nu+4\rho_\cT\nu_\cT$, any pair of points in $B(p_i,2\rho_\cT\nu_\cT)$ and $B(p_j,2\rho_\cT\nu_\cT)$ must have distance at least $2\rho\nu$. Hence, no pair of points in $T_i$ and $T_j$ is connected, so the corresponding vertices in $G$ are also not connected. But this contradicts that $G\in \NUBG_{\Hyp^d}(\rho,\nu)$ for the embedding $\eta$.

To prove that  $\eta_\cP$ is a shallow embedding, consider a point $x\in {\Hyp^d}$ with a circumscribed ball $B(x,\rho\nu')$. The points of $\eta(\cP)$ intersected by this ball all lie in different tiles of $\cT$. The inscribed balls of all of these tiles are centered somewhere in $B(x,\rho\nu',2\rho_\cT\nu_\cT)$, and these balls are disjoint, therefore there cannot be more than
  $\Vol(\rho\nu'+2\rho_\cT\nu_\cT)/\Vol(\rho_T)=k$ such inscribed balls. It follows that $|B(x,\rho\nu')\cap \eta(\cP)|\leq k$.
\end{proof}


The first stage of the proof (for neighborhood graphs) relies crucially on the isoperimetric inequality~\cite{teufel1991generalization}, which states that for a simple closed curve of length $L$ bounding an area $A$ in the hyperbolic plane $\Hyp^2$, we have $L^2 \geq 4\pi A + A^2$
where equality is attained only for a geodesic circle. In fact, we only need the following simple corollary.

\begin{corollary}\label{cor:isoperim}
Let $\phi$ be a simple closed curve in $\Hyp^2$, and let $\phi_\myin$ denote the region enclosed by $\phi$. Then $\len(\phi)\geq \mathrm{area}(\phi_\myin)$.
\end{corollary}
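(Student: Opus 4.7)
The corollary is an immediate consequence of the isoperimetric inequality stated just before it, so my plan is essentially a one-line deduction. Writing $L = \len(\phi)$ and $A = \mathrm{area}(\phi_\myin)$, the isoperimetric inequality in $\Hyp^2$ gives
\[
L^2 \;\geq\; 4\pi A + A^2.
\]
Since $A \geq 0$, we have $4\pi A \geq 0$, and therefore $L^2 \geq A^2$. Because both $L$ and $A$ are nonnegative, taking square roots yields $L \geq A$, which is exactly the claimed inequality.

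\textbf{What I would check.} The only substantive thing to verify is that the isoperimetric inequality applies to $\phi$ as stated. Since $\phi$ is a simple closed curve, it bounds a well-defined region $\phi_\myin$ of finite area (by the hyperbolic Jordan curve theorem, which works in $\Hyp^2$ exactly as in $\Reals^2$ since $\Hyp^2$ is homeomorphic to $\Reals^2$). Strictly speaking one also wants $\phi$ to be rectifiable so that $\len(\phi)$ is defined; if $\phi$ is not rectifiable then $L = \infty$ and the inequality $L \geq A$ holds trivially, so there is nothing to prove in that degenerate case.

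\textbf{Main obstacle.} There isn't really one — the whole content sits in the cited isoperimetric inequality $L^2 \geq 4\pi A + A^2$, which is the nontrivial statement. The corollary just throws away the $4\pi A$ summand (which is the sharp term in the Euclidean regime of small $A$) and keeps the $A^2$ summand (which reflects the exponential growth of hyperbolic balls and is sharp in the regime of large $A$). So the corollary is a weakened, convenient form of the full isoperimetric inequality that isolates the linear-in-$A$ lower bound on $L$ and will suffice for the subsequent treewidth argument.
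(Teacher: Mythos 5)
Your deduction is correct and is exactly the argument the paper intends: drop the nonnegative $4\pi A$ term in $L^2 \geq 4\pi A + A^2$ and take square roots of the nonnegative quantities $L$ and $A$. The paper gives no separate proof of the corollary, treating it as immediate from the cited isoperimetric inequality, so your proposal matches the paper's approach.
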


Consider a tiling $\cT$ given by Lemma~\ref{lem:tilings}(i). Notice that the neighborhood graph of any tile set $S\subset \cT$ is planar. A \emph{hole} for a set of tiles $S\subset \cT$ is a finite set $S'\subset \cT\setminus S$ such that $\partial S'$ is a closed curve and $\partial S' \subset \partial S$.
An \emph{outerplanar} or $1$-\emph{outerplanar} graph is a planar graph that has a plane embedding where all vertices lie on the outer face. For $k\geq 2$, a $k$-\emph{outerplanar} graph is a plane graph (i.e., a planar graph with a fixed embedding) where removing vertices of the outer face leads to a $(k-1)$-outerplanar graph.

\begin{lemma}\label{lem:layerpeel}
Let $\cT$ be a regular tiling of $\,\Hyp^2$ as constructed in Lemma~\ref{lem:tilings}. Then the neighborhood graph of any finite tile set $S\subset \cT$ is
$(c\log |S|)$-outerplanar, where $c$ is an absolute constant independent of our choice of $\cT$.
\end{lemma}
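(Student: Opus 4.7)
The plan is a layer-peeling argument driven by the hyperbolic isoperimetric inequality. I would first fix a plane embedding of $\NbG(S)$ by placing each vertex inside its corresponding tile and routing each edge across the shared edge of the two neighboring tiles. This embedding is well-defined and planar because three tiles meet at every vertex of the tiling $\cT_\delta$ constructed in Lemma~\ref{lem:tilings}(i), so any two neighboring tiles share a full edge. Under this embedding, a vertex is on the outer face if and only if its tile borders the unbounded component of $\Hyp^2 \setminus \bigcup S$, i.e., the ``geometric outer boundary'' tiles of $S$.

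Next, define the iterated peeling by $S_0 \eqdef S$, with $S_{i+1}$ obtained from $S_i$ by deleting the outer-face vertices of $\NbG(S_i)$ (equivalently, the geometric outer-boundary tiles of $\bigcup S_i$). The heart of the argument is showing that each layer removes a constant fraction of the remaining tiles. Let $R_i \eqdef \bigcup S_i$ and let $R_i^\star$ be its simply-connected hull, obtained by filling every bounded hole in every connected component of $R_i$. The outer boundary of $R_i$ coincides with $\partial R_i^\star$, and applying Corollary~\ref{cor:isoperim} componentwise gives
\[
\len(\partial R_i^\star) \;\geq\; \mathrm{area}(R_i^\star) \;\geq\; \mathrm{area}(R_i) \;=\; |S_i|\,\alpha_\delta.
\]
Since $\partial R_i^\star$ is covered by (parts of) the perimeters of outer-boundary tiles, and each such tile contributes at most $\xi_\delta$ to this length, the number of peeled tiles at step $i$ is at least $(\alpha_\delta/\xi_\delta)\,|S_i| \geq c_0\,|S_i|$, where $c_0 > 0$ is a universal constant provided by Lemma~\ref{lem:tilings}(i): the ratio $\alpha_\delta/\xi_\delta$ is positive for each $\delta \in \Nats_+$ and converges to a positive limit, so its infimum over all choices of $\cT_\delta$ is strictly positive.

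Iterating the bound gives $|S_i| \leq (1-c_0)^i\,|S|$, which drops below $1$ after $O(\log|S|)$ steps. By the recursive definition of $k$-outerplanarity this is exactly what is needed: $\NbG(S)$ is $(c\log|S|)$-outerplanar with $c$ depending only on $c_0$, and hence independent of $\cT$.

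The main technical obstacle I anticipate is the first step: justifying rigorously that the outer-face vertices of the plane embedding of $\NbG(S_i)$ coincide with the geometric outer-boundary tiles of $\bigcup S_i$, even when $R_i$ is disconnected or riddled with several holes. The combinatorial picture is that the faces of $\NbG(S_i)$ correspond to the interior vertices of $\cT_\delta$ lying in $R_i$ (each giving a triangular face) together with the connected components of $\Hyp^2 \setminus R_i$, so the outer face is exactly the unbounded complementary region. Multiple connected components and multiple holes cause no trouble for the isoperimetric step because both $\len(\partial R_i^\star)$ and $\mathrm{area}(R_i)$ split additively over components, and filling holes only enlarges $\mathrm{area}(R_i^\star)$, so the inequality is preserved by summing across components.
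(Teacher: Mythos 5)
Your proposal is correct and follows essentially the same route as the paper's proof: fill the holes, apply the isoperimetric inequality (Corollary~\ref{cor:isoperim}) to the filled region to show that the tiles adjacent to the unbounded complementary component number at least $(\alpha_\delta/\xi_\delta)|S_i|$, peel them off, and iterate $O(\log|S|)$ times, using the convergence of $\alpha_\delta/\xi_\delta$ from Lemma~\ref{lem:tilings}(i) to make the constant uniform over the choice of $\cT$. The only cosmetic difference is that the paper phrases the peeling as an induction on $|S|$ rather than an explicit iteration, and it leaves the identification of outer-face vertices with geometric outer-boundary tiles implicit, a point you flag and justify somewhat more carefully.
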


\begin{proof}
Let $\alpha$ and $\xi$ be the area and circumference of a tile in $\cT$ respectively, and let $G$ be the neighborhood graph of $S$. Let $\bar{S}$ be the tile set obtained from $S$ by filling all of its holes with tiles. We will prove by induction on $|S|$ that the outerplanarity of $G$ is at most $\left(\frac{\log |S|}{\log(1/(1-\alpha/\xi))}\right)$. To prove this claim, we show that a constant proportion of all tiles are adjacent to the unbounded component of $\cT\setminus S$.

The boundary of $\bigcup \bar{S}$ is a collection of closed curves bounding interior disjoint regions whose union has precisely $|\bar{S}|$ tiles. So by summing the inequality $\len(\phi)\geq \mathrm{area}(\phi_\myin)$ for each of these curves, we have that $\len(\partial(\bigcup\bar{S})) \geq \alpha|\bar{S}|$.
Let $S'$ be the set of tiles in $S$ adjacent to the unbounded component of $\cT\setminus S$. The total length of $\partial(\bigcup\bar{S})$ is at most $|S'|\xi$, so $|S'|\xi \geq \alpha|\bar{S}|$. Hence $|S'| \geq  \frac{\alpha}{\xi}|\bar{S}| \geq \frac{\alpha}{\xi} |S|$, as required.

The neighborhood graph of $S\setminus S'$ has outerplanarity at most $\frac{\log(|S|-|S'|)}{\log(1/(1-\alpha/\xi))}$ by the induction hypothesis. Therefore the outerplanarity of $G$ is at most
 \begin{equation}\label{eq:outp}
 \frac{\log(|S|-|S'|)}{\log(1/(1-\alpha/\xi))}+1
 \leq \frac{\log(|S|-\frac{\alpha}{\xi}|S|)}{\log(1/(1-\alpha/\xi))}+1
 = \frac{\log |S|}{\log(1/(1-\alpha/\xi))}.
 \end{equation}
 Recall that in Lemma~\ref{lem:tilings}(i) we defined a tiling $\cT_\delta$ for any choice of $\delta\in \Nats^+$; let us denote the corresponding tile area and diameter by $\alpha_\delta$ and $\xi_\delta$ respectively.
 By  Corollary~\ref{cor:isoperim}, we have $0<\alpha_\delta/\xi_\delta<1$, but this fact by itself is not enough to establish the required bound. Indeed, in principle we could have
 $\inf_{\delta\in \Nats^+} \alpha_\delta/\xi_\delta=0$,
 which would make the outerplanarity bound established in~\eqref{eq:outp} depend on the choice of $\cT$.
 But we have shown in Lemma~\ref{lem:tilings}(i) that $\lim_{\delta\rightarrow \infty} \alpha_\delta/\xi_\delta>0$, so there exists a constant $c'>0$ such that $\alpha_\delta/\xi_\delta>c'$ for all $\delta$. This concludes our proof.
\end{proof}

The treewidth of $k$-outerplanar graphs is at most $3k-1$ by~\cite[Theorem 83]{Bodlaender98}. By Lemma~\ref{lem:layerpeel}, this implies that the neighborhood graph of any tile set $S\subset \cT$ has treewidth $O(\log n)$. We remark that this is the only result in the present paper that also works for tilings of non-constant diameter. As a consequence, it can be shown that any $n$-vertex subgraph of a regular tiling with bounded tiles has treewidth at most $c\log n$, where $c$ is independent of the tiling, that is, this holds even if the choice of the regular tiling depends on $n$.

In order to transfer the treewidth bound from neighborhood graphs to graphs in $\SNUBG_{\Hyp^2}(\rho,\nu,k)$, we need two operations: the $k$-fold blowup and $k$-th power. It turns out that both operations increase the treewidth by at most a function of $k$ and the maximum degree of $G$.
The \emph{$k$-th power} $G^k$ of a graph $G$ is the graph with vertex set $V(G)$ where $u,v\in V(G)$ are connected if and only if $\dist_G(u,v)\leq k$.

\begin{restatable}{lemma}{lempowertw}\label{lem:power_tw}
If $G$ has maximum degree $\Delta$, then $\tw(G^k)\leq \Delta^{\ceil{k/2}+1}(\tw(G)+1)-1$.
\end{restatable}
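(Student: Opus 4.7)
The plan is to lift any tree decomposition of $G$ to a tree decomposition of $G^k$ by enlarging every bag along edges of $G$. Concretely, I would take an optimal tree decomposition $(T,\sigma)$ of $G$ of width $\tw(G)$, and for each node $u\in V(T)$ define
\[\sigma'(u) \eqdef \bigcup_{v\in\sigma(u)} N^{\lceil k/2\rceil}_G[v],\]
where $N^r_G[v]$ is the closed ball of radius $r$ around $v$ in $G$. A straightforward BFS-tree estimate gives $|N^r_G[v]|\leq \Delta^{r+1}$ (with trivial adjustments when $\Delta\leq 1$), so each enlarged bag has size at most $(\tw(G)+1)\cdot\Delta^{\lceil k/2\rceil+1}$, matching the claimed width up to the $-1$. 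Everything then reduces to confirming that $(T,\sigma')$ is a valid tree decomposition of $G^k$.

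I would check the three tree-decomposition axioms in turn. Vertex coverage is immediate because $\sigma(u)\subseteq\sigma'(u)$. For edge coverage, consider $xy\in E(G^k)$ and a shortest $x$-$y$ path of length $\ell\leq k$ in $G$; its vertex in position $\lceil\ell/2\rceil$ lies at $G$-distance at most $\lceil k/2\rceil$ from both endpoints, so any original bag containing this midpoint vertex will contain both $x$ and $y$ after enlargement.

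The slightly more delicate condition is the subtree axiom. For a fixed $v\in V(G)$, the set $\{u\in V(T): v\in\sigma'(u)\}$ equals $\bigcup_{w\in N^{\lceil k/2\rceil}_G[v]} T_w$, where $T_w\subseteq V(T)$ is the subtree of bags containing $w$ in the original decomposition. Each $T_w$ is connected in $T$, and whenever $w,w'$ are adjacent in $G$ the edge axiom for $ww'$ forces $T_w\cap T_{w'}\neq\emptyset$. Since $N^{\lceil k/2\rceil}_G[v]$ is itself connected in $G$ (shortest paths to $v$ stay inside the ball), I can chain these pairwise intersections together to conclude that the union of the $T_w$ is connected, as required.

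I do not anticipate a serious obstacle; this is an enlargement argument of a standard flavour. The only point requiring mild care is choosing the midpoint when both $k$ and $\ell$ are odd, which is what forces the ceiling rather than the floor in $\lceil k/2\rceil$ and lines up exactly with the exponent in the stated bound.
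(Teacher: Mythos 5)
Your proposal is correct and follows essentially the same route as the paper: both replace each bag of an optimal tree decomposition of $G$ by its $\ceil{k/2}$-neighborhood in $G$, bound the bag size by $\Delta^{\ceil{k/2}+1}(\tw(G)+1)$, and verify edge coverage via a midpoint of a length-$\leq k$ path. The only difference is cosmetic: you verify the running-intersection property directly by chaining the connected subtrees $T_w$ over the connected ball $N^{\ceil{k/2}}_G[v]$, whereas the paper argues it by contradiction using the fact that a bag violating the condition would separate $B$ from $B'$ while $\Nei_G(v,\ceil{k/2})$ connects them.
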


\begin{proof}
Let $(T,\sig)$ be a tree decomposition of $G$, and replace each bag $B$ with
$\Nei_G(B,\ceil{k/2})$. It is sufficient to prove that the resulting bags form a tree
decomposition of $G^k$, since the new bags are at most $\Delta^{\ceil{k/2}+1}$ times
larger than the bags of $(T,\sig)$. The first property (every vertex is
contained in some bag) trivially holds. The second property (every edge is induced by some bag) is true since the endpoints of any edge $uv$ in $G^k$ can be connected by a $u-v$ path of length at most $k$ in $G$, so if $B$ is a bag in the tree decomposition of $G$ that contains a midpoint $w$ of this path, then $\Nei_G(B,\ceil{k/2})$ also contains $u$ and $v$.

Next, we prove the third property, which is equivalent to the following: if
vertex $v$ appears in the bags $B$ and $B'$, then it appears in every bag that
is on the unique tree path between $B$ and $B'$. Suppose for a contradiction
that there is a vertex $v\in V(G)$ that appears in bags $\Nei_G(B,k)$ and
$\Nei_G(B',\ceil{k/2})$, while there is a bag $\Nei_G(B^*,\ceil{k/2})$ between them in the tree
not containing $v$. Since $v\in \Nei_G(B,\ceil{k/2})$, it follows that $\Nei_G(v,\ceil{k/2})\cap
B \neq \emptyset$ and similarly $\Nei_G(v,\ceil{k/2})\cap B' \neq \emptyset$. Also,
since $v\not \in \Nei_G(B^*,\ceil{k/2})$, we have that $\Nei_G(v,\ceil{k/2})\cap B^*=\emptyset$.
Since $(T,\sig)$ is a valid tree decomposition, $B^*$ is a separator of $G$
where $B$ and $B'$ are in distinct connected components of $G[V(G)\setminus
B^*]$. But this contradicts the fact that $\Nei_G(v,\ceil{k/2})$ is a connected graph
disjoint from $B^*$ that contains vertices from both $B$ and $B'$.
\end{proof}

For $k\in \Nats_+$ the \emph{$k$-fold blowup} of a graph $G$ is a graph $G^{(k)}$ where each vertex $v$ of $G$ is replaced by a $k$-clique $C_v$, and for each edge $uv\in E(G)$ and for all pairs of vertices $u'\in C_u, v'\in C_v$ we have $u'v'\in E(G^{(k)})$. Since the blown-up bags (replacing each vertex in each bag with the corresponding clique) give a tree decomposition of the blown-up graph, the blowup has treewidth at most $k(\tw(G)+1)-1$. We can summarize this in the following proposition.

\begin{proposition}\label{prop:blowup_tw}
For any graph $G$ its $k$-fold blowup satisfies $\tw(G^{(k)}) \leq k\tw(G) + k -1$.
\end{proposition}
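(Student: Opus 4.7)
The plan is to take an optimal tree decomposition $(T,\sigma)$ of $G$ and ``blow up'' each bag by replacing every vertex with its corresponding $k$-clique. Concretely, I would define a new decomposition $(T,\sigma')$ of $G^{(k)}$ on the same underlying tree $T$ by setting $\sigma'(u) \eqdef \bigcup_{v\in \sigma(u)} C_v$ for each node $u\in V(T)$, where $C_v$ denotes the $k$-clique that replaces the vertex $v$ of $G$.

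Next I would verify the three tree-decomposition axioms for $(T,\sigma')$. For coverage of vertices, any $v'\in V(G^{(k)})$ belongs to a unique $C_v$, and since $v$ lies in some bag $\sigma(u)$ in the original decomposition, $v'$ lies in $\sigma'(u)$. For coverage of edges, an edge of $G^{(k)}$ is either an intra-clique edge $u'v'\in C_v$ for some $v\in V(G)$, in which case any bag $\sigma'(u)$ with $v\in \sigma(u)$ contains both endpoints, or it is an inter-clique edge $u'v'$ with $u'\in C_u$, $v'\in C_v$, coming from an edge $uv\in E(G)$; in the latter case any bag of $(T,\sigma)$ containing both $u$ and $v$ yields a bag of $(T,\sigma')$ containing both $u'$ and $v'$. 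For the subtree property, the set of nodes of $T$ whose bag in $\sigma'$ contains a given $v'\in C_v$ equals the set of nodes whose bag in $\sigma$ contains $v$, which is a subtree by assumption.

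Finally, I would bound the width: each bag of $\sigma'$ has size exactly $k\cdot |\sigma(u)|\leq k(\tw(G)+1)$, so the width of $(T,\sigma')$ is at most $k(\tw(G)+1)-1=k\tw(G)+k-1$, yielding the claimed bound. There is no real obstacle here: the construction is the natural one and all three axioms transfer immediately from $(T,\sigma)$ because the blowup operation is local to each vertex.
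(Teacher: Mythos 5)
Your proposal is correct and is exactly the paper's argument: the paper also replaces every vertex in every bag by its corresponding $k$-clique and notes that this gives a tree decomposition of $G^{(k)}$ of width at most $k(\tw(G)+1)-1$. You have merely spelled out the verification of the three axioms that the paper leaves implicit.
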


We can now prove Theorem~\ref{thm:treewidth}.

\begin{proof}[Proof of Theorem~\ref{thm:treewidth}]
(i) Given an embedding of a shallow graph $G$, the main idea of this proof is to take a tiling, and to consider some power of the neighborhood graph of the non-empty tiles. This graph power can be blown up in a way that the resulting graph $H$ is a supergraph of our shallow graph $G$. Since neighborhood graphs have logarithmic treewidth, the resulting graph $H$ will also have treewidth $O(\log n)$, and since $H$ is a supergraph of $G$, the treewidth of $H$ is greater or equal to the treewidth of $G$. Next we give the details of the proof.

Let $P$ be the point (multi)set of the embedding of $G$, and let us refer to points of $P$ being connected or not connected if the corresponding vertices in $G$ are connected by an edge (resp. not connected). We fix a tiling $\cT$ from those offered by Lemma~\ref{lem:tilings}(ii) where the tile diameter is the smallest  possible tile diameter greater than $2\rho\nu$. Let $\delta$ denote the diameter of the tiles in $\cT$. Note that each tile contains at most $k'\eqdef k\Vol(\delta+\rho)/\Vol(\rho)=O(1)$ points of $P$. Let $S$ be the set of non-empty tiles in $\cT$. By Corollary~\ref{cor:nb_tiles}, there is a constant $c$ such that the tiles of $\Nei_{\cT}(S,1)$ cover $\Nei_{\Hyp^2}(\bigcup S,c)$. Applying the same corollary to $\Nei_{\cT}(S,1)$, we get that the tiles of $\Nei_{\cT}(S,2)=\Nei_{\cT}(\Nei_{\cT}(S,1),1)$ cover $\Nei_{\Hyp^2}(\bigcup S,2c)$. By induction we get that the tiles of $\Nei_\cT(S,r)$ cover $\Nei_{\Hyp^2}(\bigcup S,rc)$, Consequently, any pair of tiles $T,T' \in S$ for which $\dist_{\cT}(T,T')>2\rho\nu/c$ satisfies\footnote{We let $\dist_{\Hyp^2}(T,T')\eqdef\inf\{\dist_{\Hyp^2}(x,y) \;|\; x\in T, y\in T')\}$. Note that this is not a metric on subsets of $\Hyp^2$.} $\dist_{\Hyp^2}(T,T')>2\rho\nu$. Therefore $T$ and $T'$ cannot contain a pair of connected points $p\in T\cap P, p'\in T'\cap P$. Let $\ell\eqdef\ceil{2\rho\nu/c}+1$. Then $S'\eqdef \Nei_{\cT}(S,\ell)$ has the property that for any pair of connected points in $P$, their containing tiles $T$ and $T'$ are not too distant in the neighborhood graph of $S'$: $\dist_{S'}(T,T') <\ell$. Note that $|S'|=O(n)$. By Lemma~\ref{lem:layerpeel}, the neighborhood graph $\NbG(S')$ is $O(\log|S'|)=O(\log n)$-outerplanar, and by~\cite{Bodlaender98}, it has treewidth $O(\log n)$.

Consider the graph $(\NbG(S'))^\ell$. A pair of points in $P$ can be connected only if they are in the same tile or their tiles are connected in $(\NbG(S'))^\ell$. Let $H$ be the $k'$-fold blowup of $(\NbG(S'))^\ell$. Since each tile has at most $k'$ points of $P$, the graph $H$ is a supergraph of $G$. By Lemma~\ref{lem:power_tw}, $(\NbG(S'))^\ell$ has treewidth at most $\Delta^{\ceil{\ell/2}+1}(\tw(\NbG(S'))+1)-1$ where $\Delta$ is the number of neighboring tiles to a tile in $\cT$; therefore, $\tw((\NbG(S'))^\ell)=O(\tw(\NbG(S')))=O(\log n)$, and its blowup $H$ also has treewidth $O(\log n)$ by Proposition~\ref{prop:blowup_tw}. Since $H$ is a supergraph of $G$, $G$ has treewidth $O(\log n)$.

(ii) The weighted treewidth of $G_\cP$ is at most $\log(n+1)\cdot(\tw(G_\cP)+1)$ since each node in $G_{\cP}$ has weight at most $\log(n+1)$, so Lemmas~\ref{lem:tilings},~\ref{lem:G_PisSDTG} and Theorem~\ref{thm:treewidth}(i) imply the desired bound.
\end{proof}

\section{Computing treewidth and \texorpdfstring{$\cP$}{P}-flattened treewidth}\label{sec:gettw}
We can remove the dependence on the embedding from Theorem~\ref{thm:treewidth} by the same techniques as in~\cite{frameworkpaper}. The algorithms obtained this way are also deterministic since they do not rely on the Las Vegas separator-finding algorithm of Theorem~\ref{thm:sep}.


\paragraph*{Treewidth bounds for shallow graphs in \texorpdfstring{$\Hyp^d$ ($d\geq 3$) and in $\Hyp^2$}{in hyperbolic space}.}
The size bound of our separator theorem can be used to get a bound on the pathwidth (and treewidth) of shallow graphs, since in a shallow graph each tile has $O(1)$ points, and the weight of each non-empty tile is $\Theta(1)$. Thus the bounds that we proved in the previous section for clique-based separators immediately give the same asymptotic bounds for normal separators in shallow graphs. To turn these bounds into bounds on treewidth, we only need Theorem 20 of \cite{Bodlaender98}, which yields the following corollary:

\begin{corollary}\label{cor:shallowpathwidth}
Fix the constants $\rho$, $\nu\geq 1$ and $k$. Then for any $G\in \SNUBG_{\Hyp^2}(\rho,\nu,k)$ on $n$ vertices the pathwidth is $O(\log^2 n)$ and if $d\ge 3$, then for any $G\in \SNUBG_{\Hyp^d}(\rho,\nu,k)$ the pathwidth is $O(n^{1-1/(d-1)})$.
\end{corollary}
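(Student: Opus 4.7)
The plan is to convert the clique-weighted separator of Theorem~\ref{thm:sep} into an ordinary vertex separator, exploit the hereditary nature of the class, and then invoke Bodlaender's Theorem~20 from~\cite{Bodlaender98}.

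First I would unpack what shallowness buys in Theorem~\ref{thm:sep}. Its proof uses a nice tiling with tile diameter $\rho-\eps$, so every tile fits inside a hyperbolic ball of radius less than $\rho$. For $G\in\SNUBG_{\Hyp^d}(\rho,\nu,k)$ this means each tile contains at most $k$ points of the embedding, so every clique $C$ in the induced partition $\cP$ satisfies $|C|\leq k$ and has weight $\log(|C|+1)=O(1)$. Hence $|S|$ and $\gamma(S)$ agree up to a constant factor, and $\bigcup S$ is a $\tfrac{d}{d+1}$-balanced vertex separator containing at most $k|S|$ vertices. Plugging in Theorem~\ref{thm:sep} gives a balanced vertex separator of size $O(\log n)$ for $d=2$ and $O(n^{1-1/(d-1)})$ for $d\geq 3$.

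Next, $\SNUBG_{\Hyp^d}(\rho,\nu,k)$ is closed under induced subgraphs: restricting the embedding to any vertex subset preserves all defining properties, so every induced subgraph on $n'$ vertices admits a balanced vertex separator of the same asymptotic size, now expressed in terms of $n'$. This uniform separator property for the whole subgraph family is exactly the hypothesis needed for Bodlaender's result.

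Finally I would invoke Bodlaender's Theorem~20, which turns a uniform balanced-separator bound for a hereditary class into a pathwidth bound. When the separator size $s(n')$ is polylogarithmic one obtains pathwidth $O(s(n)\log n)$, while when $s(n')$ is a polynomial $n^\alpha$ with positive constant exponent, the recursive decomposition produces a geometric sum dominated by its top term and yields pathwidth $O(s(n))$. For $d=2$, $s(n)=O(\log n)$ gives pathwidth $O(\log^2 n)$; for $d\geq 3$, $s(n)=O(n^{1-1/(d-1)})$ with $1-1/(d-1)>0$ gives pathwidth $O(n^{1-1/(d-1)})$. The only subtlety is keeping the two regimes of Bodlaender's theorem straight (the extra $\log n$ in the polylog regime versus its absence in the polynomial regime) and checking carefully that shallowness really does convert the weighted separator bound of Theorem~\ref{thm:sep} into an unweighted one up to constants.
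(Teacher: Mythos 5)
Your proposal is correct and follows essentially the same route as the paper: shallowness makes every tile-clique have $O(1)$ vertices, so the clique-weighted separator of Theorem~\ref{thm:sep} becomes an ordinary balanced vertex separator of the same asymptotic size for every induced subgraph, and this hereditary separator bound is turned into the pathwidth bounds via Theorem~20 of~\cite{Bodlaender98}. The paper compresses exactly this argument (including the two separator-size regimes) into the paragraph preceding the corollary.
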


Following the techniques of~\cite{frameworkpaper} one can derive the bound $O(n^{1-1/(d-1)})$ for the $\cP$-flattened treewidth of graphs in $\NUBG_{\Hyp^d}(\rho,\nu)\; (d\geq 3)$ as well. Before we do that, we establish a way to get useful partitions in the absence of an embedding.

\begin{lemma}\label{lem:greedypart}
Let $\rho>0$, $\nu\geq 1$ and $d\ge 2$ be fixed constants, and let $G\in \NUBG_{\Hyp^d}(\rho,\nu)$. Then all greedy partitions $\cP$ of $G$ are $\kappa$-partitions for some $\kappa=O(1)$, and $G_\cP$ has maximum degree $O(1)$. Moreover, $G_\cP$ is shallow: $G_\cP\in \SNUBG_{\Hyp^d}(\rho,3\nu,2)$
\end{lemma}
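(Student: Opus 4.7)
The plan is to exploit the fact that each class $C$ of a greedy partition $\cP$ has a distinguished ``center'' $v_C$ belonging to the maximal independent set $I$ used to build $\cP$, with $C \subseteq \{v_C\} \cup \Nei_G(v_C)$. Then $v_C$ is adjacent in $G$ to every other vertex of $C$, so $G[C]$ is connected. Moreover, because $G \in \NUBG_{\Hyp^d}(\rho,\nu)$, each such edge forces distance at most $2\rho\nu$ in $\Hyp^d$, so $\eta(C) \subseteq B(\eta(v_C), 2\rho\nu)$. To cover $C$ by $O(1)$ cliques I would fix a nice tiling of $\Hyp^d$ with tile diameter strictly less than $2\rho$ (available from Lemma~\ref{lem:tilings}(ii)); within any one such tile all pairs of embedded vertices are at distance $<2\rho$ and hence adjacent in $G$, so grouping the vertices of $C$ by containing tile yields a clique cover. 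Proposition~\ref{prop:tilesballs}(i) bounds the number of tiles meeting $B(\eta(v_C), 2\rho\nu)$ by $O(e^{(d-1)\cdot 2\rho\nu}) = O(1)$, so $\kappa=O(1)$.

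For $G_\cP$ I would use the natural embedding $\eta_\cP(C) \eqdef \eta(v_C)$. Independence of $I$ combined with the axiom ``$\dist<2\rho \Rightarrow$ edge'' implies that any two distinct centers are at distance $\geq 2\rho$, so $\eta_\cP$ is injective and the ``$\dist<2\rho$'' clause holds vacuously for $G_\cP$. If $C\neq C'$ are adjacent in $G_\cP$, say via $u\in C$ and $u'\in C'$ with $uu'\in E(G)$, then three applications of the triangle inequality yield
\[
\dist(\eta_\cP(C),\eta_\cP(C')) \leq 2\rho\nu + 2\rho\nu + 2\rho\nu = 6\rho\nu,
\]
which is exactly the noise condition for $\nu'=3\nu$. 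The maximum-degree bound then follows from a packing argument: every neighbor of $C$ in $G_\cP$ has its center inside $B(\eta(v_C),6\rho\nu)$, and since distinct centers are at distance $\geq 2\rho$, the open balls of radius $\rho$ around them are pairwise disjoint and all sit inside $B(\eta(v_C),7\rho\nu)$, whose volume is $O(1)$.

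The only genuinely geometric step is showing that $\eta_\cP(\cP)$ is $(\rho,2)$-shallow. Suppose for contradiction that a ball $B(x,\rho)$ contains three distinct centers $p_1,p_2,p_3$. Each pairwise distance is at most $2\rho$ (ball diameter) and at least $2\rho$ (centers are pairwise $\geq 2\rho$ apart), so all three distances equal $2\rho$; equality in $\dist(p_i,p_j)\leq \dist(p_i,x)+\dist(x,p_j)$ forces $x$ to be the midpoint of each segment $p_ip_j$. By uniqueness of geodesic midpoints in $\Hyp^d$, both $p_2$ and $p_3$ then coincide with the antipode of $p_1$ through $x$, contradicting distinctness. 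Together with the previous paragraph this establishes $G_\cP\in\SNUBG_{\Hyp^d}(\rho,3\nu,2)$. I expect this midpoint argument, though short, to be the main subtlety; everything else is star-connectedness plus standard volume packing in $\Hyp^d$ (where hyperbolic volumes of bounded-radius balls are still $O(1)$ constants).
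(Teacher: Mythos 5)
Your proposal is correct and follows essentially the same route as the paper's proof: embed $G_\cP$ via the independent-set centers, verify the $(\rho,3\nu)$ noise condition by the triangle inequality, bound $\kappa$ and the maximum degree by covering $B(\eta(v_C),2\rho\nu)$ with $O(1)$ tiles of a nice small-diameter tiling and by packing disjoint radius-$\rho$ balls, respectively. Your explicit midpoint argument merely fills in a detail the paper leaves implicit when it asserts that pairwise distances $\geq 2\rho$ force at most two centers in any radius-$\rho$ ball.
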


\begin{proof}
Let $\eta$ be an embedding of $G$, and let $\cP$ be a greedy partition built around a maximal independent set $I\subset V(G)$. First, we claim that restricting $\eta$ to $I$ gives a representation for $G_\cP$ as a graph in $\SNUBG_{\Hyp^d}(\rho,3\nu,2)$. Pairs of points from $\eta(I)$ closer than $\rho$ are always connected in $G_\cP$, since there are no such point pairs. Pairs of points $p,q\in \eta(I)$ more than $6\rho\nu$ distance away cannot have neighbors $p'$ and $q'$ that are connected, so the corresponding partition classes are not connected in $G_\cP$. Finally, any pair of points from $\eta(I)$ have distance at least $2\rho$, so a ball of radius $\rho$ can contain at most two points of $\eta(I)$. This shows that $G_\cP\in \SNUBG_{\Hyp^d}(\rho,3\nu,2)$.

Clearly each partition class induces a connected subgraph of $G$. To bound $\kappa$, notice that the points of a partition class for point $v\in I$ all lie within $B(\eta(v),2\rho\nu)$. Consider a tiling $\cT$ from Lemma~\ref{lem:tilings}(ii) of diameter $\rho-\epsilon$ for some small $\epsilon>0$. Each tile induces a clique in $G$. The tiles intersecting $B(\eta(v),2\rho\nu)$ are contained in $B(\eta(v),2\rho\nu +\rho)$, and this ball contains $\Theta(e^{(d-1)(2\rho\nu +\rho)})=O(1)$ tiles by Proposition~\ref{prop:tilesballs} (i). Therefore each partition class can be covered by at most $\kappa=O(1)$ cliques.

To get a bound on the maximum degree in $G_\cP$, recall that the balls of radius $\rho$ with centers in $\eta(I)$ are disjoint. Since $G_\cP\in \SNUBG_{\Hyp^d}(\rho,3\nu,2)$, the neighborhood of a vertex is within a ball of radius $6\rho\nu$ around the representing point. Therefore the maximum degree is at most $\Vol(6\rho\nu)/\Vol(\rho))=O(1)$, where $\Vol(x)$ denotes the volume of a ball of radius $x$.
\end{proof}

We can now make our separator theorem work for greedy partitions.

\begin{lemma}\label{lem:sep_norep}
Let $\rho>0$, $\nu\geq 1$ and $d\ge 2$ be fixed constants, and let $G\in \NUBG_{\Hyp^d}(\rho,\nu)$ with representation $\eta$. Then for any greedy partition $\cP$ there is a $d/(d+1)$-balanced separator $S\subset \cP$ such that $|S|=O(\log n)$ and has weight $\gamma(S)=O(\log^2 n)$ if $d=2$, and $|S|=O(n^{1-1/(d-1)})$ and has weight $\gamma(S)=O(n^{1-1/(d-1)})$ if $d\ge 3$.
\end{lemma}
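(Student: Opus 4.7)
My plan is to adapt the proof of Theorem~\ref{thm:sep} so that it runs on a reweighted multiset reflecting the sizes of the classes of $\cP$, rather than on $\eta(V)$ directly. Let $I\subset V(G)$ be the maximal independent set on which $\cP$ is built, and for each $C\in \cP$ let $v_C\in I$ denote its representative. I form the multiset $P$ of $n$ points in $\Hyp^d$ by placing $|C|$ copies of $\eta(v_C)$ for each $C\in \cP$. Since $I$ is independent in $G$, the distinct support points of $P$ are pairwise at distance at least $2\rho$. Following the proof of Theorem~\ref{thm:sep}, I compute a centerpoint $p$ of the multiset $P$ (Lemma~\ref{lem:balance} extends to multisets in the obvious way), fix a nice tiling $\cT$ of $\Hyp^d$ of diameter strictly less than $\rho$ via Lemma~\ref{lem:tilings}(ii), and draw a uniform random hyperplane $H$ through $p$.

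Let $S_\cT$ be the set of non-empty tiles intersecting $\Nei(H,\,3\rho\nu)$. Since $3\rho\nu$ is a constant depending only on the constants $\rho,\nu,d$, Lemma~\ref{lem:weight} (taken with $\delta=3\rho\nu$) applies verbatim and gives the expected size and weight bounds claimed, where $\gamma(S_\cT)=\sum_{T\in S_\cT}\log(|P\cap T|+1)$. The key observation transferring these bounds from tiles to classes is that each tile of $\cT$ has diameter less than $\rho<2\rho$, so it contains at most one point of $\eta(I)$; hence $T\mapsto C_T$ (the unique class whose representative lies in $T$) is a bijection between $S_\cT$ and $S\eqdef \{C\in\cP:\eta(v_C)\in\bigcup S_\cT\}$, under which $|P\cap T|=|C_T|$. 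This yields $|S|=|S_\cT|$ and $\gamma(S)=\gamma(S_\cT)$, so the bounds transfer. A Markov argument identical to the one in the proof of Theorem~\ref{thm:sep} then turns expectation into a Las Vegas procedure running in $\poly(n)$ expected time.

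The main step I expect to require care is verifying that $S$ actually separates $G$ in a balanced way, because I have shifted each vertex to its representative when forming $P$, so the random hyperplane separates representatives rather than actual vertex positions. This is precisely why I inflate the slab width from $\rho\nu$ to $3\rho\nu$: for any pair $u,w\in V(G)\setminus \bigcup S$ whose representatives $v_{C_u},v_{C_w}$ lie on opposite sides of $H$ outside $\Nei(H,3\rho\nu)$, the geodesic between $\eta(v_{C_u})$ and $\eta(v_{C_w})$ must cross the full $3\rho\nu$-slab on each side, so $\dist(\eta(v_{C_u}),\eta(v_{C_w}))>6\rho\nu$. Combining this with $\dist(\eta(u),\eta(v_{C_u}))\leq 2\rho\nu$ and $\dist(\eta(w),\eta(v_{C_w}))\leq 2\rho\nu$ (since each vertex is equal to or adjacent in $G$ to its representative) yields $\dist(\eta(u),\eta(w))>2\rho\nu$, ruling out the edge $uw$. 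Thus every connected component of $G-\bigcup S$ sits on a single side of $H$, and the $d/(d+1)$-balance condition follows from the fact that $p$ is a centerpoint of the multiset $P$ of total size~$n$.
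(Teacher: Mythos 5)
Your proposal is correct and is essentially the paper's own argument: the paper likewise replaces every vertex by its class representative with multiplicity $|C|$ (via the auxiliary graph $\bar G$ with intra-class edges added, which it notes lies in $\NUBG_{\Hyp^d}(\rho,3\nu)$ under this embedding) and then runs the centerpoint/random-hyperplane machinery of Lemma~\ref{lem:weight} and Theorem~\ref{thm:sep} with the same widened $3\rho\nu$-slab. The only cosmetic difference is that the paper allows each tile to contain $O(1)$ classes (since representatives are $2\rho$-separated) and decomposes tile weights into class weights at a constant-factor loss, whereas you shrink the tile diameter below $\rho$ so the tile-to-class correspondence is exact.
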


\begin{proof}
Let $\cP$ be a greedy partition for a given maximal independent set $I$. Let $\bar{G}_\cP$ be the graph obtained from $G$ by adding an edge between any pair of vertices $v,w$ that are in the same partition class. Thus $\cP$ is a clique-partition of $\bar{G}$, and $G$ is a subgraph of $\bar{G}$. Hence, it is sufficient to provide a separator of the desired asymptotic weight and size in $\bar{G}$. 

By Lemma~\ref{lem:greedypart}, $G_\cP \in \SNUBG_{\Hyp^d}(\rho,3\rho\nu,2)$. Let $\eta_\cP$ be the corresponding embedding of $G_\cP$, and let $P_\cP$ denote its image. Let $\bar{P}$ be the point multiset that for each vertex $C\in \cP$ contains $|C|$ copies of the point $\eta_\cP(C)$. Let $\bar{\eta}:V(G)\rightarrow \bar{P}, \; \bar{\eta}(v)=\eta_\cP(C_v)$, where $C_v\in \cP$ is the class of $\cP$ that contains $v$. Clearly $\bar{\eta}$ is a representation of $\bar{G}$ as a graph in $\NUBG_{\Hyp^d}(\rho,3\rho\nu)$.

If we take a nice tiling of $\Hyp^d$ as in Lemma~\ref{lem:weight}, then the resulting partition $\cP'$ will be a coarsening of $\cP$, and a separator $S$ consisting of $\cP'$-classes naturally decomposes into a separator consisting of $\cP$-classes. Moreover, each $\cP'$-class can contain at most $O(1)$ classes of $\cP$, since $P_\cP$ is sparse. Therefore a separator $S'$ for $\cP'$ of weight $\gamma$ decomposes into a separator $S$ of weight at most $O(\gamma)$. Thus (by Lemma~\ref{lem:weight}) $\bar{G}$ has a separator $S\subseteq \cP$ of the desired weight, which is also a separator of $G$.
\end{proof}

We are now ready to prove Theorem~\ref{thm:main}.

\begin{restatable}{theorem}{thmmain}\label{thm:main}
Let $d\ge 2$, $k>0$, $\rho>0$ and $\nu\geq 1$ be constants.
\begin{enumerate}
\item[(i)] If $G\in \SNUBG_{\Hyp^2}(\rho,\nu,k)$, then the
treewidth of $\,G$ is $O(\log n)$, and a  tree decomposition of width $O(\log n)$ can be computed in polynomial time.
\item[(ii)] If $G\in \SNUBG_{\Hyp^d}(\rho,\nu,k)$, then the
treewidth of $\,G$ is $O(n^{1-1/(d-1)})$, and a  tree decomposition of width $O(n^{1-1/(d-1)})$ can be computed in $2^{O(n^{1-1/(d-1)})}$ time.
\item[(iii)] If $G\in \NUBG_{\Hyp^2}(\rho,\nu)$, then for any greedy partition $\cP$ the $\cP$-flattened treewidth of $G$ is
$O(\log^2 n)$, and a  weighted tree decomposition of width $O(\log^2 n)$ can be computed in $n^{O(\log n)}$ time.
\item[(iv)] If $G\in \NUBG_{\Hyp^d}(\rho,\nu)$, then for any greedy partition $\cP$ the $\cP$-flattened treewidth of $\,G$ is
$O(n^{1-1/(d-1)})$, and a  weighted tree decomposition of width $O(n^{1-1/(d-1)})$ can be computed in $2^{O(n^{1-1/(d-1)})}$ time.
\end{enumerate}%
\end{restatable}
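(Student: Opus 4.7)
For parts (i) and (ii), the treewidth bounds themselves are immediate from earlier results: Theorem~\ref{thm:treewidth}(i) gives $\tw(G)=O(\log n)$ for shallow $\Hyp^2$-graphs, and Corollary~\ref{cor:shallowpathwidth} gives $\tw(G)=O(n^{1-1/(d-1)})$ for shallow $\Hyp^d$-graphs with $d\geq 3$. To compute the decomposition without access to an embedding, I would apply any constant-factor treewidth approximation algorithm running in $2^{O(\tw(G))}\cdot n^{O(1)}$ time. With $\tw(G)=O(\log n)$ this runs in polynomial time, giving (i); with $\tw(G)=O(n^{1-1/(d-1)})$ this runs in $2^{O(n^{1-1/(d-1)})}$ time, giving (ii).

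For parts (iii) and (iv), I would first compute an arbitrary greedy partition $\cP$ of $G$ in polynomial time. By Lemma~\ref{lem:greedypart}, $G_\cP\in \SNUBG_{\Hyp^d}(\rho,3\nu,2)$ is shallow with constant maximum degree. For (iii), I apply (i) to $G_\cP$ to get an unweighted tree decomposition of width $O(\log n)$ in polynomial time; since every vertex of $G_\cP$ has weight at most $\log(n+1)$, the same decomposition has weighted width $O(\log^2 n)$, which matches the claim. The stated running time $n^{O(\log n)}$ is in fact looser than what this approach achieves.

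Part (iv) is the most delicate: naively applying (ii) to $G_\cP$ and multiplying by $\log(n+1)$ only yields weighted width $O(n^{1-1/(d-1)}\log n)$, losing a factor of $\log n$. I would instead build the tree decomposition by recursing on the \emph{weighted} separator of Lemma~\ref{lem:sep_norep}: at each level, find a $d/(d+1)$-balanced separator $S\subset\cP'$ of weight $O(m^{1-1/(d-1)})$ (where $m$ is the size of the current subgraph), include $S$ in the root bag, and recurse on the two sides. The weighted width along any root-to-leaf path of the resulting decomposition is bounded by the geometric series $\sum_{i\geq 0} c\,(\beta^i n)^{1-1/(d-1)}$ with $\beta=d/(d+1)<1$; since $1-1/(d-1)>0$ for $d\geq 3$, this sums to $O(n^{1-1/(d-1)})$.

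The main obstacle will be deterministically finding the weighted separator of Lemma~\ref{lem:sep_norep} without an embedding, since its existence proof relies on a random hyperplane through a centerpoint. I would follow the approach of~\cite{frameworkpaper}: since $G_\cP$ is shallow with bounded degree, an FPT treewidth algorithm yields a balanced separator of $G_\cP$ of size $O(n^{1-1/(d-1)})$ in $2^{O(n^{1-1/(d-1)})}$ time, and a refinement step that favors small partition classes produces a separator whose total weight matches the bound of Lemma~\ref{lem:sep_norep}. Summed over the $O(\log n)$ recursion levels, this yields the claimed running time $2^{O(n^{1-1/(d-1)})}$ for (iv).
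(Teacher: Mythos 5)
Parts (i)--(iii) of your proposal are correct and follow essentially the same route as the paper: the treewidth bounds come from Theorem~\ref{thm:treewidth}(i) and Corollary~\ref{cor:shallowpathwidth}, and the decompositions are obtained by running a constant-factor treewidth approximation with running time $2^{O(\tw)}\poly(n)$, which needs no embedding. Your handling of (iii) --- compute an \emph{unweighted} decomposition of the shallow, bounded-degree graph $G_\cP$ of width $O(\log n)$ in polynomial time and observe that its weighted width is automatically $O(\log n)\cdot\log(n+1)=O(\log^2 n)$ --- is valid and in fact sharper than the paper, which invokes the weighted-treewidth approximation of~\cite{frameworksArxiv} with $w'=O(\log^2 n)$ and therefore only claims $n^{O(\log n)}$ time. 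Your recursive-separator argument for the $\cP$-flattened treewidth bound in (iv) (geometric series over $d/(d+1)$-balanced weighted separators from Lemma~\ref{lem:sep_norep}) is also fine; this is exactly what the paper delegates to the arguments of Lemma~7 in~\cite{frameworksArxiv}, and the embedding used there is only needed for the existence statement.

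The gap is in the algorithmic half of (iv). You propose to actually find the separators without an embedding by running an FPT treewidth algorithm on $G_\cP$ to get a balanced separator of \emph{size} $O(n^{1-1/(d-1)})$ and then applying an unspecified ``refinement step that favors small partition classes'' to control the \emph{weight}. As stated this does not work: a separator of $G_\cP$ of size $s$ only has weight at most $s\log(n+1)$, since individual classes can have weight $\Theta(\log n)$, so the size bound alone gives weighted width $O(n^{1-1/(d-1)}\log n)$, losing exactly the logarithmic factor you were trying to avoid, and no mechanism is given that trades size for weight. The paper sidesteps this entirely: once the $\cP$-flattened treewidth (i.e., the weighted treewidth of $G_\cP$, with weights $\log(|C|+1)$ computable from the greedy partition alone) is known to be $w'=O(n^{1-1/(d-1)})$, it invokes the weighted-treewidth approximation technique (Lemma~8 and Theorem~9 of~\cite{frameworksArxiv}), a weighted analogue of the $2^{O(w)}\poly(n)$ approximation that works directly on the vertex-weighted graph $G_\cP$ and outputs a weighted tree decomposition of width $O(w')$ in $2^{O(w')}\poly(n)$ time, with no embedding and no explicit separator search. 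Replacing your separator-finding step by this weighted approximation (or by a reduction of weighted to unweighted treewidth, e.g.\ replacing each class by a clique whose size is proportional to its weight) closes the gap.
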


\begin{proof}
Given a graph $G$ of treewidth $w$, we can use the algorithm from either \cite{Robertson95} or \cite{Bodlaender16} to compute a tree decomposition whose width is $O(w)$ in $2^{O(w)}\poly(n)$ time. Moreover, for a partition $\cP$ where the $\cP$-flattened treewidth is $w'$ we can also compute a weighted tree decomposition of width $O(w')$ in $2^{O(w')}\poly(n)$ time; see Lemma 8 and the proof of Theorem 9 in~\cite{frameworksArxiv}.
Putting Lemmas~\ref{lem:greedypart},~\ref{lem:G_PisSDTG} and Theorem~\ref{thm:treewidth} together concludes the proof of Theorem~\ref{thm:main} in case of $d=2$.

For $\Hyp^d,\, d\ge 3$ we can combine Lemma~\ref{lem:sep_norep} with arguments in the proof of Lemma 7 in~\cite{frameworksArxiv} to show that the $\cP$-flattened treewidth of $G$ is $O\big(n^{1-1/(d-1)}\big)$. The weighted treewidth approximation technique can produce a weighted tree decomposition of $G_\cP$ of width $O\big(n^{1-1/(d-1)}\big)$ in $2^{O(n^{1-1/(d-1)})}$ time, which concludes the proof for the case $d\geq 3$.
\end{proof}

\section{Algorithmic applications}\label{sec:algs}

In this section we showcase some algorithms that can be obtained from Theorem~\ref{thm:main}.

\begin{theorem}
Let $d\ge 2$, $k>0$, $\rho>0$ and $\nu\geq 1$ be constants. Then \IS can be solved in
\begin{itemize}
	\item $\poly(n)$ time in $\SNUBG_{\Hyp^2}(\rho,\nu,k)$
	\item $n^{O(\log n)}$ time in $\NUBG_{\Hyp^2}(\rho,\nu)$
	\item $2^{O(n^{1-1/(d-1)})}$ time in $\NUBG_{\Hyp^d}(\rho,\nu)$ if $d\ge 3$
\end{itemize}
\end{theorem}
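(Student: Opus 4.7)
The plan is to derive the three algorithmic bounds directly from Theorem~\ref{thm:main} by running a suitable bag-by-bag dynamic program for \IS on the (weighted) tree decomposition it produces. The running time in each case will be dominated by $2^{O(w)}\poly(n)$, where $w$ is the ordinary or weighted width of the decomposition, so the algorithm boils down to (a) computing the decomposition and (b) showing that the per-bag state space for \IS has size $2^{O(w)}$.

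For $G \in \SNUBG_{\Hyp^2}(\rho,\nu,k)$, Theorem~\ref{thm:main}(i) gives a tree decomposition of width $O(\log n)$ in polynomial time. I would then run the textbook \IS dynamic program on this decomposition: each bag $B$ admits at most $2^{|B|}$ states (one bit per vertex indicating whether it belongs to the partial independent set), and transitions across a node of $T$ can be processed in $\poly(|B|)$ time. The total running time is $2^{O(\log n)}\poly(n)=\poly(n)$.

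For the non-shallow cases, Theorem~\ref{thm:main}(iii)--(iv) produces a greedy partition $\cP$ together with a weighted tree decomposition $(T,\sig)$ of $G_\cP$ whose weighted width is $O(\log^2 n)$ if $d=2$ and $O(n^{1-1/(d-1)})$ if $d\ge 3$. By Lemma~\ref{lem:greedypart}, $\cP$ is a $\kappa$-partition with $\kappa = O(1)$, i.e., each partition class $C\in \cP$ is covered by at most $\kappa$ cliques of $G$. The \IS dynamic program is lifted to $G_\cP$ as follows: a state at a bag $B\subset \cP$ records, for each class $C\in B$, which subset of $C$ is currently selected into the partial independent set. Since $C$ is covered by $\kappa$ cliques, any independent subset of $C$ is described by choosing at most one vertex from each of these $\kappa$ cliques, giving at most $(|C|+1)^{\kappa}$ local states. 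Hence the number of states per bag is
\[\prod_{C\in B}(|C|+1)^{\kappa}\;=\;2^{\kappa\sum_{C\in B}\log(|C|+1)}\;=\;2^{O(\gamma(B))}\;=\;2^{O(w)},\]
so the DP runs in $2^{O(w)}\poly(n)$ time. Plugging in the two weighted widths yields the claimed $n^{O(\log n)}$ and $2^{O(n^{1-1/(d-1)})}$ bounds; in the $d\ge 3$ case the decomposition itself is also computed within the same time budget, and in the $\Hyp^2$ case within $n^{O(\log n)}$ time, so neither the decomposition nor the DP dominates.

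The only genuine subtlety is the correctness and efficient implementation of the lifted DP on $G_\cP$: one has to verify that edges of $G$ which cross between two classes $C,C'$ that share a bag are consulted when combining local states at that bag, and that forgetting a class $C$ is done only after all its incident $G$-edges have been enforced. This is handled exactly as in~\cite{frameworkpaper}, which I would cite for the mechanics; conceptually, one replaces vertices by classes and single-vertex choices by $\kappa$-tuples of cliquewise choices, and all other parts of the standard \IS DP go through unchanged. The main technical obstacle is thus just bookkeeping; once the state-count bound $2^{O(\gamma(B))}$ is established, the three running times fall out of Theorem~\ref{thm:main}.
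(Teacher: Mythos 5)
Your proposal is correct and follows essentially the same route as the paper: both cases reduce to Theorem~\ref{thm:main}, with the shallow $\Hyp^2$ case handled by the standard $2^{O(\tw)}\poly(n)$ DP and the non-shallow cases by running the \IS dynamic program over the weighted tree decomposition of $G_\cP$, restricting each class to independent (hence at most $\kappa$-vertex) selections so that the per-bag state count is $2^{O(\gamma(B))}$, with the bookkeeping deferred to~\cite{frameworkpaper}. The only cosmetic difference is that the paper phrases this as first expanding the weighted decomposition of $G_\cP$ into a traditional tree decomposition of $G$ and then pruning states, which is the same algorithm you describe directly on $G_\cP$.
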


\begin{proof}[Proof sketch]
If $G\in \SNUBG_{\Hyp^2}(\rho,\nu)$, then we can compute a tree decomposition of width $O(\log n)$ in polynomial time by Theorem~\ref{thm:main}, and apply a traditional algorithm of running time $2^{\tw}\poly(n)$, see~\cite{fptbook} for an example. This yields a polynomial algorithm.

Now suppose $G\in \NUBG_{\Hyp^d}(\rho,\nu)$. We compute a greedy partition $\cP$ of $G$ in polynomial time. By Theorem~\ref{thm:main}, we can find a weighted tree decomposition $(T,\sig)$ of $G_\cP$ of width $w_2=O(\log n)$ if $d=2$ or of width $w_d=O(n^{1-1/(d-1)})$ if $d\ge 3$ in $2^{O(w_d)}\poly(n)$ time. From this point we proceed just as in Section 2.3 and Theorem 2.8 in~\cite{frameworkpaper}: the weighted tree decomposition of $G_\cP$ can be transferred into a so-called \emph{traditional tree decomposition} of $G$, on which a treewidth-based algorithm \cite{fptbook} can be run with a small modification, namely that only those partial solutions need to be considered that select at most $\kappa$ vertices from each partition class.
\end{proof}

\begin{theorem}\label{thm:qcoloringalg}
Let $\rho>0$, $\nu\geq 1$, $2\leq d\in \Nats$, and $q\in \Nats$ be fixed constants. Then the \textsc{$q$-Coloring} problem can be solved in
\begin{itemize}
	\item $\poly(n)$ time in $\NUBG_{\Hyp^2}(\rho,\nu)$
	\item $2^{O(n^{1-1/(d-1)})}$ time in $\NUBG_{\Hyp^d}(\rho,\nu)$ if $d\ge 3$.
\end{itemize}
\end{theorem}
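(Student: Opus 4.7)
The plan is to reduce \textsc{$q$-Coloring} in $\NUBG_{\Hyp^d}(\rho,\nu)$ to a tree-decomposition-based dynamic program, leveraging the crucial observation that in any $q$-colorable graph every clique has at most $q$ vertices. Combined with the fact that a greedy partition is a $\kappa$-partition with $\kappa=O(1)$ (Lemma~\ref{lem:greedypart}), i.e.\ each class is covered by $\kappa$ cliques, this implies that in a $q$-colorable instance every partition class contains at most $q\kappa=O(1)$ vertices. The first step is therefore to compute a greedy partition $\cP$ of $G$ in polynomial time, and to check whether any class $C\in\cP$ has $|C|>q\kappa$; if so, $C$ cannot admit a proper $q$-coloring and we return \emph{NO}.

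In the remaining case every class satisfies $|C|=O(1)$, so in particular $\log(|C|+1)=O(1)$. By Lemma~\ref{lem:greedypart} we have $G_\cP\in\SNUBG_{\Hyp^d}(\rho,3\nu,2)$, so I would invoke Theorem~\ref{thm:main}(i) for $d=2$ to obtain in polynomial time a tree decomposition of $G_\cP$ of width $O(\log n)$, and Theorem~\ref{thm:main}(iv) for $d\ge 3$ to obtain in $2^{O(n^{1-1/(d-1)})}$ time a weighted tree decomposition of $G_\cP$ of weighted width $O(n^{1-1/(d-1)})$.

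Next I would lift these decompositions to $G$ by the standard blowup: replace each occurrence of a class $C$ in a bag by the vertex set of $C$. Since $|C|=O(1)$, the bag size grows by only a constant factor, so we obtain a genuine (unweighted) tree decomposition of $G$ of width $O(\log n)$ if $d=2$, or $O(n^{1-1/(d-1)})$ if $d\ge 3$. Running the textbook \textsc{$q$-Coloring} dynamic program~\cite{fptbook} on this decomposition takes $q^{O(\tw)}\poly(n)$ time, which is $\poly(n)$ for $d=2$ and $2^{O(n^{1-1/(d-1)})}$ for $d\ge 3$, as required.

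The main conceptual obstacle is the $d=2$ case, where the $\cP$-flattened treewidth of Theorem~\ref{thm:main}(iii) is only $O(\log^2 n)$, which by itself would give merely a quasi-polynomial $n^{O(\log n)}$ algorithm. The trick for getting polynomial time is to bypass the flattened-treewidth bound entirely and instead apply the shallow bound of Theorem~\ref{thm:main}(i) to $G_\cP$ (which is automatically shallow by Lemma~\ref{lem:greedypart}), using the $q$-colorability assumption to control the blowup factor. This \emph{promise-then-check} structure---first reducing to an essentially shallow instance, then running a DP whose output is verifiable---is what shaves the stray $\log n$ factor from the treewidth and drops the complexity into $\poly(n)$.
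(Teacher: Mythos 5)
Your proposal is correct and follows essentially the same route as the paper: compute a greedy partition, reject as soon as some class exceeds $\kappa q$ vertices (using Lemma~\ref{lem:greedypart}), then exploit shallowness via Theorem~\ref{thm:main} and run the standard treewidth-based coloring dynamic program in $q^{O(\tw)}\poly(n)$ time. The only cosmetic difference is that the paper observes that after the rejection step $G$ itself lies in $\SNUBG_{\Hyp^d}(\rho,\nu,O(1))$ and applies Theorem~\ref{thm:main} directly to $G$, whereas you take a decomposition of the (unconditionally shallow) contraction $G_\cP$ and lift it to $G$ using the constant bound on class sizes; both variants are valid and yield the same bounds.
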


\begin{proof}
Let $\cP$ be a greedy partition.
If some clique of $G$ contains at least $q+1$ vertices, then there is no $q$-coloring. Since $\cP$ is a $\kappa$-partition for some constant $\kappa$ by Lemma~\ref{lem:greedypart}, if a partition class has more than $\kappa q$ vertices, then we can reject. Otherwise $G\in \SNUBG_{\Hyp^d}(\rho,\nu,\kappa q+1)$. If $d=2$, then by Theorem~\ref{thm:main} we can find a tree decomposition of $G$ of width $w_2=O(\log n)$ in polynomial time. If $d\ge 3$, then the tree decomposition of $G_\cP$ found by Theorem~\ref{thm:main} has width $w_d=O(n^{1-1/(d-1)})$ and it can be found in $2^{O(n^{1-1/(d-1)})}$ time. Then we can simply run a treewidth-based coloring algorithm~\cite{fptbook} with running time $q^{\tw}\tw^{O(1)}\poly(n)$, which for $q=O(1)$ becomes $\poly(n)$ time for $d=2$ and $2^{O(n^{1-1/(d-1)})}$ time for $d\geq 3$.
\end{proof}

\begin{theorem}\label{thm:hcalg}
Let $\rho>0$, $\nu\geq 1$, and $2\leq d\in \Nats$ be fixed constants. Then \textsc{Hamiltonian Cycle} can be solved in
\begin{itemize}
	\item $\poly(n)$ time in $\NUBG_{\Hyp^2}(\rho,\nu)$ if an \emph{embedding is given}
	\item $2^{O(n^{1-1/(d-1)})}$ time in $\NUBG_{\Hyp^d}(\rho,\nu)$ if $d\ge 3$ and the \emph{embedding is given}.
\end{itemize}
\end{theorem}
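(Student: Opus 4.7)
The plan is to solve Hamiltonian Cycle by dynamic programming on the tree decompositions delivered by Theorems~\ref{thm:treewidth} and~\ref{thm:main}, specialised to the clique-partition structure produced by a tiling. First, using the given embedding, I would build a clique partition $\cP$ of $V(G)$ by taking the tiling of $\Hyp^d$ from Lemma~\ref{lem:tilings}(ii) with tile diameter strictly less than $2\rho$; any two points sharing a tile are then at distance below $2\rho$ and hence connected in $G$, so every class of $\cP$ is indeed a clique. Theorem~\ref{thm:treewidth} (equivalently Theorem~\ref{thm:main}(iii)--(iv)) then yields a weighted tree decomposition of $G_\cP$ of weighted width $w=O(\log^2 n)$ for $d=2$ and $w=O(n^{1-1/(d-1)})$ for $d\geq 3$, computable within the target time bound because the embedding is at hand.

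Next, I would expand this weighted decomposition into a traditional tree decomposition of $G$ by replacing every class $C\in \cP$ appearing in a bag with the full clique $C\subset V(G)$, as in Section~2.3 of~\cite{frameworkpaper}, and run a Hamiltonian Cycle DP on this decomposition that respects the clique partition. Since the vertices inside a single clique $C$ of a bag are interchangeable from the cycle's point of view, the DP state for $C$ only needs to record the number of degree-$0$, degree-$1$, and degree-$2$ vertices of $C$ in the partial cycle, together with the matching that pairs the degree-$1$ ``segment-endpoints'' of $C$ with those elsewhere in the bag. The number of such states per clique is polynomial in $|C|$, giving a per-bag complexity of $2^{O(w)}\poly(n)$ in the spirit of the clique-aware DPs of~\cite{frameworkpaper}. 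For $d\geq 3$, plugging in $w=O(n^{1-1/(d-1)})$ directly yields the desired $2^{O(n^{1-1/(d-1)})}$ running time.

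For $d=2$, the naive clique-aware DP above gives only $2^{O(\log^2 n)}=n^{O(\log n)}$ time, and sharpening this to $\poly(n)$ is the heart of the argument. My plan would be to combine the clique-symmetry reduction with the rank-based method for connectivity problems on tree decompositions: the matching on the $O(\log n)$ segment-endpoints visible in a bag can be represented by its equivalence class in a suitable matroid, and only polynomially many such classes are relevant per bag. Together with the clique-level state compression this should yield a $\poly(n)$ algorithm for Hamiltonian Cycle in $\NUBG_{\Hyp^2}(\rho,\nu)$ when an embedding is given. The main obstacle is precisely this last step: while the $2^{O(w)}\poly(n)$ DP handles $d\geq 3$ routinely, getting polynomial (rather than quasi-polynomial) time in $\Hyp^2$ requires combining clique symmetry with a rank-based collapse of matching states, a combination that is more delicate than what suffices for \IS or \textsc{$q$-Coloring}.
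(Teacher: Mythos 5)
Your $d\geq 3$ argument is plausible and follows the general clique-aware DP framework of~\cite{frameworkpaper}, but the $d=2$ case --- which is the whole point of the theorem --- has a genuine gap. You correctly observe that the naive DP on the $O(\log^2 n)$-weight flattened decomposition only gives $n^{O(\log n)}$, but the proposed fix (clique symmetry plus a rank-based collapse of the endpoint matchings) is not worked out, and there are concrete reasons to doubt it as stated. A bag of weighted width $O(\log^2 n)$ can contain $\Theta(\log^2 n)$ cliques whose sizes multiply to $n^{\Theta(\log n)}$, so even the degree-profile part of your state space is quasi-polynomial; moreover the ``matching on the $O(\log n)$ segment-endpoints visible in a bag'' is miscounted --- after expanding each class into its full clique, a bag can contain polynomially many vertices and hence polynomially many partial-path endpoints, and the pairing information lives on pairs of cliques with large multiplicities, not on $O(\log n)$ objects. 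The rank-based method gives a bound single-exponential in the \emph{bag size} of the traditional decomposition, which here is $\poly(n)$, so it does not by itself collapse anything to polynomial time. In short, the step you flag as ``delicate'' is exactly where the proof is missing, and it is not clear it can be completed along these lines.

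The paper takes a different route that sidesteps the weighted decomposition entirely: it first \emph{sparsifies the instance while preserving Hamiltonicity}. Using the clique partition induced by a tiling of diameter $2\rho-\eps$ (Lemma~\ref{lem:tilings}(ii)), it applies the lemma of Ito and Kadoshita (Lemma~\ref{lem:sampleedges}, \cite{ito-hamiltonian}) to keep only $O(\Delta^2)$ edges between each pair of cliques ($\Delta$ the maximum degree of $G_\cP$, a constant here), then deletes the vertices of each clique with no remaining external edge~\cite{ciac-hamiltonian}. This leaves an induced subgraph $G_3$ of $G$ with only $O(\Delta^3)=O(1)$ vertices per tile that has a Hamiltonian cycle iff $G$ does. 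Since $G_3$ is shallow, Theorem~\ref{thm:main} gives it genuine (unweighted) treewidth $O(\log n)$ for $d=2$ and $O(n^{1-1/(d-1)})$ for $d\geq 3$, and a standard single-exponential Hamiltonian Cycle algorithm~\cite{single-exponential} then runs in $2^{O(\tw(G_3))}\poly(n)$ time, i.e., $\poly(n)$ in the plane and $2^{O(n^{1-1/(d-1)})}$ in higher dimensions. If you want to salvage your write-up, replacing the rank-based speculation by this Hamiltonicity-preserving reduction to a shallow graph is the missing idea.
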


\begin{proof}
Let $\cT$ be the tiling of Lemma~\ref{lem:tilings}(ii) with diameter $2\rho-\epsilon$, where $\eps$ is small positive constant. Notice that vertices of $G$ assigned to the same tile are connected in $G$, thus $\cT$ induces a clique partition $\cP$ of $G$. Note that computing this partition requires that we have an embedding of $G$ available as input; this is the only place where we use the embedding in this proof.

We can use a lemma by Ito and Kadoshita~\cite{ito-hamiltonian} to reduce the number of points in each tile to a constant that depends only on the maximum degree $\Delta$ of $G_\cP$; see also~\cite{ciac-hamiltonian}. Their key lemma is the following.

\begin{lemma}[Ito and Kadoshita \cite{ito-hamiltonian}]\label{lem:sampleedges}
Let $\cP$ be a partition of $\,V(G)$ into cliques where the maximum degree of $G_\cP$ is $\Delta$. Then for each pair of distinct cliques $C,C'\in \cP$, there is a way to remove all but $O(\Delta^2)$ edges among those with one endpoint in $C$ and the other in $C'$ so that the resulting graph $G_1$ has a Hamiltonian cycle if and only if $G$ has a Hamiltonian cycle.
\end{lemma}

As observed by~\cite{ciac-hamiltonian}, if $G_1$ is connected, removing the vertices from each clique of the clique partition in $G_1$ that do not have an edge to a vertex of any other clique in the partition preserves the Hamiltonicity of $G_1$. We thus obtain a reduced graph $G_2$, which contains at most $O(\Delta^3)$ vertices per tile. Let $G_3$ be the supergraph of $G_2$ where we reintroduce the deleted edges between the remaining vertices. Then $G_3$ is an induced subgraph of $G$, where in each clique of the partition there are $O(\Delta^3)$ vertices. Moreover, it has a Hamiltonian cycle if and only if $G_2$ does, since we can obtain $G_2$ from $G_3$ by applying Lemma~\ref{lem:sampleedges}. Therefore, $G_3$ is an induced subgraph of $G$ that has a Hamiltonian cycle if and only if $G$ does, and given $G$ and its partition into cliques, $G_3$ can be computed in polynomial time.


Clearly $G_3 \in \SNUBG_{\Hyp^d}(\rho,\nu,k)$ for some constant $k$ that depends only on $\rho,\nu$ and $d$, so by Theorem~\ref{thm:main} $G_3$ has treewidth $O(\log n)$ if $d=2$ or $O(n^{1-1/(d-1)})$ if $d\geq 3$. We can compute a corresponding tree decomposition in $\poly(n)$ (respectively, in $2^{O(n^{1-1/(d-1)})}$) time.

Finally, we run a treewidth-based Hamiltonian cycle algorithm~\cite{single-exponential} on this tree decomposition of $G_3$, which takes $2^{O(\tw(G_3))}\poly(n)$ time.
\end{proof}


Using the ($\cP$-flattened)-treewidth-based algorithms of~\cite{frameworkpaper} with the treewidth bounds of Theorem~\ref{thm:main}, we get the following results.

\begin{theorem} Let $\rho>0$, $\nu\geq 1$, $2\leq d\in \Nats$, and $k\in \Nats_+$ be fixed constants. Then
\textsc{Dominating Set, Vertex Cover, Feedback Vertex Set, Connected Dominating Set, Connected Vertex Cover} and \textsc{Connected Feedback Vertex Set} can be solved in
\begin{itemize}
	\item $\poly(n)$ time in $\SNUBG_{\Hyp^2}(\rho,\nu,k)$
	\item $n^{O(\log n)}$ time in $\NUBG_{\Hyp^2}(\rho,\nu)$
	\item $2^{O(n^{1-1/(d-1)})}$ time in $\NUBG_{\Hyp^d}(\rho,\nu)$ if $d\ge 3$.
\end{itemize}
\end{theorem}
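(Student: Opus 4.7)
The plan is to follow exactly the template already used in the preceding theorems of this section: obtain an appropriate (possibly flattened) tree decomposition via Theorem~\ref{thm:main}, and then run the corresponding treewidth-based dynamic programming algorithm. For the shallow case, standard treewidth DP on $G$ suffices; for the non-shallow cases, we run the $\cP$-flattened treewidth DP developed in~\cite{frameworkpaper} on a greedy partition $\cP$. The six problems listed are all addressed individually in~\cite{frameworkpaper}, so the content of the present theorem is really just to plug in the new treewidth and $\cP$-flattened-treewidth bounds from Theorem~\ref{thm:main}.

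In more detail, I would first handle the shallow case $G\in \SNUBG_{\Hyp^2}(\rho,\nu,k)$: by Theorem~\ref{thm:main}(i), a tree decomposition of $G$ of width $O(\log n)$ is computable in polynomial time, so any single-exponential treewidth algorithm yields $2^{O(\log n)}\poly(n) = \poly(n)$ time. For \textsc{Dominating Set}, \textsc{Vertex Cover} and \textsc{Feedback Vertex Set} the classical DP with $2^{O(\tw)}\poly(n)$ running time (see~\cite{fptbook}) is enough; for the connected variants one needs the cut-and-count or rank-based single-exponential algorithms of~\cite{single-exponential} to stay within $2^{O(\tw)}\poly(n)$.

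For the two non-shallow cases $G\in \NUBG_{\Hyp^d}(\rho,\nu)$, $d=2$ and $d\geq 3$, I would first compute a greedy partition $\cP$ in polynomial time (Lemma~\ref{lem:greedypart} guarantees it is a $\kappa$-partition for some constant $\kappa$). By Theorem~\ref{thm:main}(iii)--(iv), a weighted tree decomposition of $G_\cP$ of weighted width $w_2 = O(\log^2 n)$ (resp.\ $w_d = O(n^{1-1/(d-1)})$) can be computed in $n^{O(\log n)}$ time (resp.\ $2^{O(n^{1-1/(d-1)})}$ time). Then, as in Sections~2.3 and~4 of~\cite{frameworkpaper}, one converts this weighted decomposition to a ``traditional'' tree decomposition of $G$ whose bags are unions of the $\cP$-classes appearing in each $G_\cP$-bag, and runs the treewidth DP of the corresponding problem on it, with the single twist that when enumerating states inside each bag it suffices to enumerate per $\cP$-class: within a partition class (which induces a $\kappa$-clique-coverable subgraph) the number of reasonable partial states is polynomial in the class size $|C|$, contributing a factor $\poly(|C|) = 2^{O(\log(|C|+1))}$ per class. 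Summed over the bag, this yields total DP cost $2^{O(\gamma(\mathrm{bag}))}\poly(n)$, matching the weighted width.

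The one step that requires care, and is the main obstacle, is extending this ``state per partition class'' argument from the simple problems (\IS, \textsc{Vertex Cover}, \textsc{Dominating Set}) to the connectivity-constrained problems (\textsc{Connected Dominating Set}, \textsc{Connected Vertex Cover}, \textsc{Connected Feedback Vertex Set}) as well as to \textsc{Feedback Vertex Set}, where the DP additionally tracks a partition of the bag describing how the selected vertices are connected through the already-processed subgraph. However, this is exactly what is done in~\cite{frameworkpaper}: the rank-based/cut-and-count machinery adapts to $\cP$-flattened treewidth because the $\kappa$-partition structure limits the number of relevant connectivity patterns contributed per partition class to $2^{O(\log(|C|+1))}$ as well. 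Citing those results, we conclude that each of the six problems is solved in $2^{O(w)}\poly(n)$ time on the computed decomposition, giving $\poly(n)$, $n^{O(\log n)}$, and $2^{O(n^{1-1/(d-1)})}$ for the three respective regimes, as claimed.
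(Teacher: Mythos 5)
Your proposal is correct and follows essentially the same route as the paper, which proves this theorem simply by invoking the $\cP$-flattened-treewidth-based algorithms of~\cite{frameworkpaper} for these six problems together with the decompositions supplied by Theorem~\ref{thm:main}; your extra discussion of per-class state counts and the rank-based handling of the connectivity constraints is exactly the content of the cited framework results. No gap.
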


\section{Lower bounds}\label{sec:lower}

The goal of this section is to prove Theorem~\ref{thm:lowerbound}, and to show that Euclidean lower bounds can be carried over to hyperbolic spaces of dimension at least 3.

\subsection{A lower bound for \IS in the hyperbolic plane}

Although the proof of Theorem~\ref{thm:lowerbound} is for a specific radius $\rho$, it will be apparent from the proof that it can be adapted to any other constant radius. Moreover, we state the theorem for $\nu=1$, and the lower bound then automatically follows for any $\NUBG_{\Hyp^2}(\rho,\nu)$ with $\nu >1$.

\begin{figure}[t]
\centering
\includegraphics[width=0.7\textwidth]{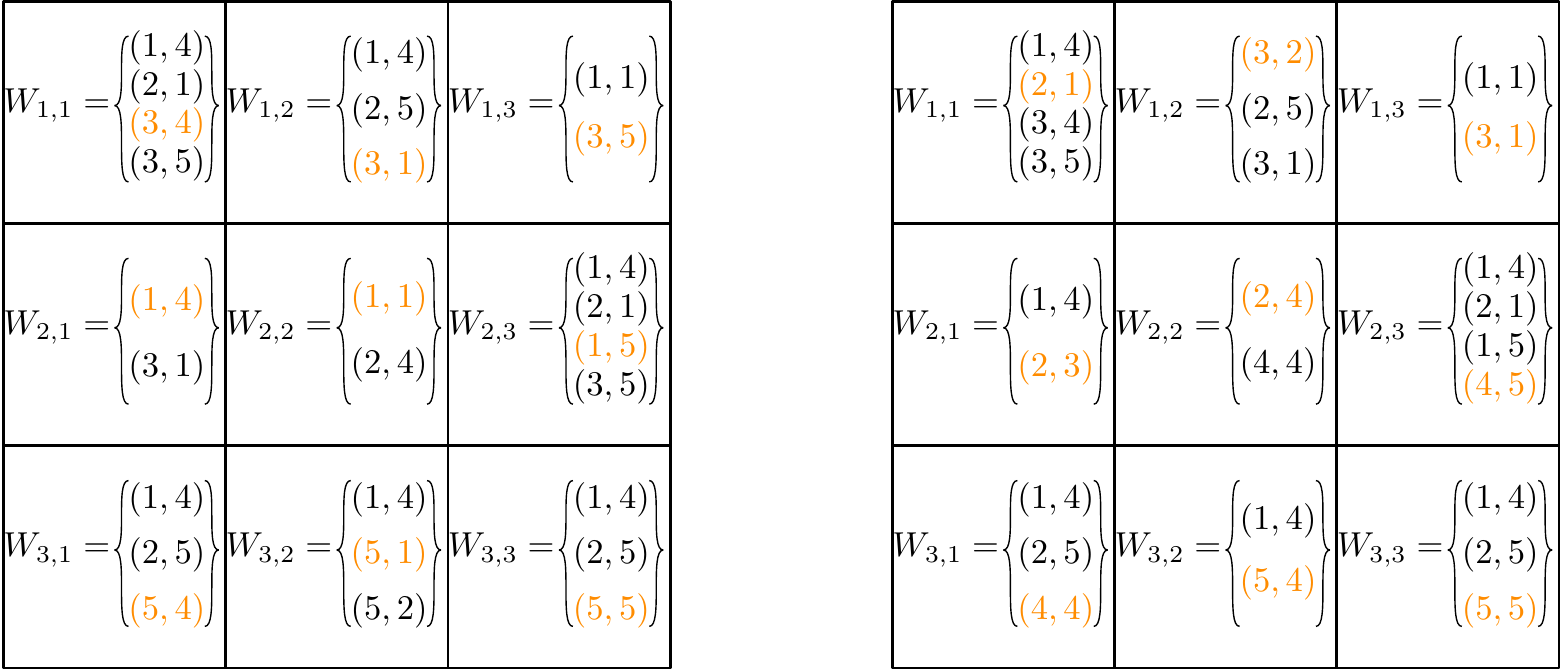}
\caption{In a \GT instance (left), one is given $k\times k$ sets $W_{a,b}\subseteq[n]\times[n]$, and the goal is to find a $w_{a,b} \in W_{a,b}$ for each $(a,b)\in [k]\times [k]$ so that horizontal/vertical neighbors share their first/second coordinate. In \GTleq (right) we require that these coordinates form non-decreasing sequences left-to-right and top-to-bottom. The examples have $k=3$ and $N=5$.}\label{fig:gt}
\end{figure}

The idea of the proof is to embed a blown-up grid structure in the hyperbolic plane, giving high multiplicity to the vertices of a plane tiling. The structure is then used to realize an instance of \GT~\cite{fptbook}. See Figure~\ref{fig:gt} for an illustration.
In an instance of \GT, we are given an integer $k$, an integer $n$, and a
collection $\cS$ of $k^2$ non-empty sets $W_{a,b} \subseteq \civ{n} \times
\civ{n}$ for $1 \leq a,b \leq k$. The goal is to decide if
there is a selection $w:[k]^2 \rightarrow [n]^2$ such that
$w_{a,b}\in W_{a,b}$ for each $1 \leq a,b \leq k$ and
\begin{itemize}
\setlength\itemsep{0em}
\item if $w_{a,b}=(x,y)$ and $w_{a+1,b}=(x',y')$, then $x=x'$;
\item if $w_{a,b}=(x,y)$ and $w_{a,b+1}=(x',y')$, then $y=y'$.
\end{itemize}

One can picture these sets in a $k\times k$ table: in each cell $(a,b)$, we
need to select an element from $W_{a,b}$ so that the
elements selected from horizontally neighboring cells agree in the
first coordinate, and those from vertically neighboring cells agree
in the second coordinate.
\GTleq is a variant of \GT where instead of equalities between neighboring cells, we only require inequalities:
\begin{itemize}
\setlength\itemsep{0em}
\item if $w_{a,b}=(x,y)$ and $w_{a+1,b}=(x',y')$, then $x\leq x'$;
\item if $w_{a,b}=(x,y)$ and $w_{a,b+1}=(x',y')$, then $y\leq y'$.
\end{itemize}

Our reduction to prove a lower bound for \IS in hyperbolic noisy uniform disk graphs has three steps: first, we reduce from a version of satisfiability to \GT, then invoke a reduction from \GT to \GTleq, and finally we reduce from \GTleq to \IS in noisy unit disk graphs.

\paragraph*{Step 1: from \niceSAT to \GT.}
A $(3,3)$-CNF formula is a boolean formula in conjunctive normal form which has at most three variables per clause, and each variable occurs at most three times: we call the decision problem of such formulas the $(3,3)$-SAT problem. By~\cite[Proposition 19]{frameworksArxiv} there is no $2^{o(n)}$ algorithm for \niceSAT under ETH, where $n$ is the number of variables. An instance of \GTn($n,\log n$) is a \GT instance where $k\leq \log n$ and for each $(a,b)\in [k]\times [k]$, we have $W_{a,b} \subseteq \civ{n} \times \civ{n}$.


It is known that both \GT and \GTleq has no $f(k)n^{o(k)}$ algorithm under ETH~\cite{fptbook}, and the following lemma seems to be a direct corollary of this for \GT($n,k$). Notice however that simply substituting $k=\log N$ is formally incorrect, as the known lower bound excludes the existence of an algorithm that works for \emph{all} $k$; it is still possible in principle that for all $k\leq \log n$ an $n^{o(k)}$ algorithm could be given. The next lemma excludes this possibility.

\begin{lemma}\label{lem:lowerS1}
There is no $2^{o(\log^2(n))}$ algorithm for \GTn$(n,\log n)$, unless ETH fails.
\end{lemma}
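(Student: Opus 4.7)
The plan is to reduce directly from \niceSAT to \GTn$(n,\log n)$ with a carefully chosen parameter tradeoff: naively substituting $k=\log n$ into the standard $f(k)\,n^{o(k)}$ lower bound for \GT is insufficient, because that bound is uniform in $k$ and does not by itself preclude a specialized algorithm for the restricted class \GTn$(n,\log n)$.

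Given a \niceSAT instance $\varphi$ on $N$ variables (and hence $O(N)$ clauses, since each variable occurs at most thrice), I would first preprocess $\varphi$ into an equisatisfiable instance $\varphi'$ whose constraints each involve at most two variables. For every 3-clause $C=(l_1\vee l_2\vee l_3)$, introduce a fresh auxiliary variable $d_C\in\{1,2,3\}$ (encoded in two Booleans, with a single binary constraint ruling out the unused configuration) and replace $C$ by the three binary implications $(d_C{=}i)\Rightarrow l_i$ for $i\in\{1,2,3\}$. This yields $\varphi'$ with $N'=O(N)$ variables and only binary constraints; clearly $\varphi'$ is satisfiable iff $\varphi$ is.

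I would then set $k=\lfloor\sqrt{N'}\,\rfloor$, partition the variables of $\varphi'$ arbitrarily into groups $V_1,\ldots,V_k$ of size at most $\lceil N'/k\rceil$, and put $n=2^{\lceil N'/k\rceil}$, so that each $\alpha\in[n]$ encodes a truth assignment of a single group. With this choice $k\leq \log n$ and $\log^2 n=\Theta(N)$. For every cell $(a,b)\in[k]^2$ let $W_{a,b}$ be the set of pairs $(\beta,\alpha)\in[n]^2$ such that interpreting $\alpha$ as a truth assignment to $V_a$ and $\beta$ as one to $V_b$ satisfies every binary constraint of $\varphi'$ whose two variables both lie inside $V_a\cup V_b$. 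The grid-tiling agreement conditions (the first coordinate is constant on each ``column'' of fixed $b$, the second on each ``row'' of fixed $a$) force any solution to be of the form $w_{a,b}=(\alpha_b,\alpha_a)$ for a tuple $(\alpha_1,\ldots,\alpha_k)\in[n]^k$; since every binary constraint of $\varphi'$ involves variables in some $V_a\cup V_b$, the cell memberships collectively enforce $\varphi'$. Hence the constructed \GT instance is solvable iff $\varphi$ is satisfiable, and the reduction runs in $k^2\cdot n^2\cdot\mathrm{poly}(N)=2^{O(\sqrt{N})}$ time.

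Finally, if a $2^{o(\log^2 n)}$ algorithm existed for \GTn$(n,\log n)$, running it on the reduced instance would decide $\varphi$ in $2^{O(\sqrt{N})}+2^{o(\log^2 n)}=2^{o(N)}$ time, contradicting \cite[Proposition~19]{frameworksArxiv}. The main obstacle is the preprocessing step: one has to verify that the auxiliary-variable rewriting preserves equisatisfiability and keeps $N'$ linear in $N$, so that the crucial identity $\log^2 n=\Theta(N)$ is maintained. The rest of the argument is parameter bookkeeping around the tradeoff $k=\sqrt{N}$, $n=2^{\sqrt{N}}$.
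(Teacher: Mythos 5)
Your overall strategy is the same as the paper's (split the \niceSAT instance into $\Theta(\sqrt{N})$ pieces, let each grid coordinate range over the $2^{O(\sqrt{N})}$ assignments of a piece, so that $k\leq\log n$ and $\log^2 n=\Theta(N)$), but your concrete construction has a genuine gap in the backward direction. The grid-tiling agreement conditions only force the first coordinate to be a function of the column index and the second coordinate to be a function of the row index; that is, a solution has the form $w_{a,b}=(\beta_b,\alpha_a)$ for two \emph{independent} tuples $(\beta_1,\dots,\beta_k)$ and $(\alpha_1,\dots,\alpha_k)$. Your claim that solutions are forced to have the form $(\alpha_b,\alpha_a)$ with a single tuple is not justified: nothing in your definition of $W_{a,b}$, and in particular not the diagonal cells as you defined them, forces $\beta_c=\alpha_c$. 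Hence cell $(a,b)$ only certifies the constraints between $V_a$ and $V_b$ under the mixed assignment $(\alpha_a,\beta_b)$, and cell $(b,a)$ only under $(\beta_a,\alpha_b)$, which does not imply satisfaction under $(\alpha_a,\alpha_b)$. Even for your clause-type constraints this fails concretely: for a 2-clause $u\vee v$ with $u\in V_a$, $v\in V_b$, the choices $\alpha_a(u)=\alpha_b(v)=0$, $\beta_a(u)=\beta_b(v)=1$ pass both cell checks while no consistent global assignment derived from the $\alpha$'s (or the $\beta$'s) is certified. So a solvable grid-tiling instance need not come from a satisfying assignment, and the ETH contradiction collapses. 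This is exactly the role of the paper's consistency condition (i): there the groups are clause groups whose variable sets overlap, and the diagonal cells force the row- and column-assignments of each group to encode the same assignment. In your disjoint-group setting you must add the analogous requirement explicitly, e.g.\ restrict $W_{a,a}$ to pairs $(\alpha,\alpha)$ with $\alpha$ satisfying the constraints internal to $V_a$; with that restriction the diagonal forces $\beta_c=\alpha_c$ and the argument goes through.

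A second, smaller flaw: if $d_C$ is encoded by two Boolean variables, then $(d_C{=}i)\Rightarrow l_i$ is a constraint on \emph{three} Boolean variables, not two, so its variables can end up in three different groups and no cell ever checks it. Either keep $d_C$ as a single variable of domain $\{1,2,3\}$ (binary constraints over a mixed domain are harmless for your counting; $n$ remains $2^{O(\sqrt{N})}$), or group clauses rather than variables as the paper does, so that all variables of a constraint are co-located with its group. With these two repairs, your parameter bookkeeping ($k=\lfloor\sqrt{N'}\rfloor\leq\log n$, instance size $2^{O(\sqrt{N})}$, $\log^2 n=\Theta(N)$) is sound and the conclusion follows as in the paper.
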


\begin{proof} Let $\phi$ be a \niceCNF formula of $n$ variables. In a
polynomial preprocessing step, we can eliminate clauses of size $1$ and
variables that occur only once; we henceforth assume that every variable in
$\phi$ occurs once or twice, and each clause has two or three literals. Notice
that $\phi$ has $m<2n$ clauses. We group these clauses into $\sqrt{m}$ groups
$G_1,\dots,G_{\sqrt{m}}$, each of size $\sqrt{m}$. In each group $G_a$ there are at
most $3\sqrt{m}$ variables that occur; let us denote these variables by $V_a$. These variables have at most
$N\eqdef 2^{3\sqrt{m}}$ possible truth assignments. We enumerate all possible
assignments for $V_a$, and index them from $1$ to at most $N$, getting the assignments $V_a(1), V_a(2),\dots$.

We create an instance of \GT where we index the rows and columns of the grid
with these groups of clauses, so there are $k\eqdef\sqrt{m}\leq \log N$ rows and
columns. For each pair $a,b$ with $1\leq a,b \leq k$, we define the set $W_{a,b}$ as follows.
\smallskip
The set $W_{a,b}$ contains a pair $(x,y)\in [N]\times[N]$ if and only
if
\begin{itemize}
\item[(i)] assignments of index $x$ and $y$ are consistent, i.e., for any variable
that occurs in both $V_a$ and $V_b$ the same truth value is given in assignment $V_a(x)$ as in assignment $V_b(y)$, and
\item[(ii)] all clauses in $G_a$ and $G_b$ are
satisfied by setting $V_a$ according to $V_a(x)$ and $V_b$ according to $V_b(y)$.
\end{itemize}

The instance created this way has at most $N\cdot
N=2^{O(\sqrt{m})}=2^{O(\sqrt{n})}$ entries in each tile, and altogether $m$
tiles, so the size is $2^{O(\sqrt{n})}\cdot m = 2^{O(\sqrt{n})}$.

If $\phi$ is satisfiable, then the satisfying assignment defines a solution to
this \GT instance. If the \GT instance has a solution, then the union of the
assignments selected in each tile is consistent (that is, each variable
receives a unique value) by property (i). Moreover, in each group the clauses
are satisfied, therefore the formula is satisfied. Hence, the \GT instance has
a solution if and only if $\phi$ is satisfiable.
Given $\phi$, the \GT instance can be created in $2^{O(\sqrt{n})}$ time.
Suppose now that there is a $2^{o(\log^2(N))}$ algorithm for \GTn($N,\log N$).
Applying such an algorithm to the instance created above, we would get an
algorithm for \niceSAT with running time
\[2^{O(\sqrt{n})} + 2^{o(\log^2(N))} = 2^{O(\sqrt{n})} + 2^{o((c\sqrt{n})^2)}=2^{o(n)},\]
which contradicts ETH.
\end{proof}

\paragraph*{Step 2: from \GT to \GTleq.}
\begin{lemma}\label{lem:lowerS2}
There is no $2^{o(\log^2(n))}=n^{o(\log n)}$ algorithm for \GTleq with parameters $(n,\log n)$, unless ETH fails.
\end{lemma}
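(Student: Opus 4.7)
The plan is to reduce \GTn$(n,\log n)$ to \GTleq via the classical Marx-style reduction (see, e.g., Cygan \emph{et~al.}~\emph{Parameterized Algorithms}, Theorem 14.30), and then chain it with Lemma~\ref{lem:lowerS1}. The crucial point is that this reduction must be a polynomial blowup in both the universe and the grid side-length, so that the lower bound transfers with the same form $n^{o(\log n)}$ rather than becoming weaker.

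Concretely, I would proceed as follows. First, take a \GT instance with grid side $k$ and universe $[n]$. We build a \GTleq instance on a grid of side $k'=O(k)$ over a universe $[n']$ with $n'=\poly(n,k)$. The idea is to encode each pair $(x,y)\in W_{a,b}$ by a pair whose first and second coordinates carry the original information together with the row/column index $(a,b)$, and to insert ``separator'' cells between every two consecutive original cells that force the encoded first coordinate to stay in a narrow band. With the right encoding, the non-decreasing sequence of first coordinates along a row of the new grid is pinched between an upper and a lower bound that coincide only when the original first coordinates agree, and symmetrically for second coordinates along columns. One can realize this by using an enlarged grid in which every other row and column is filled with ``monotone bridge'' cells whose contents enforce both $x\leq x'$ and $x\geq x'$ (equivalently $n{+}1{-}x \leq n{+}1{-}x'$ in an auxiliary track), which together yield $x=x'$. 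The second step is to verify that this preserves the semantics: the original \GT instance is a yes-instance iff the produced \GTleq instance is, and the reduction runs in polynomial time with $k'=O(k)$ and $n'=\poly(n)$.

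Finally I chain the reduction with Lemma~\ref{lem:lowerS1}. Starting from a \GTn$(n,\log n)$ instance, the reduction produces a \GTleq instance with side $k'=O(\log n)$ and universe $n'=\poly(n)$, so $k'=O(\log n')$, i.e.\ the new instance lies in the \GTleq family parameterized as $(n',\log n')$. If there were an algorithm for \GTleq running in time $(n')^{o(\log n')}$, composing it with the reduction would solve the original \GT instance in time
\[
\poly(n)\;+\;\poly(n)^{o(\log n)} \;=\; 2^{o(\log^2 n)},
\]
contradicting Lemma~\ref{lem:lowerS1} and hence ETH.

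The main obstacle is designing the encoding in the reduction so that (a) a single $\leq$ constraint suffices in each direction (since \GTleq only gives us one-sided inequalities), while still forcing equality, and (b) the blowup is polynomial, not super-polynomial. Both issues are handled by the standard trick of pairing an ascending track with a descending (complemented) track so that two opposing $\leq$ constraints compose to equality, and by choosing the separator cells to have only $O(\poly(n))$ entries. Once that is set up, everything else is bookkeeping on the parameters $n'$ and $k'$.
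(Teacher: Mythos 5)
Your proposal is correct and follows essentially the same route as the paper: invoke the known \GTn-to-\GTleqn{} reduction of Cygan~\etal{} (Theorem~14.30), note that it blows up the universe only polynomially and the grid side only by a constant factor, and chain it with Lemma~\ref{lem:lowerS1} to transfer the $2^{o(\log^2 n)}$ lower bound. The only difference is presentational: the paper cites the reduction as a black box with its exact parameters $(3n^2(k+1)+n^2+n,\,4k)$, whereas you re-sketch its internal encoding, which is unnecessary but harmless.
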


\begin{proof}
The Lemma follows from Lemma~\ref{lem:lowerS1} and the reduction of Cygan~\etal~\cite[Theorem 14.30]{fptbook}, which takes an instance of \GTn($n,k$) and in polynomial time creates an equivalent instance of \GTleq with parameters $(3n^2(k+1)+n^2+n,4k)$.
\end{proof}

\paragraph*{Step 3: from \GTleq to \IS in $\NUBG_{\Hyp^2}(\rho,1)$.}
Given Lemma~\ref{lem:lowerS2}, all we need now to prove Theorem~\ref{thm:lowerbound} is the following.


\begin{restatable}{lemma}{lemconstruction}\label{lem:construction}
Given an instance $\cI$ of \GTleq with parameters $(n,\log n)$, we can create in $\poly(n)$ time a graph $H\in \NUBG_{\Hyp^2}(\rho,1)$ on $\poly(n)$ vertices which has an independent set of a certain size $k$ if and only if $\,\cI$ is a yes-instance.
\end{restatable}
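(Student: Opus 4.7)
The plan is to build a geometric realization of the \GTleq instance using radius-$\rho$ disks in $\Hyp^2$ anchored on the regular $\{5,4\}$ tiling, whose pentagons have edge length exactly $2\rho$: the formula $|vp|=\cosh^{-1}(\cos(\pi/p)/\sin(\pi/q))$ from the proof of Lemma~\ref{lem:tilings}(i) specializes at $p=5,\,q=4$ to the half-edge length $\rho=\cosh^{-1}(\cos(\pi/5)/\sin(\pi/4))$. Four pentagons meet at every vertex, so the 1-skeleton of the tiling supplies a natural hyperbolic four-valent grid. Two radius-$\rho$ disks centered at adjacent tiling vertices have center-to-center distance exactly $2\rho$, which in the noisy model with $\nu=1$ is the ambiguous regime where we may freely include or omit the corresponding edge.

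First I would carve out a $\log n\times \log n$ array of disjoint ``cell gadgets'' $G_{a,b}$ inside this tiling, one per cell of the \GTleq instance, fitting them into a region of diameter $O(\log n)$ using the exponential volume growth of $\Hyp^2$. Each $G_{a,b}$ holds one disk $d_{a,b,x,y}$ per element $(x,y)\in W_{a,b}$, with all its centers packed into a subregion of diameter strictly less than $2\rho$ so that these disks form a clique in $H$; any independent set therefore picks at most one disk per gadget, and we aim for independent sets of size $k=\log^2 n$. To encode the $\leq$-constraints I arrange the positions inside each $G_{a,b}$ along two short perpendicular geodesic segments $\ell^h_{a,b}$ and $\ell^v_{a,b}$ (one per coordinate), parametrized so that the position of $d_{a,b,x,y}$ is a monotone function of $x$ along $\ell^h_{a,b}$ and of $y$ along $\ell^v_{a,b}$. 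The hyperbolic translation along a row of the tiling that sends $G_{a,b}$ to $G_{a+1,b}$ then maps the $x$-parametrization of one gadget onto that of the other, so that the center-to-center distance between $d_{a,b,x,y}$ and $d_{a+1,b,x',y'}$ depends only on the pair $(x,x')$. By calibrating the translation length I force this distance to equal $2\rho$ whenever $x>x'$ and strictly exceed $2\rho$ otherwise; I include precisely the ``violating'' edges via the noisy ambiguity, and do the analogous thing vertically for the second coordinate. All remaining cross-gadget pairs are placed at distance greater than $2\rho$ and hence remain non-adjacent.

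Correctness then follows from the clique-per-gadget property together with the violating-edge encoding: a size-$\log^2 n$ independent set must hit every gadget exactly once and cannot contain a monotonicity-violating pair, hence corresponds to a valid \GTleq selection; conversely, any valid selection yields such an independent set. The total disk count is $O(\log^2(n)\cdot n^2)=\poly(n)$, and all coordinates can be computed in polynomial time by iterating the generators of the tiling's symmetry group. The main obstacle will be the distance calibration: we need $n^2$ centers per gadget inside a sub-$2\rho$ ball, while simultaneously realizing an exact $=2\rho$ threshold on each monotonicity-violating cross-gadget pair and strict inequality $>2\rho$ on the rest. I expect to handle this by a direct computation in hyperbolic trigonometry: after placing the disks at carefully scaled positions along $\ell^h_{a,b}$ and $\ell^v_{a,b}$ and choosing the inter-gadget translation length, one verifies via the hyperbolic law of cosines that the induced distance function on the coordinate axes is strictly monotone and crosses $2\rho$ at the desired threshold.
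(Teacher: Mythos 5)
Your overall reduction skeleton (same $(5,4)$ tiling, same radius, cliques per cell, independent set of size one-per-cell, inequalities encoded by which threshold-distance pairs become edges) matches the paper's intent, but the geometric core of your plan cannot be realized. You require a $\log n\times\log n$ array of gadgets, each of diameter less than $2\rho$, such that gadgets of grid-adjacent cells nearly touch (some cross pairs at distance exactly $2\rho$) while all points lying in distinct gadgets are at distance at least $2\rho$. This is impossible once $\log n$ is large. Indeed, pick one embedded point (an ``anchor'') from each gadget: the $\log^2 n$ anchors are pairwise at distance at least $2\rho$, and anchors of grid-adjacent gadgets are at distance less than $6\rho$ (at most $2\rho$ inside one gadget, $2\rho$ across, $2\rho$ inside the other). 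The graph on the anchors with connection threshold $6\rho$ therefore lies in $\SNUBG_{\Hyp^2}(3\rho,1,k')$ for a constant $k'$ (the anchors are $2\rho$-separated, so any ball of radius $3\rho$ contains $O(1)$ of them), and it has $m=\log^2 n$ vertices, so by Theorem~\ref{thm:treewidth}(i) its treewidth is $O(\log m)=O(\log\log n)$; yet it contains the $\log n\times\log n$ grid as a subgraph, whose treewidth is $\log n$ --- a contradiction for large $n$. Intuitively, $\Hyp^2$ admits no constant-scale copy of a large square lattice, and this is precisely why the paper does \emph{not} make neighboring cells geometrically adjacent: it embeds a \emph{subdivision} of the grid into the tiling (cell vertices at intersections of concentric circles $C_a$ and radial half-lines $\ell_b$, joined by long ``canonical paths''), and propagates the $\le$ constraints along these paths via intermediate vertices that each carry $n$ copies indexed $(x,1)$ or $(1,y)$.

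Even locally, between two adjacent gadgets, your calibration cannot hold as stated. Once the position of $d_{a,b,x,y}$ inside a gadget depends on both $x$ and $y$ (which it must, to encode both coordinates), the distance to a point of the neighboring gadget also depends on $y,y'$, so it cannot ``depend only on the pair $(x,x')$''; and demanding distance exactly $2\rho$ for \emph{all} violating pairs is overdetermined, since the points $d_{a,b,x,y}$, $y\in[n]$, would have to lie in the common intersection of circles of radius $2\rho$ about many distinct centers, which contains at most two points. One could relax to strict inequalities with a quantitative gap dominating the transverse offsets, but that is where the real work would lie and it is not carried out --- and in any case it cannot repair the global placement. The paper sidesteps both issues with a multiset embedding: all (up to $n^2$) copies attached to a vertex of the subdivided grid are placed at literally the \emph{same} point, so every cross pair between adjacent grid vertices is at distance exactly $2\rho=2\rho\nu$, and the edges encoding ``$x>x'$'' (resp.\ ``$y>y'$'') can then be chosen purely combinatorially within the noisy-threshold freedom.
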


\begin{proof}
The construction is built on the uniform tiling $\cT$ with regular pentagon tiles, where at each vertex four pentagons meet. (This is the uniform tiling with Schl\"afli symbol $(5,4)$.) Let $\delta$ be the diameter of the tiles. Let $o$ be the origin, chosen at the center of a pentagon. The side length of the pentagons is exactly $\rho\eqdef\cosh^{-1}(\cos(\pi/5)/\sin(\pi/4))$; see the proof of Lemma~\ref{lem:tilings}(i) for a similar calculation. Let $\Gamma$ be the infinite plane graph defined by the vertices and edges of the tiling $\cT$. See Figure~\ref{fig:gridembed}.

\begin{figure}[t]
\centering
\includegraphics[width=.7\textwidth]{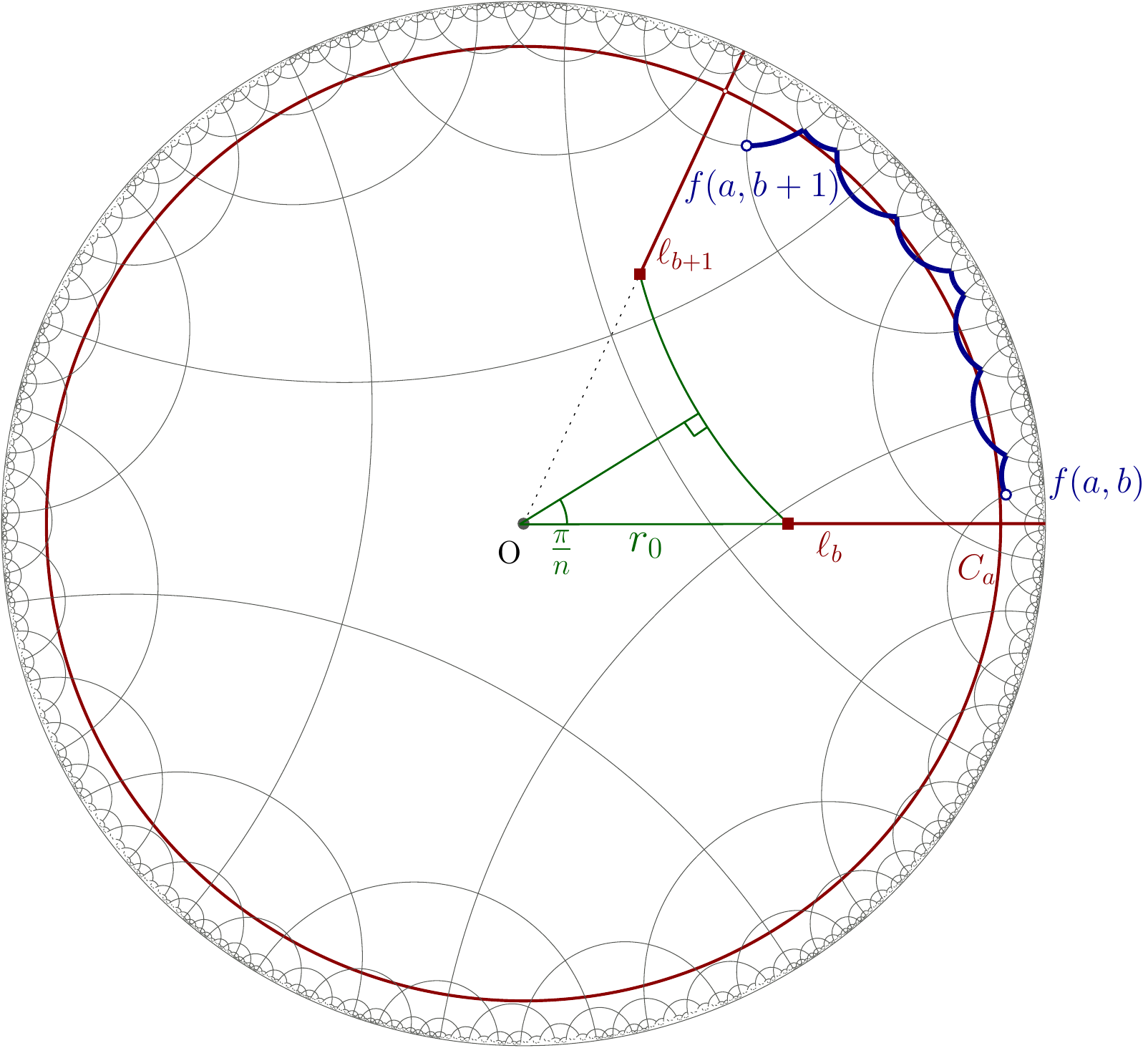}
\caption{Finding vertices for the grid points $(a,b)$ and $(a,b+1)$, together with a canonical path between them.}\label{fig:gridembed}
\end{figure}

\medskip
\claim{
The graph $\Gamma$ contains a subgraph $G$ of size $\poly(n)$ that is a subdivision of the $\log n\times\log n$ grid.
}
{
Let $\beta\eqdef 2\pi/n$, and let $r_0\eqdef \sinh^{-1} (\sinh(4\delta) n)=O(\log n)$. 

From this point onward we use polar coordinates $(r,\phi)$ around $o$. For
each $a\in [\log n]$ let $C_a$ be the circle around the origin with
radius $r_0+9\delta \cdot a$ (see Figure~\ref{fig:gridembed}). For each $b\in
[\log n]$ let $\ell_b$ be the radial half-line with equation $\phi=b\cdot
\beta$ whose starting point is on $C_1$ (i.e., $o\not\in \ell_b$). Let
$S_a\subset \cT$ be the set of tiles intersected by $C_a\, (a\geq 0)$. Note
that $\Nei_{\Hyp^2}(C_a,4\delta)\cap
\Nei_{\Hyp^2}(C_{a+1},4\delta)=\emptyset$. Similarly,  we define the the tile
set $S'_b$ as the set of tiles intersected by $\ell_b$, and claim that
$\Nei_{\Hyp^2}(\ell_b,4\delta)\cap
\Nei_{\Hyp^2}(\ell_{b+1},4\delta)=\emptyset$. To see this, it is sufficient to
show that $\dist(\ell_b,\ell_{b+1})> 8\delta$. Note that the distance of
$\ell_b$ and $\ell_{b+1}$ is realized at their starting points, so it can be
calculated exactly using the right-angle triangle given by the line through
$\ell_b$, the angle bisector of $\ell_b$ and $\ell_{b+1}$, and the line
through the starting points of $\ell_{b}$ and $\ell_{b+1}$, depicted in
green in Figure~\ref{fig:gridembed}. The triangle has angle $\pi/n$ at $o$,
and its hypotenuse has length $r_0$.
\begin{align*}
\dist_{\Hyp^2}(\ell_b,\ell_{b+1})&=2\sinh^{-1}\left(\sin \frac{\pi}{n}\sinh r_0\right)\\
&> 2\sinh^{-1}\left(\frac{\pi}{2n}\sinh(4\delta) n\right)\\
&>8\delta.
\end{align*}

For each  $(a,b)\in [\log n]\times[\log n]$, let $f(a,b)$ be an arbitrary vertex of the tile containing the intersection of $C_a$ and $\ell_b$, i.e., the tile containing the point $(r_0+9\delta\cdot a, \beta\cdot b)$.  (By introducing small perturbations to $\beta$ and $r$, we can make sure that these intersections are all interior points of some tile.)
The above observations about the distance of neighboring circles $C_a$ and half-lines $\ell_b$ imply that each tile contains at most one point from the image of $f$.

We can connect $f(a,b)$ and $f(a,b+1)$ in $\Gamma$ using only vertices and edges of the tile set $S_a$. Similarly, we can connect $f(a,b)$ and $f(a+1,b)$ using only vertices and edges of the tile set $S'_b$. We create such paths for all neighboring pairs of the grid $[\log n]\times[\log n]$. Note that paths created this way may have overlapping inner vertices, but we can avoid such overlaps by doing local modifications in the $4\delta$-neighborhoods of the tiles containing $f(a,b)$ for each $(a,b)\in [\log n]\times[\log n]$. The overlaps can occur only if  the vertex $v\in V(\Gamma)$ is incident to a tile $T\in S_a$ and a (not necessarily distinct) tile $T'\in S'_b$; but this can only happen if $v \in \Nei_{\Hyp^2}(C_a, \delta)\cap \Nei_{\Hyp^2}(\ell_b, \delta)$, therefore $v\in \Nei_{\Hyp^2}(C_a\cap\ell_b, 3\delta) \subseteq B(f(a,b),4\delta)$, so indeed it is sufficient to create a subdivision of a $4$-star in the intersection of $\Gamma$ with the balls $B(f(a,b),4\delta)$ with given boundary vertices. After the local modifications, we are left with paths whose interiors are vertex disjoint. We call these paths \emph{canonical paths}.

These canonical paths give a subgraph of $\Gamma$ that is a subdivision of the $\log n\times\log n$ grid. Notice that all canonical paths stay within a radius $r_0+9\delta\cdot \log n+4\delta=O(\log n)$ ball around the origin; this ball contains $\poly(n)$ tiles by Proposition~\ref{prop:tilesballs} (i), and therefore the number of vertices in this ball (and in the subgraph) is $\poly(n)$. 
}\index{canonical!t@\tilde path}
\medskip

Now consider an instance of \GTleq with parameters $(n,\log n)$ and sets $W_{a,b}\; (1\leq a,b \leq \log n)$. Using the claim above, we can construct a graph $H\in \NUBG_{\Hyp^2}(\rho,1)$ with the desired properties, as explained next.
Let $G$ be the subdivision of the $[\log n]\times[\log n]$ grid provided by the claim above. Vertices of $G$ are identified with the corresponding point in $\Hyp^2$. At each vertex of $G$, we place \emph{at most} $n^2$ copies of the vertex, and we index these copies by some specified subset of $[n]\times[n]$, as explained later. The copies will form a multiset embedding of the vertex set of the noisy unit disk graph $H$ that we are constructing. This will result in a noisy unit disk graph  where points placed in the same vertex of $G$ are all connected, and any pair of points further away than $2\rho$ are never connected; for points at exactly $2\rho$ distance away (i.e., at neighboring vertices of $G$), they are either connected or not connected. The size is $|V(H)|\leq n^2|V(G)|=\poly(n)$.

Let $f(a,b)$ denote the vertex of $G$ corresponding to the grid point $(a,b)$, and let $\mypath(f(a,b), f(a+1,b))$ denote the set of vertices in the canonical path between $f(a,b)$ and $f(a+1,b)$. Similarly, $\mypath(f(a,b), f(a,b+1))$ denotes the set of vertices in the canonical path between $f(a,b)$ and $f(a,b+1)$. We refer to a point of $H$ of index $(x,y)\in [n]\times[n]$ at vertex $v$ of $G$ as $v(x,y)$. There is a natural orientation of $G$ where we orient the edge from $u$ to $v$ if $u,v$ occurs in this order on the canonical $G$-path $\mypath(f(a,b), f(a+1,b))$ or $\mypath(f(a,b), f(a,b+1))$. 

We now define the vertices of $H$. For each vertex $v\in V(G)$, we define a subset of valid indices from $[n]\times[n]$, each of which defines a point in the embedding of $H$.
For each $v=f(a,b)$ where $(a,b)\in [\log n]\times[\log n]$, we create a point $v(x,y)$ if and only if $(x,y)\in W_{a,b}$. If $v$ is an internal vertex of $\mypath(f(a,b),f(a+1,b))$, then we only create the points $v(x,1)$ for each $x\in [n]$, and if $v$ is an internal vertex of $\mypath(f(a,b), f(a,b+1))$, then we only create the points $v(1,y)$ for each $y\in [n]$.
The edges of $H$ are defined as follows. Let $uv$ be an arc of $G$.
If $(uv)\in \mypath(f(a,b), f(a+1,b))$, then connect $u(x,y)$ to $v(x',y')$ if and only if $x > x'$.
If $(uv)\in \mypath(f(a,b), f(a,b+1))$, then connect $u(x,y)$ to $v(x',y')$ if and only if $y > y'$. This finishes the construction of $H$. It remains to show that $H$ has an independent set of size $|V(G)|$ if and only if the original \GTleq
 instance has a solution.

\medskip
Consider a solution $s:[\log n]\times[\log n]\rightarrow [n]\times[n]$ of
the instance of \GTleq. Let us use the notation
$s(a,b)\eqdef(x^s_{a,b},y^s_{a,b})$. We can construct an independent set $I$
in $H$ of size $|V(G)|$ based on $s$ the following way. For each $v=f(a,b)$,
add $v(x^s_{a,b},y^s_{a,b})$ to $I$. If $v$ is an internal vertex of
$\mypath(f(a,b), f(a+1,b))$, we put $v(x^s_{a,b},1)$ into $I$. If $v$ is an
internal vertex of $\mypath(f(a,b), f(a,b+1))$, then we put $v(1,y^s_{a,b})$
into $I$. By our definition of $H$, edges only occur between vertices of $H$
that correspond to neighboring vertices on a canonical path. For each arc of a canonical path, we have selected the same first/second index vertex from $H$ at the source and the target, except at the end of the path, where we selected a
first/second index that is greater or equal than the first/second
index at the source of the arc (due to the \GTleq solution). Therefore, none of these vertex pairs form an edge in $H$, and $I$ is an independent set.

Conversely, suppose that $H$ has an independent set $I$ of size $|V(G)|$.
Since points of $H$ corresponding to any given vertex in $G$ form a clique,
$I$ has to contain exactly one point $v(x,y)$ for each $v\in V(G)$. 
For each $(a,b)\in [\log n]\times[\log n]$ let
$w_{a,b} \in [n]\times [n]$ be the index of the vertex in $I$ that is assigned by the embedding to $f(a,b)$.
We claim that $s(a,b)\eqdef w_{a,b}\; ((a,b)\in
[\log n]\times[\log n])$ is a solution to the \GTleq instance. To see this,
consider a canonical path $\mypath(f(a,b), f(a+1,b))$. Let $uv$ be an arc
along this path. By the definition of the edges of $H$, if $u(x,y)\in I$ and
$v(x',y')\in I$, then $x\leq x'$. Therefore, the first coordinate
of $w_{a,b}$ is less or equal to the first coordinate of $w_{a+1,b}$. An
analogous argument shows that the inequality is carried correctly also from
$w_{a,b}$ to $w_{a,b+1}$. Therefore, $H$ has an independent set of size
$k=|V(G)|$ if and only if the \GTleq instance has a solution.
\end{proof}

\begin{remark}
The above proof uses a specific radius $\rho$, but one could adapt it to any other constant radius. First, notice that we could use a different regular tiling as basis, as long as the tiling is stays $4$-regular. Similarly to our tilings in Lemma~\ref{lem:tilings}(i), we can use a regular tiling of Schl\"afli symbol $(2^{\delta+2},4)$ for any $\delta\in \Nats_+$ and get a tiling of diameter $\Theta{\delta}$. Finally, notice that we strictly followed the vertices of $\Gamma$ when creating the grid subdivision $G$, but this is not necessary in general. One can create a construction for any radius $c\rho$ for $c\in [\eps,1/4]$ (for any fixed $\eps>0$) by replacing each edge $uv$ of $G$ with a carefully chosen path $u=x_1,x_2\dots,x_k=v$ with $k=\ceil{c/x}$, where the distance of neighbors is exactly $c\rho$, and the distance of non-neighbors is strictly larger than $c\rho$. By picking the appropriate tiling and a subdivision, we can create a lower bound for any constant radius. 
\end{remark}

\subsection{Higher-dimensional lower bounds}\label{sec:app_lower_highdim}

The key to the above construction was to embed a grid into $\Hyp^2$. If $d\ge
3$, it is much easier to embed a $(d-1)$-dimensional grid $[n]^{d-1}$ into
$\Hyp^d$, as explained next.

Consider the
half-space model $\cH = \{(x_1,\dots,x_d)\in \Reals^d \mid x_d>0\}$ of $\Hyp^d$.
A \emph{horosphere}\index{horosphere} of $\Hyp^d$ is either a Euclidean sphere in $\cH$ that touches the boundary $\bd \cH$ (without the boundary point), or a hyperplane in $\cH$ parallel to $\bd \cH$.
It is well-known that any horosphere $\cS$ in $\Hyp^d$ is isometric to
$\Reals^{d-1}$; but this is true only with respect to the Riemannian
metric of $\cS$ inherited from $\Hyp^d$~\cite{benedetti2012lectures}. This
inherited metric of $\cS$ is denoted by $\dist_\cS$. We claim that for any
pair of points $x,y\in \cS$ we have $\dist_\cS(x,y)\leq1$ if and only if
$\dist_{\Hyp^d}(x,y)\leq c$ for some constant $c$. Fix a value $t>0$ and let $\cS$ be a horosphere that is a Euclidean hyperplane
parallel to $\bd\cH$, that is, $\cS\eqdef \{(x_1,\dots,x_d)\in \cH \mid x_d=t\}$. Now $\dist_\cS$
is just a scaled Euclidean metric, i.e., for any $x,y\in \cS$, we have
$\dist_\cS(x,y)=c_t\norm{x-y}$, where $\norm{.}$ is the Euclidean norm and $c_t$ is a constant that depends only on $t$. The
hyperbolic distance between two points $x,y \in \cS$ is
\[\dist_{\Hyp^2}(x,y) =
2\sinh^{-1}\left(\frac{\norm{x-y}}{2t}\right).\]

It follows that for $x,y \in \cS$, we have
$\dist_{\Hyp^2}(x,y)= 2\sinh^{-1}\left(\frac{\dist_\cS(x,y)}{2c_tt}\right)$.
Hence, the two distance functions are mapped to each other with the monotone increasing
bijective function $f:\Reals_{\geq 0}\rightarrow \Reals_{\geq 0}\,, f(z) =
2\sinh^{-1}\left(\frac{z}{2c_tt}\right)$ that depends only on the choice of
$t$.
Therefore, any induced grid graph in $\Ints^{d-1}$ can be realized as a uniform  ball graph $\UBG_{\Hyp^d}(c'_t)$ for some constant $c'_t$, or even more generally, any unit ball graph of $\Reals^{d-1}$ is a uniform ball graph in $\Hyp^d$, where the radius $\rho$ is a constant independent of $n$.
In particular, all lower bounds discussed in \cite{frameworkpaper,frameworksArxiv,DBLP:journals/jocg/BiroBMMR18} hold in unit ball graphs. Consequently, if $d\ge 3$, then all of our algorithms have a lower bound of $2^{\Omega(n^{1-1/(d-1)})}$, i.e., the algorithms given in Section~\ref{sec:algs} have optimal exponents up to constant factors in the exponent under ETH.

\section{Conclusion}

We have established that shallow noisy uniform ball graphs in $\Hyp^2$ have treewidth $O(\log n)$, while their non-shallow counterparts have $\cP$-flattened treewidth $O(\log^2 n)$ for any greedy partition $\cP$. For higher dimensions, we have established a bound of $O(n^{1-1/(d-1)})$ on the $\cP$-flattened treewidth for greedy partitions. These bounds implied polynomial, quasi-polynomial and subexponential algorithms in the respective graph classes, where the exponents in $\Hyp^d$ match those in $\Reals^{d-1}$ with the exception of $\Hyp^2$. We have established that the quasi-polynomial algorithm given for \IS in $\NUBG_{\Hyp^2}$ is optimal up to constant factors in the exponent under ETH. We would like to call the attention to three interesting directions for future research.

\begin{itemize}
\item \textbf{Generalizing the underlying space} 
Our tools exhibit a lot of flexibility, and likely can go beyond standard hyperbolic space. As seen in the paper by Krauthgamer and Lee~\cite{KrauthgamerL06}, there are very powerful tools available in $\delta$-hyperbolic spaces; it would be interesting to see ($\cP$-flattened) treewidth bounds in this more general setting.
\item \textbf{Specializing the graph classes and algorithms} 
Our polynomial and quasi-polynomial algorithms have large constants in the exponents of their running times. For the special case of hyperbolic grid graphs (finite subgraphs of the graph of a fixed regular tiling of $\Hyp^2$), it should be possible to improve these running times. Is there a different algorithm with more elementary tools with a similar or better running time? How is the type of the tiling represented in the optimal exponent? What lower bound tools could be used here?
\item \textbf{Recognition} Our algorithms are not capable of deciding if a graph given as input is in a particular class $\NUBG_{\Hyp^d}(\rho,\nu)$. This question is probably easier to attack for hyperbolic grid graphs. Although Euclidean grid graphs are \NP-hard to recognize~\cite{DBLP:journals/tcs/SaFMF11}, hyperbolic grid graphs seem more tractable. Can hyperbolic grid graphs be recognized in polynomial time?
\end{itemize}

\paragraph*{Acknowledgments}
The author thanks Mark de Berg, Hans L. Bodlaender, and G\"unter Rote for their comments, and Hsien-Chih Chang for a discussion about this work.


\clearpage
\appendix

\section{A note on the number of (noisy) uniform ball graphs}\label{sec:app_nubg}

\begin{theorem}\label{thm:noisybig}
Let $d \ge 2$ be a fixed constant. Then the set of graphs on $n$ vertices in
\begin{enumerate}
	\item[(i)] $\UBG_{\Hyp^d}(\rho)$ has size $2^{O(n\log n)}$
	\item[(ii)] $\SNUBG_{\Hyp^d}(\rho, \nu,k)$ has size $2^{O(n\log n)}$
	\item[(iii)] $\NUBG_{\Hyp^d}(\rho, \nu)$ has size $2^{\Theta(n^2)}$
\end{enumerate}
\end{theorem}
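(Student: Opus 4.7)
My plan splits into the three parts. For parts (i) and (ii) I would use a semi-algebraic sign-pattern count (Warren's theorem / Milnor-Thom); for (iii) the upper bound is immediate and the lower bound will follow from an explicit two-cluster construction (assuming $\nu > 1$, which is the only interesting regime since $\NUBG_{\Hyp^d}(\rho,1)$ coincides with $\UBG_{\Hyp^d}(\rho)$ up to boundary ambiguity).

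For (i), I will work in the hyperboloid model of $\Hyp^d$, realizing it as the upper sheet of $\{x \in \Reals^{d+1} : -x_0^2 + x_1^2 + \cdots + x_d^2 = -1\}$. Writing $\langle p, q\rangle = -p_0q_0 + p_1q_1 + \cdots + p_dq_d$, we have $\cosh(\dist(p,q)) = -\langle p,q\rangle$, so the predicate $\dist(\eta(u),\eta(v)) \leq 2\rho$ is the single degree-$2$ polynomial inequality $-\langle\eta(u),\eta(v)\rangle \leq \cosh(2\rho)$ in the $(d+1)n$ embedding coordinates (plus the $n$ degree-$2$ constraints placing each point on the hyperboloid). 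Warren's theorem then bounds the number of sign patterns of the $\binom{n}{2}$ such polynomials by $(O(n))^{(d+1)n} = 2^{O(n\log n)}$. Every unit ball graph arises from some generic embedding: for a non-generic configuration with pairs at distance exactly $2\rho$, arbitrarily small perturbations independently move each such pair to either strict side, so the same graph is realized by a nearby generic embedding. Hence each graph corresponds to at least one sign pattern, giving $|\UBG_{\Hyp^d}(\rho)| \leq 2^{O(n\log n)}$.

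For (ii), the same setup is augmented with the polynomials for the outer threshold $2\rho\nu$; the sign-pattern bound remains $2^{O(n\log n)}$. For each sign pattern, only the \emph{noise-zone} pairs (with distance in $[2\rho,2\rho\nu]$) admit a free edge choice. In a $(\rho,k)$-shallow embedding, a ball of radius $\rho\nu$ can be covered by a constant number of balls of radius $\rho$ (the constant depends only on $\rho,\nu,d$, by the volume growth in Proposition~\ref{prop:tilesballs}), hence contains at most $O(k) = O(1)$ embedded points. So every vertex has $O(1)$ points within distance $2\rho\nu$, yielding at most $O(n)$ noise-zone pairs in total. Combining the sign-pattern count with $2^{O(n)}$ free-edge choices gives $|\SNUBG_{\Hyp^d}(\rho,\nu,k)| \leq 2^{O(n\log n)}$.

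For (iii), the upper bound $|\NUBG_{\Hyp^d}(\rho,\nu)| \leq 2^{\binom{n}{2}} = 2^{O(n^2)}$ is just the number of labeled graphs on $n$ vertices. For the lower bound, assume $\nu > 1$ and pick $c_1,c_2 \in \Hyp^d$ with $\dist(c_1,c_2) = \rho(1+\nu)$; place $n/2$ points inside $B(c_1,\epsilon)$ and $n/2$ inside $B(c_2,\epsilon)$, for $\epsilon$ small enough that every cross-cluster pair has distance strictly in $(2\rho,2\rho\nu)$ (this is possible since $\rho(1+\nu) \in (2\rho,2\rho\nu)$ when $\nu>1$). Then same-cluster pairs are forced edges while each of the $(n/2)^2$ cross-cluster pairs lies in the noise zone and can be independently chosen to be an edge or a non-edge, producing $2^{\Omega(n^2)}$ pairwise distinct graphs in $\NUBG_{\Hyp^d}(\rho,\nu)$. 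The main technical obstacle is the semi-algebraic setup and generic-perturbation argument in~(i); the shallow-volume argument in (ii) is a routine application of the $\Hyp^d$ volume estimates from Proposition~\ref{prop:tilesballs}.
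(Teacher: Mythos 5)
Your parts (i) and (ii) are correct and rest on the same underlying tool as the paper, but the routes differ in detail. For (i) the paper simply observes that, in the Poincar\'e ball model, hyperbolic balls are Euclidean balls, so the class embeds into Euclidean ball graphs, and it cites the known $2^{O(n\log n)}$ bound for those (itself a Warren-type count); you instead apply Warren directly in the hyperboloid model, which is self-contained and equally valid. One step of yours is shakier than you suggest: perturbing the point positions does not act \emph{independently} on the pairs at distance exactly $2\rho$, so the genericity argument as stated is not quite right. It is also unnecessary: the graph is a function of the sign condition in $\{-,0,+\}^{\binom{n}{2}}$ of your degree-$2$ polynomials, and the number of realizable sign conditions (zeros included) obeys the same $2^{O(n\log n)}$ bound, so you can drop the perturbation altogether. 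For (ii) the paper's argument is more elementary than yours: shallowness forces maximum degree $O(1)$, so an edge-list encoding with $O(n)$ symbols over an alphabet of size $n$ already gives $2^{O(n\log n)}$; your sign-pattern count plus the $O(n)$ noise-zone pairs is correct but heavier machinery for the same conclusion.

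The genuine gap is in (iii). The theorem is stated for the paper's standing convention $\nu\ge 1$, and your reason for discarding $\nu=1$ is false: $\NUBG_{\Hyp^d}(\rho,1)$ does \emph{not} essentially coincide with $\UBG_{\Hyp^d}(\rho)$ --- the freedom at distance exactly $2\rho$, combined with the fact that $\eta$ is a multiset embedding, is precisely what makes the noisy class exponentially larger, and indeed the paper's \IS lower bound (Theorem~\ref{thm:lowerbound}) is carried out at $\nu=1$. The paper's construction for (iii) places the two cliques of an arbitrary co-bipartite graph at two points at distance exactly $2\rho$: within-cluster pairs are forced edges, every cross pair sits on the boundary and may be declared an edge or a non-edge at will, so all co-bipartite graphs lie in $\NUBG_{\Hyp^d}(\rho,1)\subseteq\NUBG_{\Hyp^d}(\rho,\nu)$, giving $2^{\Omega(n^2)}$ for every $\nu\ge1$. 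Your two-cluster construction at distance $\rho(1+\nu)$ is fine for $\nu>1$, so the repair is immediate, but as written your proof does not cover $\nu=1$ and the stated justification for omitting it is incorrect. Finally, note that your cross-pair choices give $2^{\Omega(n^2)}$ \emph{labeled} graphs; as in the paper, one should divide by $n!=2^{O(n\log n)}$ to conclude the same bound for unlabeled graphs.
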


\begin{proof}
(i) Clearly any graph defined by a set of balls in $\Hyp^d$ is also a ball graph in $\Reals^d$ due to the Poincar\'e ball model, therefore an upper bound on the number of ball graphs in $\Reals^d$ suffices. This can be proved using Warren's theorem~\cite{MCDIARMID2014413}.

(ii) We denote the closed ball around $p\in \cM$ of radius $r$ by $B_\cM(p,r)$. Note that $\SNUBG_{\Hyp^d}(\rho,\nu,k)$ has maximum degree $O(1)$, since the points representing the neighborhood of a vertex $v$ can be covered by $B(\eta(v),2\rho\nu)$, which in turn can be covered by $O(1)$ balls of radius $\rho$, each of which can contain no more than $k$ points. Therefore an edge list representation of a graph on $n$ vertices has $O(n)$ symbols from an alphabet of size $n$; this can distinguish no more than $2^{O(n\log n)}$ graphs.

(iii) The upper bound follows from the fact that there are $2^{\binom{n}{2}}$ labeled graphs, which is an upper bound on the number of graphs. For the lower bound, we can realize any co-bipartite graph (a graph that is the complement of a bipartite graph) in $\NUBG_\cM(\rho,1)$, where $\cM$ is an arbitrary metric space that contains a pair of points at distance $2\rho$ (that is, one can choose $\cM=\Hyp^d$).
Consider co-bipartite graphs that have an even split into clique $A$ on $\ceil{n/2}$ and clique $B$ on $\floor{n/2}$ vertices. For a given vertex $a\in A$, there are $2^{\floor{n/2}}$ neighborhoods to choose from; we can choose for each vertex $a\in A$ independently, which gives $(2^{\floor{n/2}})^{\ceil{n/2}}=2^{\Omega(n^2)}$ possible choices, each resulting in different labeled graphs. Each unlabeled graph has been counted at most $n!=2^{\Theta(n\log n)}$ times, so there are $2^{\Omega(n^2)}$ distinct unlabeled co-bipartite graphs.
\end{proof}

\clearpage

\bibliography{hyperbolic}

\end{document}